\documentclass[11pt,letterpaper]{article}
\usepackage[margin=1in]{geometry}

\pagestyle{plain}
\usepackage[utf8]{inputenc}
\usepackage[colorlinks=true,allcolors=blue]{hyperref}

\usepackage{amsmath}
\usepackage{amsfonts}
\usepackage{amssymb}

\usepackage{amsthm}

\usepackage{braket}
\usepackage{comment}
\usepackage[colorlinks=true,allcolors=blue]{hyperref}
\usepackage{breakcites}
\usepackage{cleveref}
\usepackage[sanserif,basic]{complexity}
\usepackage{graphicx}
\usepackage[bb=dsserif]{mathalpha}
\usepackage{complexity}
\usepackage{scalerel}
\usepackage[shortlabels]{enumitem}

\usepackage[colorinlistoftodos]{todonotes}

\newtheorem{theorem}{Theorem}[section]

\newtheorem{lemma}[theorem]{Lemma}
\newtheorem{claim}[theorem]{Claim}
\newtheorem{proposition}[theorem]{Proposition}
\newtheorem{corollary}[theorem]{Corollary}
\newtheorem{definition}[theorem]{Definition}

\newtheorem{remark}[theorem]{Remark}

\newtheorem{scheme}[theorem]{Scheme}

\newcommand{\bits}{{\{0,1\}}}
\newcommand{\negl}{{\mathsf{negl}}}
\newcommand{\YES}{{\scaleobj{0.9}{\mathsf{YES}}}}
\newcommand{\NO}{{\scaleobj{0.9}{\mathsf{NO}}}}
\newcommand{\rank}{{\mathsf{rank}}}
\newcommand{\tr}{{\mathsf{tr}}}

\newclass{\clonableQMA}{clonableQMA}

\title{A Computational Separation Between\\ Quantum No-cloning and No-telegraphing}
\author{Barak Nehoran\\Princeton University \and Mark Zhandry\\NTT Research \& Princeton University}
\date{}

\begin{document}
\maketitle


\begin{abstract}
Two of the fundamental no-go theorems of quantum information are the no-cloning theorem (that it is impossible to make copies of general quantum states) and the no-teleportation theorem (the prohibition on telegraphing, or sending quantum states over classical channels without pre-shared entanglement). They are known to be equivalent, in the sense that a collection of quantum states is telegraphable if and only if it is clonable.

Our main result suggests that this is not the case when computational efficiency is considered. We give a collection of quantum states and quantum oracles relative to which these states are efficiently clonable but \emph{not} efficiently telegraphable. Given that the opposite scenario is impossible (states that can be telegraphed can always trivially be cloned), this gives the most complete quantum oracle separation possible between these two important no-go properties.


We additionally study the complexity class $\clonableQMA$, a subset of $\QMA$ whose witnesses are efficiently clonable. As a consequence of our main result, we give a quantum oracle separation between $\clonableQMA$ and 
the class $\QCMA$, whose witnesses are restricted to classical strings. 
We also propose a candidate oracle-free promise problem separating these classes. We finally demonstrate an application of clonable-but-not-telegraphable states to cryptography, by showing how such states can be used to protect against key exfiltration.
\end{abstract}


\section{Introduction}

One of the defining features of quantum information is the no-cloning theorem: that it is impossible to copy a general quantum state~\cite{Park70,WoottersZurek82,Dieks82}. Another fundamental no-go theorem is the no-teleportation theorem: that it is impossible (without any pre-shared entanglement) to send quantum information over a classical channel~\cite{Werner_1998}. Because of the potential confusion with the very possible task of quantum teleportation~\cite{PhysRevLett.70.1895}, we prefer to use the term \emph{telegraphing} to refer to this latter task.

These two no-go theorems are well-understood to be \emph{equivalent}, in the following sense: given a set of quantum states $S=\{|\psi_1\rangle,|\psi_2\rangle,\cdots\}$, then states in $S$ can be perfectly cloned if and only if they can be perfectly telegraphed, both clonability and telegraphability being equivalent to the states in $S$ being orthogonal 
(see Appendix~\ref{appendix:no-cloning-equals-no-telegraphing} for a formal proof of this fact).
Here, $S$ being cloned means there is a process mapping $|\psi_i\rangle$ to two copies of $|\psi_i\rangle$. $S$ being telegraphed means there is a deconstruction process which maps $|\psi_i\rangle$ into classical information $c_i$, and a reconstruction process that maps $c_i$ back to $|\psi_i\rangle$. 




\paragraph{Introducing Computational Constraints.} The above discussion is information-theoretic. 
Here, we ask: \emph{what happens when computational constraints are considered}? We consider a set $S$ to be computationally clonable if there is a \emph{polynomial-time} quantum algorithm that solves the cloning task on $S$. Likewise, we consider $S$ to be computationally telegraphable if there is both a polynomial-time deconstruction and corresponding polynomial-time reconstruction procedure for~$S$. 


We observe the trivial relationship that computational telegraphing implies computational cloning: by running reconstruction twice on the deconstructed classical information $c_i$, one obtains two copies of $|\psi_i\rangle$, therefore cloning. This process is only twice as slow as the original telegraphing procedure, and is therefore efficient if telegraphing is efficient.
However, the converse is a priori unclear: if a state can be cloned efficiently, it is not clear if there is an efficient process to deconstruct the state into a classical $c_i$ and also an efficient process to turn $c_i$ back into the quantum state. 


\subsection{Our Results}

In this work, we provide evidence that no-cloning and no-telegraphing are \emph{not} equivalent properties in the computationally bounded setting. Our main theorem is:
\begin{theorem}[Informal presentation of Theorem~\ref{thm:clonable_untelegraphable}]\label{thm:maininf} There exists a quantum oracle $\mathcal{O}$ and a set of quantum states $S$ such that $S$ can be efficiently cloned relative to $\mathcal{O}$, but there is no efficient telegraphing procedure relative to $\mathcal{O}$. Even more, there is no telegraphing procedure where the reconstruction is efficient, even if we allow deconstruction to be unbounded.
\end{theorem}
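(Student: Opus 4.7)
The construction I would use is to bake an efficient cloning operation directly into the oracle, while keeping the cloned states Haar-random so that they are essentially unspecifiable by any short classical string. Fix a security parameter $n$, set $N = 2^n$, and pick an ambient dimension $d$ that is super-polynomial in $n$ (say $d = 2^{n \log n}$). Sample orthonormal states $|\psi_1\rangle,\dots,|\psi_N\rangle$ Haar-uniformly in $\mathbb{C}^d$, and let $S = \{|\psi_i\rangle\}_{i\in[N]}$. Take $\mathcal{O}$ to be any unitary on $\mathbb{C}^d \otimes \mathbb{C}^d$ that satisfies $\mathcal{O}(|\psi_i\rangle \otimes |0\rangle) = |\psi_i\rangle \otimes |\psi_i\rangle$ for every $i$, extended canonically (for concreteness, Haar-randomly) on the orthogonal complement of the ``cloning subspace'' $\mathrm{span}\{|\psi_i\rangle\otimes|0\rangle\}_i$. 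Efficient cloning is immediate: appending a $|0\rangle$ ancilla and applying $\mathcal{O}$ once produces $|\psi_i\rangle \otimes |\psi_i\rangle$, uniformly over $i$, in a single query.

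For the lower bound, suppose toward a contradiction that an unbounded deconstructor $\mathsf{D}$ produces a classical advice string $\tau_i = \mathsf{D}^{\mathcal{O}}(|\psi_i\rangle)$ of polynomial length $p(n)$, together with a $q(n)$-query efficient reconstructor $R$ such that $R^{\mathcal{O}}(\tau_i) \approx |\psi_i\rangle$ for most $i$. Since $R^{\mathcal{O}}(\tau)$ depends only on $\tau$ and the oracle while the $|\psi_i\rangle$ are orthonormal, the assignment $i \mapsto \tau_i$ must be essentially injective. This lets me absorb the unbounded $\mathsf{D}$ into worst-case non-uniform classical advice: it suffices to rule out any $q$-query quantum algorithm that, given an index $i$ together with an arbitrary string in $\{0,1\}^{p(n)}$, outputs $|\psi_i\rangle$ with non-negligible probability over a random oracle.

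The heart of the argument is a BBBV hybrid showing that $\mathcal{O}$ behaves, as far as finding $|\psi_i\rangle$ is concerned, like a standard Grover marking oracle for the single Haar-random direction $|\psi_i\rangle$ inside $\mathbb{C}^d$. Concretely, $\mathcal{O}$ differs from the modified oracle obtained by replacing $|\psi_i\rangle$ with an independent Haar-random direction only on a one-dimensional subspace in the first register, so the standard hybrid bound caps the amplitude that any $q$-query procedure can accumulate on $|\psi_i\rangle$ at $O(q/\sqrt{d})$ in expectation over the Haar measure. A union bound over the at most $2^{p(n)}$ possible advice strings and $N = 2^n$ indices then bounds the overall success probability by $O(N \cdot 2^{p(n)} \cdot q(n) / \sqrt{d})$, which is negligible once $d$ dominates all fixed polynomials, contradicting the assumed telegraphing.

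The main technical hurdle will be making this hybrid fully rigorous in the presence of (i) the Haar-random extension of $\mathcal{O}$ off the cloning subspace and (ii) advice $\tau_i$ that may depend on the oracle as a whole. For (i) I would invoke a lazy-sampling argument: a Haar-random unitary on the orthogonal complement can be answered on the fly without committing to the specific $|\psi_i\rangle$'s, so the only information about $|\psi_i\rangle$ leaking to the reconstructor comes through queries whose first register has non-negligible overlap with $|\psi_i\rangle$. For (ii), the reduction to non-uniform classical advice isolates the hard core of the problem, at the price of requiring $d \gg 2^{p(n)}$ for every polynomial $p$ — which is exactly why the construction demands a super-polynomial ambient dimension.
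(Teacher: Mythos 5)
Your high-level architecture mirrors the paper's: add a black-box cloner for a family of states, reduce telegraphing to reconstruction-from-short-classical-advice, and kill the reconstructor with a BBBV hybrid that exploits the fact that the cloning oracle only deviates from a useless oracle on the span of the target state. But there is a genuine quantitative gap in how you defeat the advice, and as written your parameters are mutually inconsistent. Your union bound gives $O(N\cdot 2^{p(n)}\cdot q(n)/\sqrt{d})$, and you claim this is negligible "once $d$ dominates all fixed polynomials." It is not: with $d=2^{n\log n}$ and advice length $p(n)=n^2$, the bound is $2^{n^2+n-(n\log n)/2}\cdot q(n)$, which diverges. Your own closing sentence identifies the real requirement, $d\gg 2^{p(n)}$ for \emph{every} polynomial $p$, but that forces $\log d$ to be super-polynomial in $n$, i.e., the states live on super-polynomially many qubits --- and then no polynomial-time algorithm can even receive one as input, so "efficiently clonable" becomes vacuous. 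You cannot fix this by choosing $d$ after seeing $p$, because the oracle and $S$ are quantified before the adversary. The paper resolves exactly this tension differently: reconstruction hardness (without the cloner) is reduced to finding $k$ disjoint collisions in a random function, where the amplification parameter $k=2n(\ell+1)$ is chosen \emph{inside the analysis} as a function of the advice length $\ell$, and the Hamoudi--Magniez bound $2^{-\Omega(k)}$ then beats the $2^{-\ell}$ advice-guessing loss for any polynomial $\ell$ while the construction itself stays fixed. Your Haar-random construction has no analogous tunable amplification (and you cannot amplify by running the reconstructor repeatedly to get many copies, since the cloning oracle makes extra copies free). A rescue exists --- for a fixed candidate output the fidelity with a Haar-random state in dimension $d$ has tail $e^{-\Omega(\delta d)}$ rather than mean $1/d$, which does survive a union bound over $2^{p(n)}$ strings --- but that is a concentration-of-measure argument you never invoke, and threading it through the adaptive, per-query-magnitude induction (which you still need, exactly as in the paper's Stage 1/Stage 2 structure) is the actual work.

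Two secondary points. First, extending $\mathcal{O}$ Haar-randomly off the cloning subspace is an unforced complication: when you swap $|\psi_i\rangle$ for an independent direction, the orthogonal complement changes too, so the two oracles in your hybrid do not differ "only on a one-dimensional subspace," and you are left needing lazy sampling of Haar-random unitaries. The paper's cloning oracle is an involution ($|\psi_i\rangle|\bot\rangle\leftrightarrow|\psi_i\rangle|\psi_i\rangle$) acting as the \emph{identity} elsewhere, which confines the difference between the two hybrids to an explicit low-dimensional subspace and makes the query-magnitude bookkeeping clean. Second, your construction does avoid the paper's hardest step (Stage 3, the impostor-oracle simulation showing that the ability to clone the \emph{other}, correlated preimage states does not help), because independent Haar-random states decouple --- that is a real simplification your choice buys --- but at the cost of losing efficient samplability of the states, which the paper's construction retains and uses for its complexity-theoretic and cryptographic corollaries.
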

In other words, while no-cloning implies no-telegraphing, the converse is not true, at least relative to a quantum oracle.

Counter-intuitively, we prove this theorem by starting from a certain set of orthogonal but computationally \emph{unclonable} states (related to those used by~\cite{rewinding}). By the trivial relationship that
telegraphing implies clonability,
we observe that these states cannot be efficiently telegraphed either. 
But of course, while these states can be cloned inefficiently (as they are all orthogonal), we need them to be clonable efficiently. 
We therefore augment the setup with a quantum oracle that performs this cloning in a single query. 
The main technical difficulty is that we need to show that despite adding this cloning oracle, telegraphing remains inefficient. We do this through a multistep process, gradually converting any supposed telegraphing scheme that uses this oracle into a telegraphing scheme that does not, reaching a contradiction.

An interesting consequence of our proof is that the no-telegraphing property holds, \emph{even if the sender is allowed to be inefficient}. The only party that needs to be efficient to achieve a separation is the receiver.

We additionally bring to light certain applications of clonable-but-untelegraphable states to both complexity theory and cryptography.

\paragraph{Complexity Theory.} 
An important open problem in quantum complexity theory is the question of whether quantum states are more powerful than classical strings as proofs (or witnesses) for efficient quantum computation. This is the question of whether the class $\QMA$ of problems which have efficiently verifiable quantum proofs is contained in the class $\QCMA$ of problems where a classical proof suffices~\cite{aharonov2002quantum}. 
A number of recent works~\cite{aaronson2007quantum, fefferman2018quantum, natarajan2022classical, li2023classical}
have endeavored to give increasingly strong oracle separations between the two classes. We take a slightly different approach, inspired by clonable-but-untelegraphable states.
We define a class $\clonableQMA$ of problems which have quantum proofs that are \emph{efficiently clonable}. It is easy to see that $\QCMA \subseteq \clonableQMA \subseteq \QMA$, and we argue that $\clonableQMA$ is not likely equal to either of the other two. Specifically, we use the clonable-but-untelegraphable states of Theorem~\ref{thm:maininf} to show a quantum oracle separation with $\QCMA$:

\begin{theorem}[Informal presentation of Theorem~\ref{thm:clonableqma-qcma-oracle-separation}]\label{thm:complexityinf}
There exists a unitary quantum oracle $\mathcal{O}$ such that $\clonableQMA^\mathcal{O}$ is not contained in $\QCMA^\mathcal{O}$.
\end{theorem}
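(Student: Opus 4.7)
The plan is to reduce the complexity separation to Theorem~\ref{thm:maininf}. Take the oracle $\mathcal{O}$ and clonable-but-untelegraphable family $S = \{|\psi_x\rangle\}_x$ from that theorem, indexed by classical strings $x$. Arrange for $\mathcal{O}$ to include (in addition to its cloning oracle) a verification unitary $V$ that, for each index $x$ of a valid state, acts as the rank-one projector $\ket{\psi_x}\bra{\psi_x}$, and for all other $x$ rejects everything. Define the promise problem $L$: the $\YES$ instances are the valid indices $x$, and the $\NO$ instances are those $x$ for which no state causes $V$ to accept.

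Placing $L$ in $\clonableQMA^\mathcal{O}$ is immediate. The verifier, on input $x$ and quantum witness $|\phi\rangle$, applies $V$ to $|\phi\rangle$ and accepts iff $V$ accepts. On a $\YES$ instance the witness $|\psi_x\rangle$ is accepted with certainty and is efficiently clonable via the cloning oracle; on a $\NO$ instance no witness is accepted, so soundness is perfect.

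The heart of the argument is showing $L \notin \QCMA^\mathcal{O}$. Suppose toward contradiction that some efficient quantum verifier $M^\mathcal{O}$ decides $L$ with a classical proof, so that for every $\YES$ instance $x$ there is a classical witness $w_x$ with $\Pr[M^\mathcal{O}(x, w_x) = 1] \geq 2/3$. From $M$ I would build a telegraphing scheme for $S$ in which the sender is unbounded and the receiver is efficient, contradicting Theorem~\ref{thm:maininf}. The (unbounded) deconstruction takes $|\psi_x\rangle$ and outputs the lexicographically least $w_x$ on which $M^\mathcal{O}(x, \cdot)$ accepts with high probability. The (efficient) reconstruction takes $(x, w_x)$ and must produce a state close to $|\psi_x\rangle$. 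The crucial observation is that since $V(x, \cdot)$ is a rank-one projector onto $|\psi_x\rangle$, the acceptance probability of $M^\mathcal{O}(x, w_x)$ is controlled by the weight that its queries to $V$ place on $|\psi_x\rangle$; high acceptance forces $M$ to produce a state with non-negligible overlap with $|\psi_x\rangle$ in its query register at some point during its execution. A standard query-extraction argument (pick a uniformly random query to $V$, halt $M$, and measure the query register) then yields a state with inverse-polynomial fidelity to $|\psi_x\rangle$, which can be boosted to arbitrarily high fidelity by rerunning the extractor and postselecting on $V$ accepting.

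The main obstacle is making the query-extraction step robust against the rest of $\mathcal{O}$, particularly the cloning oracle, which acts nontrivially on $|\psi_x\rangle$ and could in principle help $M$ fake high acceptance without ever internally constructing $|\psi_x\rangle$. The key sanity check is that $M$'s classical witness alone gives it no copy of $|\psi_x\rangle$ to feed to the cloner, so by a hybrid argument any contribution of the cloning oracle to $M$'s acceptance must go through queries on states $M$ has already prepared close to $|\psi_x\rangle$ using $V$. Once this is formalized, the efficient-receiver telegraphing scheme contradicts Theorem~\ref{thm:maininf}, completing the separation.
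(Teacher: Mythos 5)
Your overall strategy---use the clonable-but-untelegraphable family as witnesses, observe that a $\QCMA$ witness is just polynomial-length classical advice, and argue that no efficient receiver with classical advice can place noticeable query mass on $\ket{\psi_x}$, extracting a reconstruction of $\ket{\psi_x}$ if it could---is exactly the mechanism the paper uses. Indeed, your ``halt at a random query and measure the query register'' step is literally how the paper bounds the query magnitudes inside Proposition~\ref{prop:adding-the-z-cloning-oracle} (the truncated algorithms $R'_t$), and the paper's Proposition~\ref{prop:rohc-notin-qcma} is precisely the statement that a classical witness cannot help a verifier notice an oracle that acts nontrivially only on $\ket{\psi_z}$.

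There are, however, concrete gaps in how you set this up. First, you introduce a separate verification unitary $V$ acting as a projector onto $\ket{\psi_x}$; but the untelegraphability/unreconstructibility theorem you invoke is proved relative to a specific oracle collection (random oracle plus cloning oracle) that does \emph{not} include $V$, so your extractor---which must run $M$ and hence query $V$---is not an adversary covered by that theorem. You would have to reprove the reconstruction lower bound in the presence of $V$ (another Stage-2-style argument, with the same circularity hazard: you must truncate the hybrid run in which $V(x,\cdot)$ has already been trivialized, not the real run). The paper avoids this entirely by using the cloning oracle itself as the verifier: the hard problem ($\ROHC$) is to decide whether $\mathcal{C}$ is a $z$-cloning oracle or the identity, the $\clonableQMA$ verifier queries $\mathcal{C}$ on $\ket{\phi}\ket{\bot}$ and checks whether the second register left $\ket{\bot}$, and hardness is then exactly Proposition~\ref{prop:adding-the-z-cloning-oracle} with nothing new to prove. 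Second, your treatment of the full cloning oracle (``by a hybrid argument any contribution of the cloning oracle must go through queries on states $M$ has already prepared close to $\ket{\psi_x}$'') asserts rather than proves the hardest step: the cloner also clones the \emph{other} states $\ket{\psi_{z'}}$, and ruling out that this helps is the paper's Stage~3 impostor-oracle argument. The paper's complexity separation sidesteps this by placing only the $z$-cloning oracle (not the full cloner) in the oracle for this theorem; you should either do the same or explicitly route your reduction so that the receiver you build queries exactly the oracles covered by Proposition~\ref{prop:Stage3}. Finally, the theorem asserts a \emph{fixed} unitary oracle, and your argument only yields a distributional statement; the paper closes this with a diagonalization over the countably many $\QCMA$ machines, a step you omit.
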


We also give a candidate \emph{oracle-free} promise problem separating these classes, and we show that any such problem would immediately yield clonable-but-untelegraphable quantum states.
Finally, we argue that it is unlikely that $\QMA$ is contained in $\clonableQMA$, as it would mean that every $\QMA$-complete problem would have efficiently clonable witnesses and act as a barrier against the existence of public-key quantum money.

\paragraph{Cryptography.} 
While no-cloning has seen significant attention in cryptography (e.g.~\cite{Aaronson11,subspaces,ssl,ColLiuLiuZha21}), no-telegraphing has so far received little-to-no attention. We give a proof-of-concept application of clonable-but-untelegraphable states to protecting against key exfiltration. See Section~\ref{par:crypto-applications} below for a more expansive discussion of the result. This motivates the use of no-telegraphing as an useful cryptographic tool.

\subsection{Motivation}

The importance of the interplay between quantum information and computational complexity is becoming increasingly clear. For example, computational complexity played a crucial role in Harlow and Hayden's resolution to the black hole Firewall Paradox~\cite{HarHay13}.

This interplay is also fundamentally important for many cryptographic applications. For example, despite certain information-theoretically secure quantum protocols~\cite{BenBra84}, most cryptographic tasks still require computational constraints even when using quantum information~\cite{Mayers96,LoChau98}. Nevertheless, combining quantum information with computational constraints opens up numerous possibilities, from minimizing computational assumptions~\cite{GLSV21,BCDM21} to classically-impossible applications~\cite{Aaronson11}.

The previous examples show that scenarios with quantum information can be fundamentally altered by the presence of computational considerations. 
It is therefore important to develop a broad understanding of quantum information in the computationally bounded setting. This includes the famous no-go theorems of quantum information. Numerous prior works have studied no-cloning in the computational setting (see references in Section~\ref{sec:related}). However, the computational difficulty of telegraphing has, to the best of our knowledge, not been previously studied. As our work shows, the equivalence of two of the most important quantum no-go theorems no longer holds in the computationally bounded setting, giving a very different picture and allowing for new possibilities that do not exist in the information-theoretic setting.

\paragraph{Cryptographic Applications.} \label{par:crypto-applications}
Besides addressing fundamental questions, we also explore potential cryptographic applications of our separation.

Concretely, consider the following key exfiltration scenario: a server contains a cryptographic key, say, for decrypting messages. An attacker remotely compromises the server, and then attempts to exfiltrate the key, sending it from the compromised server to the attacker's computer.

A classical approach to mitigate this problem is \emph{big key cryptography}~\cite{C:BelKanRog16,CCS:BelDai17,C:MorWic20}, where the secret decryption key is made inconveniently large. This may make it impossible for the attacker to exfiltrate the key (perhaps there are bandwidth constraints on outgoing messages from the server) or at least makes it easier to detect the exfiltration
(the large communication of the key may be easily detected). Unfortunately, such large keys are also inconvenient for the honest server, as now the server needs significant storage for each key (perhaps the server is storing keys for many different users). Moreover, decrypting using the key may be problematic, since the server will have to compute on a large key, which at least requires reading it from storage. If the server is decrypting many messages simultaneously using parallelism, then each process would presumably need to separately load the entire key from memory.

A quantum proposal would be to have decryption keys be quantum states. It is still reasonable to consider such a setting where all communication is classical: after all, the messages being encrypted and decrypted may just be classical. The server could therefore force all outgoing communication to be classical by measuring it. This would prevent the remote adversary from exfiltrating the key, by the non-telegraphability of the key. 

Since telegraphing trivially implies cloning, we note that any classical program which has been quantum copy protected~\cite{Aaronson11} will be immune from classical exfiltration. Copy protection for decryption keys was first considered by~\cite{EPRINT:GeoZha20}, and was constructed from indistinguishability obfuscation by~\cite{ColLiuLiuZha21}, along with copy protection for pseudorandom functions and signatures.

However, using copy protection comes with its own limitations. Indeed, suppose the server is decrypting a large volume of incoming communication under a single decryption key. Classically, the server could divide the communication across several processors, with each decrypting in parallel. Unfortunately, this requires giving each processor a copy of the key. While trivial classically, the whole point of copy protection is to prevent copying. In fact,~\cite{oneshot} consider exactly the task of preventing the use of parallelism via copy protection. The server could simply store numerous copies of each copy-protected key, but it would have to store these keys forever, even when the server is sitting idle or processing other tasks. This could be a major burden on the server. It also requires security to hold given multiple copies of the program, a non-trivial extension to single-copy security~\cite{TCC:LLQZ22}.

Instead, we imagine a protocol where the quantum keys \emph{are} copy-able, but remain impossible to telegraph. This would protect against exfiltration, while allowing the server to only store a single copy of the key for long-term use. Then, if the incoming communication load ever becomes large, it can copy the key and spread the copies amongst several quantum processors that process the communication in parallel. After the load subsides and processors would return to being idle, the copies of the key can simply be discarded.

Assuming states that can be cloned but not telegraphed, we show how to realize an encryption scheme with the above features:
\begin{theorem}[Informal presentation of Theorem~\ref{thm:non-exfiltration-sufficient-2}]\label{thm:exfilinf} Assume the existence of clonable-but-untelegraphable states which can be efficiently sampled. Additionally assume the existence of \emph{extractable witness encryption for $\QMA$}. Then there exists public key encryption with quantum secret keys that can be cloned but not exfiltrated.
\end{theorem}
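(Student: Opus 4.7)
The plan is to build public-key encryption as follows. Key generation samples a clonable-but-untelegraphable state $|\psi\rangle$ together with a classical public key $pk$ that describes a QMA verification circuit $V_{pk}$ whose only accepting witnesses are quantum states negligibly close to $|\psi\rangle$. The quantum secret key is $|\psi\rangle$; it can be freely cloned by hypothesis. Encryption of a message $m$ under $pk$ produces an extractable-witness-encryption ciphertext for the QMA statement ``$V_{pk}$ has an accepting witness,'' binding $m$ to that statement. Decryption uses a copy of $|\psi\rangle$ as the witness fed into the WE-QMA decryption algorithm. Correctness and cloneability are immediate: $|\psi\rangle$ is an accepting witness so WE correctness recovers $m$, and the server can hand cloned copies of $|\psi\rangle$ to independent processors to parallelize decryption.

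The core of the argument is exfiltration-resistance. Suppose for contradiction there is an efficient adversary that, holding polynomially many copies of $|\psi\rangle$ together with $pk$, outputs a classical string $\tau$ from which some efficient algorithm $A$, given only $\tau$ and $pk$, later breaks CPA security of the scheme --- this is exactly what it means for $\tau$ to ``be'' the exfiltrated key. By extractability of the WE-QMA ciphertext applied to $A$, one can efficiently extract an accepting quantum witness for $V_{pk}$, i.e., a state $|\psi'\rangle$ negligibly close to $|\psi\rangle$. Chaining these yields a telegraphing procedure $|\psi\rangle \mapsto \tau \mapsto |\psi'\rangle \approx |\psi\rangle$ whose reconstruction step (the WE extractor) is efficient, while the deconstruction step is allowed to be arbitrary. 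This contradicts the strengthened form of Theorem~\ref{thm:maininf}, which rules out telegraphing even when only the reconstruction is required to be efficient.

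The main obstacle is engineering $V_{pk}$ so that its accepting-witness set is effectively the single state $|\psi\rangle$, rather than a larger subspace --- otherwise the extracted witness $|\psi'\rangle$ might be some unrelated valid witness and would not count as a reconstruction of the original key in the sense of the telegraphing definition. For the family underlying Theorem~\ref{thm:maininf} this is natural: the states are orthogonal and recognized by the associated oracle, so $V_{pk}$ can be taken to be essentially the rank-one projector onto $|\psi\rangle$, implemented via the recognition oracle or via a classical label uniquely identifying the state. A secondary subtlety, and the reason we need the theorem in its ``unbounded deconstruction'' form, is that the adversary's view also includes $pk$ and possibly adaptive encryption queries. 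We handle this by folding the sampling of $pk$ together with the simulation of an encryption oracle into the (unbounded) deconstructor, so that the reduction produces the clean state-to-string-to-state object demanded by the telegraphing definition. With this framing in place, the remaining steps --- CPA-to-extractor reduction, verifying that the extractor respects efficiency, and composing with the untelegraphability lower bound --- are routine.
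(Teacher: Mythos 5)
Your reduction takes a genuinely different route from the paper's, and the difference matters. The paper's formal version of this theorem (Theorem~\ref{thm:non-exfiltration-sufficient-2}) does \emph{not} argue that the witness-encryption extractor reconstructs the original state $|\psi\rangle$ and thereby contradicts untelegraphability. Instead, it only uses the fact that the extractor outputs \emph{some} state accepted by the $\clonableQMA$ verifier $V$, and defines a classical-witness verifier $V'(x,u) := V(x,E(x,u))$; the exfiltrated string $u$ then serves as a $\QCMA$ witness for $\YES$ instances while soundness of $V$ on $\NO$ instances is inherited, contradicting the assumed average-case $\QCMA$-hardness of the decision problem $(\mathcal{L}, \tfrac12\mathcal{D}_{\YES}+\tfrac12\mathcal{D}_{\NO})$. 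This is why the paper's hypothesis is phrased as a complexity assumption (an average-case $\clonableQMA$-vs-$\QCMA$ separation with samplable instance--witness pairs) rather than directly as untelegraphability, and why the paper never needs the accepting-witness set of an instance to be a single state.

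The gap in your version is exactly the "main obstacle" you flag and then dispatch too quickly. For the extractor's output to count as a reconstruction of $|\psi\rangle$, you need a standard-model, efficiently computable verifier $V_{pk}$ whose accepting set is (essentially) the rank-one projector onto $|\psi\rangle$ --- and this verifier must be an explicit circuit, because, as the paper notes in the introduction, witness encryption for $\QMA$ makes non-black-box use of the language and therefore cannot be applied to oracle-defined relations. So you cannot implement $V_{pk}$ "via the recognition oracle," and a classical label uniquely \emph{identifying} the state does not by itself yield an efficient projector onto it. The assumption "clonable-but-untelegraphable states that are efficiently samplable" does not hand you such a unique-witness verifier; you are silently strengthening the hypothesis. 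Separately, note that $\eta$-untelegraphability rules out non-negligible reconstruction fidelity, so even granting the unique-witness verifier you would need to argue that "accepted with probability at least $c$" forces non-negligible overlap with $|\psi\rangle$ --- true for a near-projective verifier, but worth stating. If you do adopt the stronger hypothesis, your argument is otherwise sound and in fact simpler than the paper's (it needs no $\NO$ instances and no appeal to $\QCMA$-hardness of a decision problem); but as a proof of the theorem from the paper's stated assumptions it does not go through, whereas the paper's detour through $V' = V \circ E$ avoids the uniqueness issue entirely.
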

For the necessary witness encryption, we could use~\cite{ITCS:BarMal22} as a candidate. Note that the states we construct relative to an oracle in Theorem~\ref{thm:maininf} are efficiently sampleable. However, witness encryption requires non-black box use of the $\QMA$ language, meaning it cannot be applied to query-aided languages like that implied by Theorem~\ref{thm:maininf}. However, any standard-model realization of clonable-but-non-exfiltrateable states would suffice, and our Theorem~\ref{thm:maininf} gives some evidence that such states exist.

This is just one potential application of no-telegraphing that does not follow immediately from no-cloning. Our hope is that this work will motivate further study of no-telegraphing in cryptography.

\paragraph{On Oracles.} Our separation between no-cloning and no-telegraphing requires oracles. Given the current state of complexity theory and the fact that these no-go properties are equivalent for computationally unbounded adversaries, we cannot hope to achieve unconditional separations between them in the standard model. As such, either computational assumptions or a relativized separation (that is, oracles) are required.

For cryptographic applications such as Theorem~\ref{thm:exfilinf}, certainly a standard-model construction from computational assumptions would be needed. On the other hand, by using oracles, we are able to give an unconditional separation, independent of what assumptions may or may not hold. While such a relativized separation does not necessarily rule out a standard-model equivalence, it shows a fundamental barrier to such an equivalence. Indeed, an immediate corollary of Theorem~\ref{thm:maininf} is:
\begin{corollary}There is no black box reduction showing the equivalence of cloning and telegraphing in the computational setting.
\end{corollary}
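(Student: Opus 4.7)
The plan is to derive the corollary as a direct relativization argument from Theorem~\ref{thm:maininf}. A black box reduction showing that cloning implies telegraphing would, by definition, take as input an efficient cloning procedure for a family of states $S$ and produce an efficient telegraphing procedure (deconstruction and reconstruction) for $S$, accessing the cloning procedure only through its input/output behavior. Because the reduction uses the cloning algorithm purely as a black box, it relativizes: if all parties are given additional access to an auxiliary oracle $\mathcal{O}$, the reduction still goes through, with every oracle call made by the cloning subroutine being forwarded to $\mathcal{O}$.

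The argument then proceeds by contradiction. Suppose such a black box reduction existed. Instantiate it with the oracle $\mathcal{O}$ and the family $S$ from Theorem~\ref{thm:maininf}. Relative to $\mathcal{O}$, the states in $S$ are efficiently clonable, so the hypothesis of the reduction is satisfied. The reduction therefore produces an efficient telegraphing scheme for $S$ relative to $\mathcal{O}$. But Theorem~\ref{thm:maininf} asserts that no such scheme exists -- indeed it rules out even an efficient reconstruction procedure alone (allowing unbounded deconstruction), which is a strictly weaker requirement than what the reduction would provide. This contradicts the conclusion of the reduction.

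The only delicate point is nailing down what qualifies as a \emph{black box} reduction in the quantum setting: the reduction must be a uniform polynomial-time quantum algorithm that accesses the assumed cloning procedure only via its unitary implementation (and its inverse), without exploiting any internal structure. Under this standard interpretation, the relativization step is essentially automatic, and no obstacle arises beyond quoting Theorem~\ref{thm:maininf}. The corollary thus serves to translate the oracle separation into a statement about the impossibility of natural, relativizing proofs of equivalence, without requiring any new technical ingredients beyond the main theorem.
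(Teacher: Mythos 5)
Your proposal is correct and matches the paper's intent: the paper states this corollary as an immediate consequence of Theorem~\ref{thm:maininf} via exactly the standard relativization argument you spell out (a black-box reduction would carry over to the oracle world, yielding an efficient telegraphing scheme relative to $\mathcal{O}$, contradicting the main theorem). The paper gives no further proof, so your write-up simply makes explicit the same one-line argument.
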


We also note that our oracles as stated are sampled from a distribution, rather than being fixed oracles. This is typical of the cryptographic black-box separation literature. In the setting of uniform adversaries, a routine argument allows us to turn this into a fixed oracle relative to which the separations hold. 
We do this explicitly in the proof of Theorem~\ref{thm:complexityinf} to get a separation relative to a fixed unitary oracle, and we further note that this directly implies such a separation between cloning and telegraphing as well.

\subsection{Other Related Work}\label{sec:related}

Cloning in the complexity-theoretic setting has been extensively studied during the last decade, in the context of public key quantum money~\cite{Aaronson11,knots,subspaces,lightning,classicalbank} and copy protection~\cite{Aaronson11,ssl,ColLiuLiuZha21}. 

A recent development in quantum money has been quantum money with \emph{classical} communication~\cite{semiquantum,oneshot,classicalbank}. This can be seen as a complement to our separation, giving a setting where a quantum state \emph{is} telegraphable, but not clonable. In order to overcome the trivial telegraphing-implies-cloning result, however, these works move to \emph{interactive} telegraphing, involving two or more messages between sender and receiver. Moreover, telegraphing happens in only a weak sense: the receiver does not get the original quantum state. Instead, the sender's quantum money state is actually irreversibly destroyed, but in the process the receiver is able to create a single new quantum money state.

\subsection{Technical Overview}

Let $f$ be a random function with codomain much smaller than domain. Our clonable-but-not-telegraphable states will be the the superpositions over pre-images of $f$:
$$|\psi_{z}\rangle = \frac{1}{\sqrt{|f^{-1}(z)|}} \sum_{x | f(x) = z} |x\rangle$$
where $f^{-1}(z) := \{x | f(x) = z\}$ is the set of preimages of $z$ in $f$.

As of now, the $|\psi_z\rangle$ are easily shown to be \emph{un}clonable: if one could create two copies of $|\psi_z\rangle$, then measuring each would give two pre-images $x_1,x_2$ such that $f(x_1)=z=f(x_2)$. Since $f$ has a small codomain, there are exponentially many $x$ in the support of $|\psi_z\rangle$, and therefore $x_1\neq x_2$ with overwhelming probability. Thus we obtain a collision for $f$, which is known to be intractable for query-bounded algorithms to random oracles, even ones with small codomains~\cite{CIS:Zhandry15}. 

That the $|\psi_z\rangle$ are unclonable seems to be counterproductive for our aims. But it allows us to also readily prove that the $|\psi_z\rangle$ are also un-telegraphable: if one could telegraph $|\psi_z\rangle$, it means one can generate a classical $a_z$ such that from $a_z$ it is possible to reconstruct $|\psi_z\rangle$. But by running reconstruction multiple times, one obtains multiple copies of $|\psi_z\rangle$, contradicting no-cloning. This is not exactly how we prove un-telegraphability, but provides an intuition for why it should be true.

Now that we have an untelegraphable set of states, we make them clonable by adding a cloning oracle, which very roughly maps
\[|\psi_z\rangle\mapsto|\psi_z\rangle|\psi_z\rangle\]
for all valid states $|\psi_z\rangle$ and does nothing on states that are not uniform superpositions of pre-images. This clearly makes the $|\psi_z\rangle$ clonable. The challenge is then to prove that telegraphing is still impossible, even given this cloning oracle. This is proved through a sequence of stages:
\begin{itemize}
    \item {\bf Stage 1.} Here, we remove the cloning oracle, and just consider the oracle $f$. We show that, with arbitrary classical advice $a_z$ of polynomially-bounded size dependent on $z$ (which could have been constructed in an arbitrary inefficient manner), it is impossible for a query-bounded algorithm to reconstruct $|\psi_z\rangle$. This is proved by showing that such a reconstruction procedure could be used to contradict known lower bounds on the hardness of finding $K$ collisions~\cite{Hamoudi2020-mz}. 
    
    The above shows that even if we give the \emph{sender} the cloning oracle, then telegraphing is still impossible for a query-bounded receiver, as long as the receiver does not have access to the cloning oracle. Indeed, the hypothetical output of such a sender would be an $a_z$ contradicting the above.
    \item {\bf Stage 2.} Here, we upgrade the receiver to have a limited cloning oracle that only clones a single $|\psi_z\rangle$, namely the unique state $|\psi_z\rangle$ that the receiver is trying to reconstruct.
    
    The intuition is that such a limited cloning oracle is of no use, since in order to query it on a useful input, the receiver needs to have $|\psi_z\rangle$ in the first place. We make this formal using a careful analysis.
    
    \item {\bf Stage 3.} Finally, we give the receiver the full cloning oracle. We show that if there is such a query-bounded receiver that can successfully reconstruct, then we can compile it into a query-bounded receiver for Stage 2, reaching a contradiction.
    
    This is the most technically challenging part of our proof. The rough idea is that the Stage 2 receiver will simulate a set of imposter oracles, where it forwards queries relating to $z$ to its own $z$-only cloning oracle, and all other queries it handles for itself. This simulation is not perfect, and care is required to prove that the simulation still allows for successful reconstruction.
\end{itemize}

Putting these together, we prove Theorem~\ref{thm:maininf}, that there cannot exist \emph{any} telegraphing scheme for the set of $|\psi_z\rangle$ with a query-bounded receiver (regardless of whether the sender is query bounded).

\begin{remark}The above description requires two oracles, a classical random oracle (queryable in superposition) and the cloning oracle. We first note that superposition access to a random oracle is in particular unitary, so the classical random oracle is also a unitary. Second, we can view these two oracles as a single quantum oracle, which operates on two sets of registers, applying one oracle to one register and the other oracle to the other. For the single combined oracle to be equivalent to the two individual oracles, we only need that the individual oracles have efficiently constructible fixed points. This is true of both the oracles we use. Thus, we obtain a separation relative to a single oracle sampled from an appropriate distribution.
\end{remark}

\subsection{Open Problems and Future Directions}

\paragraph{Variations on Telegraphing and Cloning.}
In this work, we consider telegraphing to occur in a single directed round.
We can see, however, as mentioned above, that there are settings in which allowing multiple rounds of classical interaction provides new capabilities. 
An open problem would thus be to show that our scheme resists telegraphing even by adversaries that are allowed multiple rounds of interaction, or alternatively to give another clonable scheme which does.
Such a result would suggest that under computational constraints, interactive telegraphing is a completely independent property that neither implies nor is implied by cloning.

Along similar lines, one can also define extended versions of cloning and telegraphing in which the adversaries are given multiple copies of the state to clone or telegraph. They would then be tasked with producing one additional copy in the case of cloning, or telegraphing one instance across the classical channel. These extended tasks appear to be inherently easier, as the multiple copies may allow for partial tomography of the state. At the very least, they are no more difficult. What kinds of states are resistant to these extended tasks under computational constraints, and what are the implication relationships between them?







\paragraph{Cloning and Telegraphing the Functionality of a State.}
For some applications, the functionality of quantum states is more important than the states themselves. For example, if several different cryptographic keys all allow performing the same decryption, then it may be important to guarantee that no procedure that is given one such key is able to produce another valid key, even if it is not identical to the original.
One way to capture this is by replacing the individual states in the collection $S$ (of states to be cloned or telegraphed) with sets of states that all have equivalent functionality, or alternatively, by the subspaces spanned by such sets.
Such a definition in particular would be general enough to include the schemes for quantum money with classical communication mentioned in Section~\ref{sec:related}.
In particular, single states are trivial singleton sets for functionality, so any cloning or telegraphing scheme for states is also a scheme for functionality, 
which means that our separation result also holds in this context. 
Interestingly, however, considering the functionality of states rather than the states themselves has the potential to allow new or stronger separations that do not exist for tasks on states.

\paragraph{No-go Properties Exclusive to Efficient Computation.}
An artifact of our proof of the separation between cloning and telegraphing is that we actually prove a stronger separation between cloning and a simplified task that we call \emph{reconstruction}. 
This motivates giving this task its own no-go property: 
A set of quantum states $S$ is computationally reconstructible if there is a polynomial-time reconstruction procedure that constructs a state in $S$ from a classical description that uniquely identifies it from within the set. 
Note, however, that this task is trivial for unbounded procedures: 
Any quantum state on a finite Hilbert space can be uniquely identified with its list of complex amplitudes, up to some precision error, which means that an unbounded procedure can always reconstruct it. 
This means that unlike no-cloning and no-telegraphing, no-reconstruction exists exclusively in the context of efficient computation. What other quantum no-go properties have until now been overlooked because they do not appear until computational efficiency is considered?

\paragraph{Separation by a Classical Oracle.}
The separations presented in this paper, between efficient cloning and efficient telegraphing, as well as between $\clonableQMA$ and $\QCMA$, both use quantum oracles. 
This is very natural for the cloning vs. telegraphing problem, as the cloning oracle we use is precisely the black-box cloner, from which we show that no efficient telegraphing protocol can be constructed.
On the other hand, for the separation between $\clonableQMA$ and $\QCMA$, it is reasonable to wonder whether a fully quantum oracle is necessary. That is, is there a classical oracle, accessible in superposition, relative to which the separation between $\clonableQMA$ and $\QCMA$ still holds? Such an oracle would as an immediate consequence also separate $\QCMA$ from $\QMA$, an open problem that remains unresolved for standard classical oracles despite much recent progress.



\subsection{Paper Outline}
We start with some preliminaries in Section~\ref{sec:prelims}. In Section~\ref{sec:no-go-tasks}, we formally define the no-go tasks of cloning and telegraphing, as well as the related task of reconstruction. Section~\ref{sec:cloning-without-telegraphability} forms the main technical part of the paper, and includes the formal statement and proof of Theorem~\ref{thm:maininf}. The complexity theoretic applications of clonable-but-untelegraphable states appear in Section~\ref{sec:complexity}, where we define the complexity class $\clonableQMA$, and give the proof of Theorem~\ref{thm:complexityinf}. 
We end with applications to cryptography in Section~\ref{sec:cryptography}, with a formalization of the notion of parallelizable but non-exfiltratable encryption and the formal statement and proof of Theorem~\ref{thm:exfilinf}.

\section{Preliminaries}\label{sec:prelims}

\subsection{Quantum Computation and Computational Efficiency}

As this paper touches on fundamental properties of quantum mechanics and their intersection with computational efficiency, it is geared towards a wide audience of physicists and computer scientists. However, due to limited space, we only give a basic primer on quantum information and refer the reader to the textbook by Nielsen and Chuang~\cite{Nielsen2012-fh} for a more complete background.

A quantum state $|\psi\rangle$ is a normalized vector in a Hilbert space $\mathcal{H}$. If we choose a basis $\{|\phi_i\rangle\}_i$ for $\mathcal{H}$, then we can measure $|\psi\rangle$ in that basis, at which point we get outcome $i$ with probability $| \langle \phi_i | \psi \rangle |^2$. A distribution $\{p_i\}_i$ over a set of quantum states $|\psi_i\rangle$, often called a mixed state, can be represented as a density matrix $\rho$, which has the form $\rho = \sum_i p_i | \psi_i \rangle\langle \psi_i |$, and which captures the randomness of the quantum state and anything that can be measured about it. If we measure $\rho$ in the basis $\{|\phi_i\rangle\}_i$, we get outcome $i$ with probability $\langle\phi_i | \rho | \phi_i\rangle = \sum_j p_j | \langle \phi_i | \psi_j \rangle|^2$. We can course-grain this measurement by employing a projection-valued measurement (PVM), which groups together the projectors $\ket{\phi_i}\bra{\phi_i}$ into measurement operators $\Pi_k = \sum_{i \in S_k} \ket{\phi_i}\bra{\phi_i}$, where $\sum_k \Pi_k = \mathbb I$ and $\mathbb I$ is the identity operator on $\mathcal{H}$. If we measure $\rho$ with the PVM described by $\{\Pi_k\}_k$, we get outcome $k$ with probability $\tr(\Pi_k \rho)$. We say that $\{\Pi_k\}_k$ is a binary projective measurement if has only two potential outcomes.

Quantum states are transformed to other quantum states via unitary transformations. A unitary $\mathcal{U}$ transforms a pure quantum state $|\psi\rangle$ by sending it to $\mathcal{U}|\psi\rangle$, and it transforms a mixed state $\rho$ by sending it to $\mathcal{U} \rho \mathcal{U}^{\dagger}$.
A quantum algorithm is a sequence of basic unitary transformations $\mathcal{U}_1, \cdots, \mathcal{U}_t$ chosen from some set $U$. 
It acts on a pure quantum state $|\psi\rangle$ by applying each one in turn as $\mathcal{U}_t \mathcal{U}_{t-1} \cdots \mathcal{U}_1|\psi\rangle$ (and on a mixed state analogously). If an algorithm performs $t$ such unitaries, then we say that the algorithm runs in time $t$.
Note that while non-unitary transformations such as measurement or randomness can occur in general, for our purposes, we can always assume without loss of generality that all such non-unitary behavior occurs in a single measurement at the end of the algorithm.

A quantum oracle is a predefined unitary transformation that is applied atomically in a black-box fashion. That is, it is performed as if it were a single operation which cannot be broken down and does not need to have a computationally feasible implementation. Quantum oracles can in particular also implement a classical function $f: \mathcal{X} \rightarrow \mathcal{Y}$ by applying the unitary that transforms each $|x\rangle|y\rangle$ where $x \in \mathcal{X}$ and $y \in \mathcal{Y}$ to $|x\rangle|y \oplus f(x)\rangle$. In this case, we refer to the oracle, and to the function $f$ itself, as a classical oracle, despite it operating on data that may be quantum.
Each time that an oracle is applied is called a query to that oracle.
A quantum oracle algorithm is an algorithm that may at any point perform a query to any oracle from a predefined collection of oracles by applying the unitary for that oracle onto some subset of its registers. If this occurs $q$ times, we say that it is a $q$-query algorithm, or that it runs in $q$ queries.

When considering the runtime of an algorithm, it is often useful to ignore lower order influences and focus its asymptotic behavior as the problem grows. All asymptotics are in terms of the size of the problem given as input, or a security parameter, $n \in \mathbb N$. In asymptotic notation, $O(f(n))$ refers to the set of functions that are bounded from above by $c \cdot f(n)$ for some constant $c$ and sufficiently large $n$. The notation $\Omega(f(n))$ likewise refers to functions that are bounded from below by $c \cdot f(n)$ for some positive constant $c$ and sufficiently large $n$. We say that a function $g(n)$ is polynomial in $n$, notated $\poly(n)$, if it is $O(n^c)$ for some constant real number $c$. We say that $g(n)$ is negligible, notated $\negl(n)$ if for every constant real number $c$, it is not $\Omega(n^{-c})$. On the other hand, we say that $g(n)$ is non-negligible if it is not negligible, and we say that it is overwhelming if $0 \le g(n) \le 1$ and $1 - g(n)$ is negligible. We say that an event $E$ happens with high probability when $\Pr[E] \ge \frac12 + \Omega(1)$.

We say that an algorithm is efficient or polynomial-time, if it runs it time at most $t(n)$ and $t(n)$ is polynomial in $n$. We say that an oracle algorithm is query-efficient or polynomial-query (or sometimes just polynomial-time) if it makes at most $q(n)$ oracle queries and $q(n)$ is polynomial in~$n$.

\subsection{Decision Problems}

A promise problem is a pair of disjoint sets of strings, $\mathcal{L} = (\mathcal{L}_{\YES}, \mathcal{L}_{\NO})$, where $\mathcal{L}_{\YES}, \mathcal{L}_{\NO} \subseteq \bits^*$, $\mathcal{L}_{\YES} \cap \mathcal{L}_{\NO} = \emptyset$. Strings in $\mathcal{L}_{\YES}$ and $\mathcal{L}_{\NO}$ are called $\YES$ instances and $\NO$ instances, respectively.
A language (sometimes called a decision problem) has the extra guarantee that $\mathcal{L}_{\YES} \cup \mathcal{L}_{\NO} = \bits^*$, in which case it may be specified by only giving $\mathcal{L}_{\YES}$. When otherwise clear from context, or when the distinction is not meaningful, we use the term decision problem to refer to either languages or promise problems, and use the more precise term where the distinction matters.
An oracle decision problem, or a black-box decision problem, is defined similarly, but where $\mathcal{L}_\YES$ and $\mathcal{L}_\NO$ are instead sets of oracles, and an instance is an oracle in one of the two sets.
For oracle decision problems, the size of the instance is taken to be the size of the input to the oracle. 
An average-case decision problem, $(\mathcal{L}, \mathcal{D})$, is a decision problem $\mathcal{L}$, paired with a probability distribution $\mathcal{D}$ over its instances. We normally require that $\mathcal{D}$ can be sampled by a polynomial-time quantum algorithm.

We say that an algorithm decides or computes a decision problem if on input $x \in \mathcal{L}_{\YES} \cup \mathcal{L}_{\NO}$, it accepts (outputs $\YES$ or $1$) whenever $x \in \mathcal{L}_{\YES}$, and rejects (outputs $\NO$ or $0$) whenever $x \in \mathcal{L}_{\NO}$. 
An algorithm for an average-case problem may fail on any subset of its instances over which the distribution, $\mathcal{D}$, assigns a combined low probability. 
A complexity class is a set of (generalized) decision problems. There may be corresponding versions of the same complexity class for promise problems, languages, oracle problems, or average-case problems, and which one is meant is usually made clear from context, but whenever this is not the case, we specify explicitly which one we mean. 
We specifically say that an average-case decision problem $(\mathcal{L}, \mathcal{D})$ is hard for a certain complexity class if for every algorithm, there exists a negligible function $\varepsilon$ such that the algorithm fails to satisfy the conditions of the class with probability at least $\frac12 - \varepsilon$ over the distribution $\mathcal{D}$ on the instances.

\subsection{Complexity Classes}

In classical complexity theory, the class $\NP$ is the class of decision problems that have efficiently verifiable proofs (called \emph{witnesses}). 
That is, such problems have a polynomial time classical verifier, for which given any $\YES$ instance of the decision problem, there is a polynomial length classical witness string that allows the verifier to verify the instance as a $\YES$ instance, while at the same time no purported witness would lead to verifying a $\NO$ instance. The related complexity class $\MA$ allows the verifier to use randomness and bounded error. We concern ourselves mainly with the quantum generalizations of these classes, known as $\QCMA$ and $\QMA$, which allow the verifier to be a polynomial-time quantum Turing machine, and the witnesses to be classical strings or quantum states, respectively~\cite{aharonov2002quantum,892141}.

\begin{definition}[$\QCMA$]\label{def:QCMA}
A decision problem $\mathcal{L} = (\mathcal{L}_{\YES}, \mathcal{L}_{\NO})$ is in $\QCMA(c, s)$ if there exists a polynomial time quantum verifier $V$, and a polynomial $p$, such that
\begin{itemize}
    \item \textbf{Completeness:} if $x \in \mathcal{L}_{\YES}$, then there exists a classical witness $w \in \bits^{p(|x|)}$ such that $V$ accepts on input $\ket x \ket w$ with probability at least $c$
    \item \textbf{Soundness:} if $x \in \mathcal{L}_{\NO}$, then for all classical strings $w^* \in \bits^{p(|x|)}$, $V$ accepts on input $\ket{x}\ket{w^*}$ with probability at most $s$.
\end{itemize}
\end{definition}

\begin{definition}[$\QMA$]\label{def:QMA}
A decision problem $\mathcal{L} = (\mathcal{L}_{\YES}, \mathcal{L}_{\NO})$ is in $\QMA(c, s)$ if there exists a polynomial time quantum verifier $V$, and a polynomial $p$, such that
\begin{itemize}
    \item \textbf{Completeness:} if $x \in \mathcal{L}_{\YES}$, then there exists a quantum witness $\ket \psi$ on $p(|x|)$ qubits such that $V$ accepts on input $\ket x \ket \psi$ with probability at least $c$
    \item \textbf{Soundness:} if $x \in \mathcal{L}_{\NO}$, then for all quantum states $\ket{\psi^*}$ on $p(|x|)$ qubits, $V$ accepts on input $\ket{x}\ket{\psi^*}$ with probability at most $s$.
\end{itemize}
\end{definition}
We take $\QCMA = \QCMA(\frac9{10}, \frac1{10})$ and $\QMA = \QMA(\frac9{10}, \frac1{10})$, although by making use of parallel repetition for error reduction, the completeness and soundness parameters can be set arbitrarily, with a wide leeway in the size of the completeness-soundness gap, without changing the class~\cite{aharonov2002quantum}.%
\footnote{The standard practice is usually to set the completeness and soundness parameters to $2/3$ and $1/3$ respectively. Since, as mentioned, the exact values are arbitrary, we find it convenient instead to set them to $9/10$ and $1/10$, in order to simplify the presentation of some proofs which would otherwise require more frequent invocations of parallel repetition to increase the gap.}



\subsection{Query Magnitudes and Modifying Oracles}
When working with quantum oracle algorithms, 
it is often useful to be able to bound the effect that replacing one oracle with another can have on the result of the computation.
To this end, we recall the following definition and two theorems due to Bennett, Bernstein, Brassard, and Vazirani~\cite{Bennett1997-xg}:
\begin{theorem}[Theorem 3.1 from~\cite{Bennett1997-xg}] \label{thm:Ben+97_3.1}
If two unit-length superpositions are within Euclidean distance $\varepsilon$ then observing the two superpositions gives samples from distributions which are within total variation distance at most $4\varepsilon$.
\end{theorem}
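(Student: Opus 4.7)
The plan is to prove the bound by directly comparing the measurement outcome distributions in an arbitrary fixed basis, using a single application of Cauchy--Schwarz. Since the theorem concerns observing two superpositions, I would fix an arbitrary orthonormal measurement basis $\{|i\rangle\}$ and set $a_i = \langle i|\psi\rangle$ and $b_i = \langle i|\phi\rangle$, where $|\psi\rangle$ and $|\phi\rangle$ are the two unit-length superpositions with $\bigl\| |\psi\rangle - |\phi\rangle\bigr\| \leq \varepsilon$. The outcome distributions are $p_i = |a_i|^2$ and $q_i = |b_i|^2$, and the goal reduces to bounding $\sum_i |p_i - q_i|$.

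The key algebraic observation I would use is the factorization
$$|p_i - q_i| \;=\; \bigl||a_i| - |b_i|\bigr| \cdot \bigl(|a_i| + |b_i|\bigr),$$
which is tailor-made for Cauchy--Schwarz: one factor will absorb the Euclidean distance hypothesis, the other will be controlled by normalization. Applying Cauchy--Schwarz to $\sum_i |p_i - q_i|$ split along this factorization gives
$$\sum_i |p_i - q_i| \;\leq\; \sqrt{\sum_i \bigl(|a_i| - |b_i|\bigr)^2}\; \cdot\; \sqrt{\sum_i \bigl(|a_i| + |b_i|\bigr)^2}.$$

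For the first factor, the reverse triangle inequality for complex numbers gives $\bigl||a_i|-|b_i|\bigr| \leq |a_i - b_i|$ pointwise, so $\sum_i (|a_i| - |b_i|)^2 \leq \sum_i |a_i - b_i|^2 = \bigl\||\psi\rangle - |\phi\rangle\bigr\|^2 \leq \varepsilon^2$. For the second factor, the elementary inequality $(|a_i|+|b_i|)^2 \leq 2(|a_i|^2 + |b_i|^2)$ together with unit normalization yields $\sum_i (|a_i|+|b_i|)^2 \leq 2(1+1) = 4$. Multiplying these gives $\sum_i |p_i - q_i| \leq 2\varepsilon$, and since the total variation distance is half the $L_1$ distance, it is at most $\varepsilon$, comfortably within the claimed $4\varepsilon$ bound.

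There is no real obstacle here: the only subtlety is choosing the correct factorization before invoking Cauchy--Schwarz, so that Euclidean distance appears on one side and a bounded normalization constant on the other. The constant $4$ in the stated bound is loose (likely reflecting a convention in the original reference where the total variation is measured as the full $L_1$ distance, or where an intermediate looser step is taken); my approach in fact yields the tighter constant $1$, which is more than sufficient.
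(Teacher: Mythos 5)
Your proof is correct. Note that the paper does not prove this statement itself; it imports it verbatim as Theorem 3.1 of Bennett--Bernstein--Brassard--Vazirani, so the relevant comparison is with their original argument. Their proof writes $a_i = b_i + c_i$ with $\sum_i |c_i|^2 \le \varepsilon^2$, expands $\bigl||b_i+c_i|^2 - |b_i|^2\bigr| \le 2|b_i||c_i| + |c_i|^2$, and applies Cauchy--Schwarz to the cross term, obtaining $\sum_i |p_i - q_i| \le 2\varepsilon + \varepsilon^2 \le 4\varepsilon$ (using $\varepsilon \le 2$ for unit vectors, and with total variation taken as the full $L_1$ distance --- that is indeed where the constant $4$ comes from). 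Your route instead factors $|p_i - q_i| = \bigl||a_i|-|b_i|\bigr|\cdot(|a_i|+|b_i|)$ and applies Cauchy--Schwarz once to the whole sum; every step (the reverse triangle inequality, $(|a_i|+|b_i|)^2 \le 2(|a_i|^2+|b_i|^2)$, and the normalization bound) checks out, and you get the cleaner $\sum_i |p_i-q_i| \le 2\varepsilon$ with no lower-order term. This is a genuine, if modest, improvement in tightness; since the paper only ever uses the theorem to conclude ``negligible implies negligible,'' the sharper constant buys nothing here, but your argument is the one I would prefer to teach. One small point worth making explicit: ``observing'' in the BBBV setting means measuring in the fixed computational basis of machine configurations, and your proof, stated for an arbitrary orthonormal basis, covers that case.
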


\begin{definition}[Definition 3.2 from~\cite{Bennett1997-xg}]
\label{def:Ben+97_3.2}
Let $|\phi_i\rangle$ be the superposition of $M^A$ on input $x$ at time $i$. We denote by
$q_y(|\phi_i\rangle)$ the sum of squared magnitudes in $|\phi_i\rangle$ of configurations of $M$ which are querying the
oracle on string $y$. We refer to $q_y(|\phi_i\rangle)$ as the query magnitude of $y$ in $|\phi_i\rangle$.
\end{definition}

\begin{theorem}[Theorem 3.3 from~\cite{Bennett1997-xg}] \label{thm:Ben+97_3.3}
Let $|\phi_i\rangle$ be the superposition of $M^A$ on input $x$ at time $i$. Let $\varepsilon > 0$.
Let $F \subseteq [0, T - 1] \times \Sigma^*$
be a set of time-strings pairs such that $\sum_{(i,y)\in F} q_y(|\phi_i\rangle) \le \frac{\varepsilon^2}{T}$.
Now suppose the answer to each query $(i, y) \in F$ is modified to some arbitrary fixed $a_{i,y}$ (these answers need not be consistent with an oracle). Let $|\phi'_i\rangle$ be the time $i$ superposition of
$M$ on input $x$ with oracle $A$ modified as stated above. Then $\big||\phi_T\rangle - |\phi'_T\rangle\big| \le \varepsilon$.
\end{theorem}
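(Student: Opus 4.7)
The plan is a hybrid argument across the $T$ steps of the computation, bounding the error introduced at each step by the query magnitudes of the modified strings at that step, and then combining the per-step bounds via Cauchy--Schwarz. The hypothesis $\sum_{(i,y)\in F} q_y(|\phi_i\rangle)\le \varepsilon^2/T$ is calibrated precisely so that the Cauchy--Schwarz step yields the bound $\varepsilon$.

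First I would set up notation. Let $U_i$ be the unitary that $M$ applies between times $i$ and $i+1$ when its oracle is $A$, and let $U'_i$ be the corresponding unitary when, on queries $(i,y)\in F$, the oracle's response is replaced by $a_{i,y}$. Then $|\phi_{i+1}\rangle = U_i|\phi_i\rangle$ and $|\phi'_{i+1}\rangle = U'_i|\phi'_i\rangle$. The operators $U_i$ and $U'_i$ agree on every basis state whose query register does not hold some $y$ with $(i,y)\in F$; on a basis state that does query such a $y$, the operator $U_i - U'_i$ produces a vector of norm at most $\sqrt{2}$, being the difference of two orthonormal vectors encoding the two possible oracle answers.

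Next I would run a telescoping hybrid. Define
\[|\psi_j\rangle \;:=\; U'_{T-1}\cdots U'_{j}\, U_{j-1}\cdots U_{0}\,|\phi_0\rangle,\]
so that $|\psi_T\rangle = |\phi_T\rangle$ and $|\psi_0\rangle = |\phi'_T\rangle$. By the triangle inequality together with the unitary invariance of the Euclidean norm,
\[\bigl\||\phi_T\rangle - |\phi'_T\rangle\bigr\| \;\le\; \sum_{i=0}^{T-1} \bigl\|(U_i - U'_i)\,|\phi_i\rangle\bigr\|.\]
The key point, and the one slightly subtle part of the argument, is that after telescoping each term involves $(U_i - U'_i)$ acting on the unmodified state $|\phi_i\rangle$, which is exactly the evolution with respect to which the hypothesis controls the query magnitudes; if one were sloppy and ended up with these operators acting on $|\phi'_i\rangle$ or on intermediate hybrids, the hypothesis would give no control.

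Finally, I would bound each per-step term and combine them. Since the components of $|\phi_i\rangle$ querying distinct strings $y$ are mutually orthogonal, and $U_i - U'_i$ vanishes outside the query strings $y$ with $(i,y)\in F$,
\[\bigl\|(U_i - U'_i)\,|\phi_i\rangle\bigr\|^2 \;\le\; 2 \sum_{y:\,(i,y)\in F} q_y(|\phi_i\rangle) \;=:\; 2E_i.\]
Applying Cauchy--Schwarz to $\sum_i \sqrt{2E_i}$ then gives $\sqrt{2T\sum_i E_i}\le \sqrt{2}\,\varepsilon$, which yields the claimed bound $\bigl\||\phi_T\rangle - |\phi'_T\rangle\bigr\|\le \varepsilon$ (up to a constant factor that can either be absorbed into $\varepsilon$ or removed by a sharper per-step estimate of $\|(U_i-U'_i)|\phi_i\rangle\|$ that accounts for cancellation when $A(y)$ happens to agree with $a_{i,y}$).
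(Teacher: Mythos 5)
The paper does not prove this statement at all---it is imported verbatim as Theorem~3.3 of Bennett, Bernstein, Brassard, and Vazirani---and your argument is precisely their original hybrid/telescoping proof: the same decomposition $\bigl\||\phi_T\rangle-|\phi'_T\rangle\bigr\|\le\sum_{i}\bigl\|(U_i-U'_i)|\phi_i\rangle\bigr\|$ with the difference operator correctly acting on the \emph{unmodified} states $|\phi_i\rangle$, the same per-step bound in terms of the query magnitudes on $F$, and the same Cauchy--Schwarz finish calibrated against the hypothesis $\sum_{(i,y)\in F}q_y(|\phi_i\rangle)\le\varepsilon^2/T$. The argument is correct, including your own caveat that the crude $\sqrt{2}$ per-step constant leaves a stray factor that is eliminated by the sharper estimate (accounting for cancellation when $A(y)=a_{i,y}$ and for the orthogonality of the two oracle answers in the BBBV machine model) or simply absorbed into $\varepsilon$, so there is nothing to add.
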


\section{Fundamental Tasks and Their No-go Properties}
\label{sec:no-go-tasks}

\subsection{Schemes of Quantum States}
We introduce the following syntax for a scheme of quantum states. A scheme is the basic structure on which the quantum no-go properties may or may not apply. In other words, some schemes may be clonable, for instance, while other schemes may not. Schemes consist primarily of a collection of quantum states, but they can also specify the collection of oracles which may be used, as well as a distribution for sampling from those states.

\begin{definition}[Scheme]
\label{def:scheme} 
In the context of quantum no-go properties, a \textbf{scheme}, $(S, \mathcal{D}, \mathcal{O})$, is an indexed collection of quantum states 
$S = \{|\psi_i\rangle\}_{i \in \mathcal{Z}}$ over an index set $\mathcal{Z}$ (which we call the set of labels), a distribution $\mathcal{D}$ over the labels, and a collection $\mathcal{O}$ of any quantum oracles that may be used.
\end{definition}

Whenever either the distribution or the oracles are irrelevant or otherwise clear from context, we will drop them from the notation and write $(S, \mathcal{O})$, $(S, \mathcal{D})$, or simply $S$.
Note that the distribution $\mathcal{D}$, which allows sampling from the collection of states, is only important for defining average-case security of the scheme, and $\mathcal{O}$ is only necessary when considering oracle algorithms.

Under a certain scheme $(S, \mathcal{D}, \mathcal{O})$, \emph{verification} of an unknown quantum state $|\phi\rangle$ for a label $z$ is the measurement of whether $|\phi\rangle$ passes for the intended state $|\psi_z\rangle$, which succeeds with probability $p = |\langle \psi_z | \phi \rangle|^2$. 
When we say that an algorithm succeeds in passing verification with some probability $p$, we mean that verification succeeds with that probability over the randomness of the algorithm's output as well as the randomness of the sampling from $\mathcal{D}$ and that of the verification measurement. 
That is, if the algorithm is randomized and outputs a mixed state $\rho_z$ when label $z$ is drawn, then we say that it succeeds at verification for $z$ with probability $p = \mathbb E_{z \leftarrow \mathcal{D}} \langle \psi_z | \rho_z | \psi_z \rangle$. 
This is the expected fidelity of the states produced by the algorithm with the intended state.
Whenever an algorithm is tasked with passing verification for a label $z$, we call $z$ the target label and we call $|\psi_z\rangle$ the target state.



\subsection{Cloning and Telegraphing}

We now formally define the tasks of cloning and telegraphing.

\begin{definition}[Cloning]
\label{def:cloning}
A scheme $S$ is said to be $\eta$-\textbf{worst case clonable} if there exists a quantum algorithm 
$\mathsf{Clone}(|\psi\rangle)$
such that for every label $z \in \mathcal{Z}$, when given $|\psi_z\rangle$, its corresponding quantum state in $S$, 
returns a quantum state $|\phi\rangle$ on two registers that, with probability at least $\eta$, passes verification for $z$ on both registers simultaneously. That is, $\left|\big(\,\bra{\psi_z} \otimes \bra{\psi_z} \,\big) \ket{\phi}\right|^2 \ge \eta$.

$(S, \mathcal{D})$ is said to be $\eta$-\textbf{average case clonable} if there exists a quantum algorithm 
$\mathsf{Clone}(|\psi\rangle)$
that succeeds at the cloning task with probability $\eta$ when $z$ is sampled from the distribution $\mathcal{D}$.
\end{definition}


\begin{definition}[Telegraphing]
\label{def:telegraphing}
A scheme $S$ is said to be $\eta$-\textbf{worst case telegraphable} if there exists a pair of quantum algorithms
$\mathsf{Send}(|\psi\rangle) \rightarrow c$
and 
$\mathsf{Receive}(c) \rightarrow |\phi\rangle$ where $c$ is a classical string,
such that for every label $z \in \mathcal{Z}$, when given $|\psi_z\rangle$, its corresponding quantum state in $S$,
$|\phi\rangle := \mathsf{Receive}(\mathsf{Send}(|\psi_z\rangle))$ passes verification for $z$ with probability at least $\eta$.

$(S, \mathcal{D})$ is said to be $\eta$-\textbf{average case telegraphable} if there exists a pair of quantum algorithms
$\mathsf{Send}(|\psi\rangle) \rightarrow c$
and  
$\mathsf{Receive}(c) \rightarrow |\phi\rangle$
that succeed at the telegraphing task with probability $\eta$ when $z$ is sampled from the distribution $\mathcal{D}$.

\end{definition}

Note that quantum teleportation is 
the process by which a quantum state can be transmitted through a classical channel by the use of pre-shared quantum entanglement~\cite{PhysRevLett.70.1895}.
Telegraphing can thus be viewed as describing a quantum teleportation protocol without the use of entanglement: $\mathsf{Send}$ converts the quantum state $|\psi_z\rangle$ to a classical description $c$, which $\mathsf{Receive}$ then converts back into $|\psi_z\rangle$, or an approximation thereof. 
This is why the no-go theorem of the telegraphing task for general quantum states is often referred to as the 
\emph{no-teleportation theorem},
a name first coined by the originator of the theorem~\cite{Werner_1998}. 
This terminology can be confusing, however, since teleportation \emph{is} in fact always possible when the sending and receiving parties are allowed to start out with an additional entangled quantum state.
To sidestep this confusion, throughout this paper we instead use the term \emph{telegraphing} for the unentangled no-go task.
Here, and throughout this paper, any pair of algorithms attempting to achieve the telegraphing task are attempting to do so without the use of pre-shared entanglement.

\subsection{Information Theoretic No-go Theorems}

We now state a version of the (information theoretic) no-go theorems for these two tasks. The No-Cloning Theorem was first proved by three independent papers~\cite{Park70,WoottersZurek82,Dieks82}, but the version we present here is due to~\cite{YUEN1986405}. The No-Telegraphing Theorem (originally called the No-Teleportation Theorem), a corollary of the No-Cloning Theorem, is due to~\cite{Werner_1998}. We present the two theorems here together to emphasize the direct connection between~them.

\begin{theorem}[No-Cloning Theorem and No-Telegraphing Theorem]\label{thm:no-cloning-no-telegraphing}
Let $\mathcal{H}$ be a Hilbert space, and let $S = \{\ket{\psi_i}\}_{i \in [k]}$ be a collection of pure quantum states on this Hilbert space.
The following are equivalent:
\begin{enumerate}
    \item $S$ can be perfectly cloned
    \item $S$ can be perfectly telegraphed
    \item $S$ is a collection of orthogonal states, with duplication \textnormal{(}$\forall i,j \; \left\lvert \langle{\psi_i}|{\psi_j}\rangle \right\rvert^2$ is either $0$ or $1$\textnormal{)}
\end{enumerate}
\end{theorem}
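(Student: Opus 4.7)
The plan is to establish the cyclic chain of implications $(3) \Rightarrow (2) \Rightarrow (1) \Rightarrow (3)$, which gives the three-way equivalence. None of the steps is particularly deep; the main task is to set up the perfect-cloning identity in a way that admits general quantum algorithms.

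For $(3) \Rightarrow (2)$, I would quotient $S$ by the ``equal up to global phase'' relation (which coincides with $|\langle \psi_i | \psi_j\rangle|^2 = 1$) to obtain a finite orthonormal set of representatives $\{|\tilde \psi_j\rangle\}_{j \in J}$. Extend this to a PVM on $\mathcal{H}$. Define $\mathsf{Send}(|\psi_i\rangle)$ to apply this measurement (which always returns $j(i)$ on any state in $S$) and output the classical outcome, and define $\mathsf{Receive}(j)$ to prepare $|\tilde \psi_j\rangle$. Since every $|\psi_i\rangle$ equals its representative $|\tilde\psi_{j(i)}\rangle$ up to a global phase, verification passes with probability $1$. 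For $(2) \Rightarrow (1)$, I would invoke the trivial construction already highlighted in the introduction: run $\mathsf{Send}(|\psi_i\rangle)$ to get a classical $c$, then run $\mathsf{Receive}(c)$ twice on independent registers to obtain $|\psi_i\rangle |\psi_i\rangle$ exactly.

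For $(1) \Rightarrow (3)$, which is the heart of the no-cloning argument, I would apply a Stinespring dilation to write any perfect cloner as a unitary $U$ acting on the input register together with a fresh ancilla register initialized to $|0\rangle$ (hosting the second copy) and possibly an additional workspace ancilla, so that for all $i$,
\[
U\bigl(|\psi_i\rangle \otimes |0\rangle \otimes |0\rangle\bigr) \;=\; |\psi_i\rangle \otimes |\psi_i\rangle \otimes |\gamma_i\rangle
\]
for some normalized $|\gamma_i\rangle$ depending on $i$. Taking inner products between the versions of this identity for indices $i$ and $j$ and using unitarity of $U$ yields
\[
\langle \psi_i | \psi_j \rangle \;=\; \langle \psi_i | \psi_j \rangle^2 \,\langle \gamma_i | \gamma_j \rangle.
\]
If $\langle \psi_i | \psi_j \rangle = 0$ we are done; otherwise we can cancel one factor to obtain $\langle \psi_i | \psi_j \rangle \cdot \langle \gamma_i | \gamma_j \rangle = 1$, and since both inner products have modulus at most $1$, Cauchy--Schwarz forces both to have modulus exactly $1$. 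In either case $|\langle \psi_i | \psi_j \rangle|^2 \in \{0,1\}$, which is exactly condition $(3)$.

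The only real subtlety is the bookkeeping in $(1) \Rightarrow (3)$: one must allow the auxiliary workspace $|\gamma_i\rangle$ in the Stinespring dilation so that the argument applies to arbitrary quantum algorithms rather than only to unitaries that output exactly two registers with no residual. Once this is in place, the inner product identity handles everything at once. The remaining two implications are essentially definitional consequences of the spectral structure of mutually orthogonal states, and I would expect them to require only a few lines each.
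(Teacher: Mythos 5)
Your proposal is correct and follows essentially the same route as the paper: the same cyclic chain of implications, the same Yuen-style inner-product identity $\langle\psi_i|\psi_j\rangle = \langle\psi_i|\psi_j\rangle^2\langle\gamma_i|\gamma_j\rangle$ for $(1)\Rightarrow(3)$ (the paper asserts the unitary-with-ancilla form directly where you invoke Stinespring), and the same measure-and-reprepare construction for $(3)\Rightarrow(2)$. No gaps.
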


The proof of cases 1 and 3 is due to~\cite{YUEN1986405} and the addition of case 2 is due to~\cite{Werner_1998}. For completeness, we present the full proof in Appendix~\ref{appendix:no-cloning-equals-no-telegraphing}.
Theorem~\ref{thm:no-cloning-no-telegraphing} demonstrates that a general collection of quantum states cannot be cloned or telegraphed, but all orthogonal collections can. 

\subsection{Computational No-go Properties}

We now define the efficient versions of the no-go tasks of cloning and telegraphing, and their associated computational no-go properties.

\paragraph{Computational Restrictions.}
We call the algorithms $\mathsf{Clone}$, $\mathsf{Send}$, and $\mathsf{Receive}$ the adversaries for their respective tasks. Specifying the class of algorithms from which the adversaries may originate allows us to further parameterize the definitions of these no-go tasks by computational complexity.

For instance, if the adversaries are required to be computationally efficient (polynomial-time) quantum algorithms, we say that the scheme is \emph{efficiently} or \emph{computationally} clonable (or unclonable, telegraphable, etc.).
If the scheme includes oracles and the adversaries are quantum oracle algorithms that make a polynomial number of oracle queries, that is, query-efficient algorithms, then we say that the scheme is clonable (unclonable, telegraphable, etc.) by efficient oracle algorithms or query-efficient algorithms.
The one thing to note is that for telegraphing by efficient oracle algorithms, we require as an additional restriction that the classical message $c$ be of polynomial length.
We often use the words ``computational'' and ``efficient'' as a catch-all for both computationally efficient and query-efficient algorithms, and we use more specific terminology whenever it is necessary to differentiate between them.
If the adversaries are not bounded in any way, we say that the scheme is \emph{statistically} or \emph{information-theoretically} clonable  (unclonable, telegraphable, etc.).

\paragraph{Success Probability.}
We say that a scheme is \emph{$\eta$-unclonable} or \emph{$\eta$-untelegraphable} 
(in either the worst case or in the average case) if no quantum algorithm succeeds at the corresponding task with probability greater than $\eta$.
We will often just drop the parameter $\eta$ and simply say that a scheme is unclonable (or untelegraphable) if it is $\eta$-unclonable (respectively $\eta$-untelegraphable) for every non-negligible probability $\eta$ (non-negligible in the length of the input, in qubits). We say that a scheme is perfectly clonable (or telegraphable) if it is clonable (respectively, telegraphable) with probability 1.

\paragraph{Telegraphing Implies Cloning.}
We now give the trivial direction of the relationship between computational cloning and computational telegraphing: that telegraphing implies cloning. This implication and its proof are certainly not a new result, even in the context of computational efficiency. However, both directions of the relationship have too often been taken for granted despite one direction not always holding. We therefore give a formal proof for the direction that \emph{does} still hold in the context of efficient algorithms, both for completeness, as well as to contrast its simplicity with the relative complexity of the supposed converse.

\begin{theorem}[Telegraphing Implies Cloning]
\label{thm:telegraphing-implies-cloning}
Any scheme that is $(1-\varepsilon)$-computationally telegraphable is also $(1-2\varepsilon)$-computationally clonable. Note that this applies to both computationally efficient and query-efficient algorithms as well as to both worst case and average case versions of these properties.
\end{theorem}
\begin{proof}
We prove this for computationally efficient algorithms, and in the worst case, since the other cases are nearly identical to this one.

Let $S$ be a scheme that is $(1-\varepsilon)$-telegraphable in the worst case by computationally efficient adversaries. That is, there exist efficient quantum algorithms 
$\mathsf{Send}(|\psi\rangle) \rightarrow c$
and 
$\mathsf{Receive}(c) \rightarrow |\phi\rangle$
such that for all $|\psi_z\rangle \in S$, 
$|\phi\rangle := \mathsf{Receive}(\mathsf{Send}(|\psi_z\rangle))$ passes verification for $z$ with probability at least $1-\varepsilon$.

Without loss of generality, we assume that $\mathsf{Send}$ always outputs \emph{some} message to send to $\mathsf{Receive}$. This is because it can always output an arbitrary/random message, which is no worse than outputting nothing.
That is, on input $\ket{\psi_i}$,  $\sum_{c\in\bits^*} \Pr[\mathsf{Send} \text{ outputs } c] = 1$.

The probably of successfully telegraphing is
$$p_{\text{successful telegraphing}} = \sum_c \Pr[\mathsf{Send} \text{ outputs } c] \Pr[\mathsf{Receive} \text{ succeeds on } c] > 1-\varepsilon$$
where the probabilities are taken over the internal randomness and measurements of the algorithms as well as over the randomness of verification.

So, if we run $\mathsf{Send}$ once on $\ket{\psi_i}$ to get message $c$ and then run $\mathsf{Receive}$ twice on the same $c$, we get that the probability of successfully getting two copies of $\ket{\psi_i}$ is
\begin{align*}
    p_{\text{successful cloning}} 
    &= \sum_c \Pr[\mathsf{Send} \text{ outputs } c] \big(\Pr[\mathsf{Receive} \text{ succeeds on } c]\big)^2 \\
    &\ge \left(\sum_c \Pr[\mathsf{Send} \text{ outputs } c] \Pr[\mathsf{Receive} \text{ succeeds on } c]\right)^2 \\
    &\ge (1-\varepsilon)^2 = 1 - 2\varepsilon + \varepsilon^2 > 1 - 2\varepsilon
\end{align*}
Thus $S$ is also $(1-2\varepsilon)$-clonable in the worst case, in time that is at most twice what it took to telegraph.
\end{proof}


Our main result, which we show in Section~\ref{sec:cloning-without-telegraphability}, is that the converse to this theorem does not hold, at least with respect to efficient oracle algorithms.




\subsection{Reconstruction}

Our central aim is to separate efficient cloning from efficient telegraphing. However, in order to do so, we find it convenient to introduce an additional third task, which we call \emph{reconstruction}.

\begin{definition}[Reconstruction]
\label{def:reconstruction}
A scheme $S$ is said to be $\eta$-\textbf{worst case reconstructible} if there exists a quantum algorithm
$\mathsf{Reconstruct}(a) \rightarrow |\phi\rangle$
such that for every label $z \in \mathcal{Z}$, there exists an instance-dependent advice string\footnote{Note that classical tasks become trivial when an adversary is given trusted advice that is \emph{instance-dependent}, as opposed to depending only on the input length. However, the same is not the case for quantum tasks. A quantum task such as that of preparing a quantum state may still be non-trivial, even when given trusted classical advice that depends on each instance.}
$a_z$ such that $|\phi\rangle := \mathsf{Reconstruct}(a_z)$ passes verification for $z$ with probability at least $\eta$.

$(S, \mathcal{D})$ is said to be $\eta$-\textbf{average case reconstructible} if there exists a quantum algorithm
$\mathsf{Reconstruct}(a) \rightarrow |\phi\rangle$
that succeeds at the reconstruction task with probability $\eta$ when $z$ is sampled from the distribution $\mathcal{D}$.

\end{definition}

The different parameterized versions of reconstruction are defined analogously to those of cloning and telegraphing.
As with the classical message in the case of telegraphing, for reconstruction by efficient oracle algorithms, we require as an additional restriction that the advice string $a_z$ be of polynomial length.

Reconstruction can be viewed in one way as a subtask of telegraphing, where we focus our attention only on the receiving end of the telegraphing, or in another way as a telegraphing protocol in which the sender is all-powerful and can implement a (potentially even nonphysical) function from $|\psi_z\rangle$ to $a_z$.
(This function is in fact performing the task of what we call \emph{deconstruction}, which we do not define here, but which can be roughly described as assigning a uniquely identifying label to every state in $S$.)
Following this line of thought, we can observe another trivial implication: between telegraphing and reconstruction.

\begin{theorem}[Telegraphing Implies Reconstruction]
\label{thm:telegraphing-implies-reconstruction}
Any scheme that is $\eta$-computationally telegraphable is also $\eta$-computationally reconstructible. Note that, as before, this applies to both computationally efficient and query-efficient algorithms as well as to both worst case and average case versions of these properties.
\end{theorem}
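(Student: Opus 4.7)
The plan is to show that the advice string required for reconstruction can be taken to be a fixed optimal output of the telegraphing sender, and then \textsf{Reconstruct} can simply be set equal to \textsf{Receive}. Concretely, suppose $S$ is $\eta$-computationally telegraphable with algorithms $\mathsf{Send}$ and $\mathsf{Receive}$. For each label $z$, let $p_z(c) := \Pr[\mathsf{Receive}(c) \text{ passes verification for } z]$, and let $D_z$ be the distribution over classical strings induced by $\mathsf{Send}(|\psi_z\rangle)$. I would define $a_z := \arg\max_{c \in \mathrm{supp}(D_z)} p_z(c)$, breaking ties arbitrarily, and set $\mathsf{Reconstruct} := \mathsf{Receive}$.

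For the worst-case statement, the hypothesis gives $\mathbb{E}_{c \sim D_z}[p_z(c)] \geq \eta$ for every $z$, so in particular
\[
p_z(a_z) \;=\; \max_{c \in \mathrm{supp}(D_z)} p_z(c) \;\geq\; \mathbb{E}_{c \sim D_z}[p_z(c)] \;\geq\; \eta,
\]
which is exactly the reconstruction guarantee. For the average-case statement, applying the same pointwise inequality inside the expectation over $z \sim \mathcal{D}$ gives
\[
\mathbb{E}_{z \sim \mathcal{D}}[p_z(a_z)] \;\geq\; \mathbb{E}_{z \sim \mathcal{D}} \mathbb{E}_{c \sim D_z}[p_z(c)] \;\geq\; \eta.
\]

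It remains to check that the relevant efficiency and size constraints transfer. For computationally efficient or query-efficient adversaries, $\mathsf{Reconstruct} = \mathsf{Receive}$ inherits the efficiency of $\mathsf{Receive}$ directly; no additional work is done. For the length of the advice, recall that in the query-efficient setting the telegraphing definition already requires $c$ to be of polynomial length, so every $a_z \in \mathrm{supp}(D_z)$ is of polynomial length and satisfies the corresponding restriction on reconstruction advice.

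There is essentially no obstacle here, because reconstruction places no efficiency requirement on how the advice $a_z$ is produced — only on how it is consumed. The sole conceptual point worth emphasizing in the write-up is precisely this asymmetry: even though actually computing the optimal $a_z$ from $|\psi_z\rangle$ may be intractable (and even though $\mathsf{Send}$ itself may have been randomized and not deterministically produce $a_z$), the reconstruction definition only requires \emph{existence} of such an advice string, which the above averaging argument supplies for free.
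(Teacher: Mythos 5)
Your proof is correct and takes essentially the same approach as the paper: for each label $z$, select a string in the support of $\mathsf{Send}(|\psi_z\rangle)$ whose conditional success probability is at least $\eta$ (which exists by averaging) and set $\mathsf{Reconstruct} := \mathsf{Receive}$. The only cosmetic difference is that you phrase the existence step as a max-versus-expectation inequality, whereas the paper phrases it as a short proof by contradiction.
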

\begin{proof}
The proof here is even simpler than that of Theorem~\ref{thm:telegraphing-implies-cloning}.
As we did in that proof, we prove this theorem only for computationally efficient algorithms, and in the worst case, since the other cases are much the same.
Let $S$ be a scheme that is $\eta$-telegraphable in the worst case by computationally efficient adversaries. 
That is, there exist efficient quantum algorithms 
$\mathsf{Send}(|\psi\rangle) \rightarrow c$
and 
$\mathsf{Receive}(c) \rightarrow |\phi\rangle$
such that for all $|\psi_z\rangle \in S$, 
$|\phi\rangle := \mathsf{Receive}(\mathsf{Send}(|\psi_z\rangle))$ passes verification for $z$ with probability at least $\eta$.

For every $|\psi_z\rangle \in S$, $\mathsf{Send}(|\psi_z\rangle)$ produces an output $c_z$ that comes from some distribution over classical strings. There must be at least one string $c_z^*$ in its support for which $\mathsf{Receive}(c_z^*)$ succeeds with probability at least $\eta$ (otherwise, $\mathsf{Receive}(c_z)$ has success probability less than $\eta$ for all $c_z$, and so the telegraphing could not have succeeded with probability $\eta$). Thus, for each $z \in \mathcal{Z}$, let $a_z := c_z^*$ and let $\mathsf{Receive}$ be the reconstruction adversary, which we have just shown will succeed on input $a_z$ with probability at least $\eta$ for all $z \in \mathcal{Z}$.
\end{proof}

The direct consequence of Theorem~\ref{thm:telegraphing-implies-reconstruction} is that in order to show that a scheme is not telegraphable, it suffices to show that it is not reconstructible. In other words, in order to prove our separation between computational cloning and computational telegraphing, it suffices to show a scheme that can be computationally cloned but \emph{not computationally reconstructed}.
Reframing our aim in such a way simplifies the analysis because now we only have to deal with a single adversary in both situations (cloning and reconstruction), as opposed to two interacting adversaries for telegraphing. Furthermore, by doing so, we in fact end up showing a stronger separation.


\section{Cloning without Telegraphability}
\label{sec:cloning-without-telegraphability}

We now come to the main theorem of the paper.

\begin{theorem}
\label{thm:clonable_untelegraphable}
There exists a scheme, relative to a quantum oracle, that on the one hand, can be perfectly cloned by an efficient quantum oracle algorithm in the worst case, but that on the other hand cannot be telegraphed by a pair of efficient quantum oracle algorithms with any non-negligible probability, even in the average case.
\end{theorem}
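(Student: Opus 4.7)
The plan is to instantiate the scheme sketched in the technical overview and then, invoking Theorem~\ref{thm:telegraphing-implies-reconstruction}, to reduce untelegraphability to the stronger property of non-reconstructibility. Fix a uniformly random function $f:\{0,1\}^n\to\{0,1\}^m$ with $m$ somewhat smaller than $n$ (queried in superposition), let $\mathcal{Z}=\{0,1\}^m$ with $\mathcal{D}$ the uniform distribution, and define
$|\psi_z\rangle=\frac{1}{\sqrt{|f^{-1}(z)|}}\sum_{x:f(x)=z}|x\rangle$.
Bundle $f$ with a cloning oracle $\mathcal{C}$ that acts as $\mathcal{C}|\psi_z\rangle|0\rangle=|\psi_z\rangle|\psi_z\rangle$ on the subspace spanned by legal preimage-superpositions, extended by an arbitrary fixed unitary on the orthogonal complement. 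Perfect worst-case cloning relative to $(f,\mathcal{C})$ is then a single query to $\mathcal{C}$, so the entire content of the theorem is on the untelegraphability side.

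Stage~1 removes $\mathcal{C}$ entirely and shows that no query-efficient $\mathsf{Reconstruct}^{f}(a_z)$ with polynomially-bounded classical advice $a_z$ (possibly constructed inefficiently from $f$) reconstructs $|\psi_z\rangle$ with non-negligible average fidelity. The reduction is to the tight $K$-multi-collision query lower bound for random oracles: measuring the reconstructed state in the computational basis yields a uniform preimage of $z$, and independent repetitions of $\mathsf{Reconstruct}$ on the same $a_z$ yield $K=\mathrm{poly}(n)$ distinct preimages of $z$ with overwhelming probability, since $|f^{-1}(z)|$ is exponentially large. Averaging over $z\sim\mathcal{D}$ and applying Markov converts a non-negligible reconstruction advantage into a polynomial-query algorithm finding a $K$-collision for $f$ with inverse-polynomial probability, contradicting the known lower bound.

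Stage~2 upgrades the adversary with a target-only cloning oracle $\mathcal{C}_z$ that clones $|\psi_z\rangle$ and acts as identity on the orthogonal complement inside the legal subspace. Intuitively, making $\mathcal{C}_z$ useful requires already having large amplitude on $|\psi_z\rangle$ in the query register, yet the adversary's initial state depends on $z$ only through the classical advice $a_z$. I formalize this using query-magnitude analysis: by Theorem~\ref{thm:Ben+97_3.3} and Theorem~\ref{thm:Ben+97_3.1}, the total amplitude on query inputs overlapping $|\psi_z\rangle$ is small enough that replacing $\mathcal{C}_z$ by the identity changes the output distribution by a negligible amount in total variation, reducing Stage~2 to Stage~1.

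The main technical work is Stage~3. Given a query-efficient reconstructor $R$ with access to the full cloner $\mathcal{C}$, I build a Stage~2 reconstructor $R'$ that internally simulates $\mathcal{C}$ label by label: for every $z'\neq z$, $R'$ uses its own knowledge of $f$ to identify the legal subspace at label $z'$ and apply the cloning isometry itself, while for label $z$ it relays queries to its target-only oracle $\mathcal{C}_z$. The hard step is arguing that this ``imposter'' oracle is close to $\mathcal{C}$ from $R$'s viewpoint: components of $R$'s queries aligned with any single $|\psi_{z'}\rangle|0\rangle$ are handled exactly, while components on superpositions that mix legal and illegal subspaces or multiple labels contribute error controlled by a further query-magnitude argument, exploiting that each $|\psi_z\rangle$ is exponentially thin inside the legal subspace and that $R$ begins with only polynomial information about $z$. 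Chaining the three stages yields the desired contradiction, establishing untelegraphability even in the average case against query-efficient receivers.
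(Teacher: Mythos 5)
Your overall architecture matches the paper's: reduce untelegraphability to non-reconstructibility via Theorem~\ref{thm:telegraphing-implies-reconstruction}, prove Stage~1 by reduction to the multi-collision query lower bound of Hamoudi--Magniez, and prove Stage~2 by a BBBV query-magnitude argument showing the target-only cloner can be replaced by the identity. But your Stage~3 --- which the paper itself identifies as the crux --- contains a step that fails. You say that for every label $z'\neq z$ the simulator $R'$ ``uses its own knowledge of $f$ to identify the legal subspace at label $z'$ and apply the cloning isometry itself,'' and that these components are therefore ``handled exactly.'' A query-bounded reconstructor has no knowledge of $f$ beyond its polynomially many queries; identifying the legal subspace at $z'$ means knowing the fiber $f^{-1}(z')$, an exponentially large random set, so this is precisely the query-inefficient simulation that must be avoided (and if $R'$ were allowed to know $f$ outright, the whole lower bound would collapse). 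The missing idea is the paper's impostor-oracle trick: $R'$ samples for itself a fresh private function $H_{\mathrm{private}}$ with codomain excluding $z$, defines $H_{\mathrm{impostor}}$ to agree with $H$ on the fiber of $z$ and with $H_{\mathrm{private}}$ everywhere else, proves that $(z,H_{\mathrm{impostor}})$ is distributed identically to $(z,H)$, and then simulates the impostor's cloning oracle with $O(1)$ queries per call because it knows $H_{\mathrm{private}}$ in full. The residual error is not a query-magnitude bound on ``mixing components'' as you suggest, but a bound on the rotation between the preimage-superposition states of $H_{\mathrm{impostor}}$ and $H_{\mathrm{private}}$, controlled by a Chernoff bound on the fraction $k_{z\to i}/(k_i+k_{z\to i})$ of reassigned preimages, followed by a hybrid over the polynomially many oracle calls.

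A smaller but real soft spot is in Stage~1: the Hamoudi--Magniez theorem bounds plain query algorithms, whereas your derived collision-finder still carries the advice string $a_z$, so it cannot be said to succeed ``with inverse-polynomial probability'' as an algorithm to which that theorem applies. The paper bridges this by guessing the $\ell$-bit advice (retaining success probability $\Omega(2^{-\ell})$) and choosing the number of disjoint collisions $K=2n(\ell+1)$ large enough that the lower bound $2^{-\Omega(K)}$ still contradicts this exponentially small success probability; without some such step your reduction does not connect to the cited bound.
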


As mentioned before, we in fact prove the following stronger theorem, which, as a consequence of Theorem~\ref{thm:telegraphing-implies-reconstruction}, implies Theorem~\ref{thm:clonable_untelegraphable}:

\begin{theorem}
\label{thm:clonable_unreconstructible}
There exists a scheme, relative to a quantum oracle, that on the one hand, can be perfectly cloned by an efficient quantum oracle algorithm in the worst case, but that on the other hand cannot be \textbf{reconstructed} by an efficient quantum oracle algorithm with any non-negligible probability, even in the average case.%
\footnote{Note importantly that the fact that these quantum states cannot be efficiently reconstructed does not preclude them from appearing naturally and being used in efficient quantum computation, since they may nevertheless be efficiently \emph{samplable}. That is, there may be an efficient way to sample from the set of states without being able to reconstruct any particular one of them on command. In fact, this is exactly the case for our scheme.}
\end{theorem}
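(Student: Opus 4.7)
My plan is to realize the states and oracle essentially as sketched in the technical overview and to verify clonability directly, then rule out reconstruction through a three-stage reduction. Let $f : [N] \to [M]$ be a uniformly random function with $N \gg M$ (so $|f^{-1}(z)| \gg 1$ with overwhelming probability for each $z \in [M]$), and set
\[|\psi_z\rangle = \frac{1}{\sqrt{|f^{-1}(z)|}} \sum_{x : f(x) = z} |x\rangle.\]
The oracle $\mathcal{O}$ consists of unitary quantum query access to $f$ bundled with a cloning unitary $U_{\mathrm{clone}}$, defined by $U_{\mathrm{clone}}(|\psi_z\rangle|0\rangle) = |\psi_z\rangle|\psi_z\rangle$ on each valid $z$ and extended by the identity on the orthogonal complement. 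Worst-case efficient cloning is then immediate from a single query to $U_{\mathrm{clone}}$; the remainder of the plan is devoted to ruling out efficient reconstruction.

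Stage 1 removes $U_{\mathrm{clone}}$ entirely. I would show that no polynomial-query algorithm with $f$-access and arbitrary polynomial-length classical advice $a_z$ (allowed to depend adversarially on $f$ and $z$) can reconstruct $|\psi_z\rangle$ with non-negligible average fidelity. Indeed, running any such $\mathsf{Reconstruct}(a_z)$ independently $K$ times and measuring each output in the computational basis yields preimages $x_1,\ldots,x_K$ of $z$ that are pairwise distinct with overwhelming probability, producing a $K$-collision of $f$; for $K$ chosen above the quantum multi-collision threshold of Hamoudi--Magniez, this contradicts their query lower bound. The dependence of $a_z$ on $f$ is absorbed by a union bound, since there are only exponentially many polynomial-length advice strings while the entropy of a random $f$ is far larger.

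Stage 2 equips the receiver with a single-label cloning oracle $U_{\mathrm{clone}}^{(z)}$ acting as $U_{\mathrm{clone}}$ only on $|\psi_z\rangle|0\rangle$ and as the identity on the orthogonal complement. The intuition is that such an oracle is useless because exercising it already requires $|\psi_z\rangle$ to be present in a query register. I would make this precise using the query-magnitude framework of Bennett--Bernstein--Brassard--Vazirani recalled in the preliminaries: if the cumulative query magnitude on the $|\psi_z\rangle|0\rangle$ subspace across the run is non-negligible, truncating just before the offending query and projecting yields a Stage 1 reconstructor, whereas otherwise Theorem~\ref{thm:Ben+97_3.3} lets us replace $U_{\mathrm{clone}}^{(z)}$ by the identity at negligible $\ell_2$ cost, again reducing to Stage 1.

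The main obstacle will be Stage 3, where the receiver has full access to $U_{\mathrm{clone}}$. My plan is to compile any Stage 3 reconstructor into a Stage 2 one by having the Stage 2 simulator present an ``imposter'' cloning oracle: branches associated with the target label $z$ are forwarded to the given $U_{\mathrm{clone}}^{(z)}$, while branches associated with labels $z' \ne z$ are cloned by hand using a few direct queries to $f$. The difficulty is that $U_{\mathrm{clone}}$ is a globally defined, highly non-local unitary whose action on generic superpositions cannot be reproduced exactly from polynomially many $f$-queries, so the simulation is inherently imperfect. The hard part will be a careful query-magnitude bookkeeping argument showing that the cumulative $\ell_2$ deviation of the imposter from $U_{\mathrm{clone}}$ over polynomially many queries stays $o(1)$, because the branches on which the two differ must carry small query magnitude in any run whose final output has non-negligible overlap with $|\psi_z\rangle$. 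Once Stage 3 is ruled out, Theorem~\ref{thm:clonable_unreconstructible} follows.
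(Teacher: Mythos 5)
Your overall architecture --- three stages, the multi-collision lower bound of Hamoudi--Magniez for Stage 1, BBBV query magnitudes for Stage 2, and an ``imposter'' cloning oracle for Stage 3 --- matches the paper, and your Stages 1 and 2 are essentially the paper's argument (the paper guesses the advice string rather than union-bounding over all advice, and it must first conjugate the $z$-cloning oracle into a classical indicator-function oracle before the query-magnitude theorem can be applied, but these are presentational differences). One definitional slip: mapping $|\psi_z\rangle|0\rangle\mapsto|\psi_z\rangle|\psi_z\rangle$ and ``extending by the identity on the orthogonal complement'' is not unitary, since $|\psi_z\rangle|\psi_z\rangle$ lies in that complement and would then have two preimages; the paper adjoins a symbol $\bot$ orthogonal to the whole space and defines the cloner as the involution swapping $|\psi_z\rangle|\bot\rangle$ and $|\psi_z\rangle|\psi_z\rangle$.

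The genuine gap is in Stage 3. You propose that branches of a cloning query associated with labels $z'\ne z$ be ``cloned by hand using a few direct queries to $f$,'' with the residual error controlled by arguing that the branches where the simulation differs ``must carry small query magnitude.'' Neither half works. Cloning $|\psi_{z'}\rangle$ relative to the true random $f$ means synthesizing, in a fresh register, a uniform superposition over the exponentially large $f$-dependent set $f^{-1}(z')$; this is precisely the hard task and cannot be done even approximately with polynomially many queries. And the adversary is free to direct essentially all of its cloning-query weight at states $|\psi_{z'}\rangle$ with $z'\ne z$, which it can prepare cheaply by evaluating $f$ on a uniform superposition and measuring the image register --- so there is no reason the differing branches carry small query magnitude, and the argument that rescues Stage 2 has no analogue here. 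The paper's missing ingredient is to resample the oracle itself off the $z$-fiber: the simulator draws a fresh private function $H_{\text{private}}:\{0,1\}^m\to\{0,1\}^n\setminus\{z\}$ that it knows in full, and defines $H_{\text{impostor}}(x)=z$ when $H(x)=z$ and $H_{\text{impostor}}(x)=H_{\text{private}}(x)$ otherwise. The joint distribution of $(z,H_{\text{impostor}})$ is identical to that of $(z,H)$, so the substitution is undetectable, and the non-$z$ branches of the impostor cloning oracle can now be implemented with \emph{zero} queries because $H_{\text{private}}$ is fully known to the simulator. The only error is the small rotation between the preimage superpositions of $H_{\text{impostor}}$ and $H_{\text{private}}$ for images $i\ne z$ (the $H$-preimages of $z$ are redistributed by $H_{\text{private}}$), which the paper bounds by $O(\sqrt{n}\,2^{-n/2})$ per oracle application via Chernoff bounds on preimage counts and then accumulates over the polynomially many applications in a hybrid argument. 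Without this resampling idea, your Stage 3 compilation does not go through.
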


The rest of Section~\ref{sec:cloning-without-telegraphability} contains the proof of Theorem~\ref{thm:clonable_unreconstructible}.
In Section~\ref{sec:the-scheme}, we define the scheme, Scheme~\ref{scheme:clonable_unreconstructible}, and show that it is perfectly clonable.
In Section~\ref{sec:proof-of-unclonability}, we prove that the scheme cannot be efficiently reconstructed.

The form of our scheme is based on a set of states introduced by~\cite{rewinding} which take a uniform superposition over the preimages of a random oracle. These states cannot be cloned by query-efficient algorithms, so by Theorem~\ref{thm:telegraphing-implies-cloning} this directly implies that they are untelegraphable.\footnote{Note, however, that this does not imply that they are unreconstructable. Nevertheless, we show that this is the case in Proposition~\ref{prop:Stage1}.}
We want a scheme that is untelegraphable despite being clonable, so we add a cloning oracle, a quantum oracle that clones only this set of states. The main technical challenge is to show that access to this cloning oracle does not allow the adversaries to telegraph.

We start by showing that with just the random oracle, the states are not reconstructible, via a reduction from the problem of finding multi-collisions in the random oracle. We then show that allowing cloning for the target state cannot be detected by the adversary. We finally simulate the rest of the cloning oracle by replacing the random oracle with an impostor for which we know how to clone.

\subsection{The Scheme}
\label{sec:the-scheme}

Before we give the scheme, we first give a few definitions that are useful both for defining the scheme and for the proof of its unreconstructibility.

We first define a cloning oracle for orthonormal sets. This is an oracle that successfully clones a specific subset of basis states for a given basis.

\begin{definition}[Cloning oracle for a set] \label{def:cloning_oracle}
Let $\mathcal{H}$ be a Hilbert space and let $S = \{|\psi_i\rangle\}_{i\in[k]}$ be an orthonormal subset of $\mathcal{H}$. Augment $\mathcal{H}$ with a special symbol $\bot$ outside the support of $\mathcal H$. That is, $|\bot\rangle$ is orthogonal to all of $\mathcal{H}$.

A \textbf{cloning oracle} $\mathcal{C}_{S}$ on set $S = \{|\psi_i\rangle\}_{i\in[k]}$ is a quantum oracle that, for all $i \le k$ 
sends $|\psi_i\rangle |\bot\rangle$ to $|\psi_i\rangle |\psi_i \rangle$ and $|\psi_i\rangle |\psi_i \rangle$ to $|\psi_i\rangle |\bot \rangle$. For all other orthogonal states, it applies the identity.
That is, when the second register is $|\bot\rangle$, it clones any state in $S$ and leaves all other orthogonal states unmodified.
\end{definition}

\begin{definition}[Preimage superposition state]
\label{def:preimage-superposition-state}
Let $f:\{0,1\}^m \rightarrow \{0,1\}^n$. A \textbf{preimage superposition state} for image $z \in \{0,1\}^n$ in function $f$ is the quantum state that is the uniform positive superposition of preimages of $z$ in $f$:
$$|\psi_{z}\rangle = \frac{1}{\sqrt{|f^{-1}(z)|}} \sum_{x | f(x) = z} |x\rangle$$
where $f^{-1}(z) := \{x | f(x) = z\}$ is the set of preimages of $z$ in $f$.
\end{definition}

\begin{definition}[Preimage superposition set] \label{def:preimage-superposition-set}
Let $f:\{0,1\}^m \rightarrow \{0,1\}^n$. A \textbf{preimage superposition set for} $f$, $S_f$, is the set of preimage superposition states for all images in the range of~$f$.
$$S_f := \left\{ \frac{1}{\sqrt{|f^{-1}(z)|}} \sum_{x | f(x) = z} |x\rangle \; \Bigg| \; z \in \{0,1\}^n\right\}$$
\end{definition}

\begin{definition}[Cloning oracle relative to a function]
Let $f:\{0,1\}^m \rightarrow \{0,1\}^n$. A \textbf{cloning oracle relative to $f$}, 
$\mathcal{C}_{f}$,
is a cloning oracle for the preimage superposition set, $S_f$, of $f$.
\end{definition}

We now give the formal definition of the scheme:

\begin{scheme} \label{scheme:clonable_unreconstructible}
Let $H : \{0,1\}^m \rightarrow \{0,1\}^n$ be a random oracle, where $m \ge 2n$ (but bounded by a polynomial in $n$).
Let $\mathcal{C}_H$ be the cloning oracle relative to $H$. The scheme consists of the following:

\begin{itemize}
    \item[--] The collection of oracles is $\mathcal{O} := \{H, \mathcal{C}_H\}$.

    \item[--] The set of states is $S := S_H$, the preimage superposition set for $H$.

    \item[--] The distribution, $\mathcal{D}$, samples the image of a random domain element of $H$. That is, it returns $z \leftarrow H(x)$ for a uniformly random $x \in \{0,1\}^m$.
\end{itemize}
\end{scheme}

It is clear that the scheme presented here is perfectly clonable in the worst case by an efficient quantum oracle algorithm. Specifically, the cloning oracle, $\mathcal{C}_H$, provides that capability, and in a single oracle query. Therefore, it remains to show that no efficient quantum oracle algorithm can reconstruct it. This is the main technical challenge of our proof and takes up the remaining part of Section~\ref{sec:cloning-without-telegraphability}.%
\footnote{
As is evident from Scheme~\ref{scheme:clonable_unreconstructible}, we prove Theorems~\ref{thm:clonable_untelegraphable} and~\ref{thm:clonable_unreconstructible} relative to a quantum oracle (or rather, a pair of quantum oracles) sampled from a probability distribution rather than a fixed quantum oracle. However, as mentioned in the introduction to the paper, this is not a weakness, as a straightforward transformation similar to the proof of 
Theorem~\ref{thm:clonableqma-qcma-oracle-separation} 
allows fixing the randomness at the cost of the proof becoming non-constructive. Moreover, it is not necessary to show this explicitly outside of 
Theorem~\ref{thm:clonableqma-qcma-oracle-separation}, as it is already implied by the combination of
Theorems~\ref{thm:clonableqma-qcma-oracle-separation}~and~\ref{thm:complexity-separation-to-nogo-separation}.
}

\subsection{Proof of Unreconstructibility}
\label{sec:proof-of-unclonability}

We wish to prove that Scheme~\ref{scheme:clonable_unreconstructible} cannot be efficiently reconstructed by efficient quantum oracle algorithms in the average case.
We prove this in a sequence of three stages, beginning with a simplified version of the scheme without a cloning oracle, then moving to one with an oracle that can only clone a single state, and finally to the full scheme with the full cloning oracle.

\subsubsection{With No Cloning Oracle}

In the first stage, we consider an adversary, $R$, which is a quantum oracle algorithm with advice. $R$ is given a polynomial length advice string $a_z$, and is allowed a polynomial number of queries to the random oracle. It is tasked with producing a state that passes verification for $z$, namely the positive uniform superposition over all the preimages of $z$ in the random oracle.
Note that this first version does not yet have access to a cloning oracle of any sort.

\begin{proposition} \label{prop:Stage1}
Let $R$ be a quantum oracle algorithm that is given a classical advice string $a_z \in \{0,1\}^\ell$ for some polynomial $\ell$ in $n$, and makes $q$ queries to the random oracle, where $q$ is a polynomial in $n$. For $z \in \{0,1\}^n$ drawn uniformly at random, $R$ cannot output a quantum state that passes verification for $z$ with probability that is non-negligible in $n$.%
\footnote{
Note that the advice string, $a_z$, may in general contain \emph{any} information, including, for instance, any details about the set of preimages of $z$ in $H$, or any other useful information about the task. We show here that no polynomial amount of classical information \emph{of any kind} will allow $R$ to faithfully reconstruct the state.
}
\end{proposition}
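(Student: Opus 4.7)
My plan is to reduce any purported reconstruction adversary $R$ for Scheme~\ref{scheme:clonable_unreconstructible} to a quantum algorithm that finds $K$-way collisions in a random oracle, and then to invoke Hamoudi's multi-collision lower bound~\cite{Hamoudi2020-mz} to derive a contradiction. Suppose for contradiction that $R$, on input $z \leftarrow \mathcal{D}$ and classical advice $a_z \in \{0,1\}^\ell$ and making $q$ queries to $H$, produces a state $\rho_z$ satisfying $\langle \psi_z | \rho_z | \psi_z \rangle \ge \eta$ for some non-negligible $\eta = 1/\poly(n)$.

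The fidelity guarantee lets us sample uniformly from $H^{-1}(z)$. Consider the thought experiment of first applying the (non-uniform, query-free) projective measurement $\{|\psi_z\rangle\langle\psi_z|, \mathbb{I} - |\psi_z\rangle\langle\psi_z|\}$ to $\rho_z$ and then, conditional on the first outcome, measuring in the computational basis. The first outcome happens with probability exactly $\eta$, and conditional on it the subsequent computational-basis measurement returns a uniformly random element of $H^{-1}(z)$. Running $R(a_z)$ independently $K-1$ times on fresh ancillary registers and applying this measurement chain to each output, independence gives that with probability at least $\eta^{K-1}$ all $K-1$ measurements return uniform preimages of $z$; since $m \ge 2n$ forces $|H^{-1}(z)| \ge 2^n$ with overwhelming probability over random $H$, a birthday bound ensures these preimages are pairwise distinct up to $O(K^2/2^n)$ error.

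I then bundle this into a quantum oracle algorithm $A$ for the $K$-collision problem: sample $x_0 \leftarrow \{0,1\}^m$, set $z := H(x_0)$ with one query, and output $x_0$ together with the $K-1$ further preimages produced above. This $A$ makes $1 + (K-1)q$ queries and outputs a $K$-way collision of $H$ with probability $\Omega(\eta^{K-1})$; both the advice table $\{a_z\}_z$ and the verification projectors $\{|\psi_z\rangle\langle\psi_z|\}_z$ are hardwired non-uniformly into $A$. Applying Hamoudi's lower bound~\cite{Hamoudi2020-mz}, for any polynomial $K$, no polynomial-query quantum algorithm can find a $K$-way collision in a random oracle of range $2^n$ with non-negligible probability, which contradicts the $\Omega(\eta^{K-1}) = 1/\poly(n)$ success rate of $A$ and therefore refutes the assumed reconstruction.

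The main obstacle will be ensuring the multi-collision lower bound remains valid in the presence of the $z$-dependent non-uniform advice hardwired into $A$, since $\ell$ bits of advice could in principle encode up to $\Theta(\ell/m)$ preimages of $z$ \emph{for free}. I plan to address this by choosing the polynomial $K$ strictly larger than $\ell/m$, so that no advice of length $\ell$ can by itself supply a $K$-collision, and by invoking a preprocessing-robust version of the lower bound: because the total non-uniform information the reduction leaks about $H$ is $\ell \cdot 2^n$ bits, an exponentially small fraction of $H$'s full description, the compressed-oracle and polynomial-method techniques underpinning Hamoudi's bound degrade only by polynomial factors, preserving the contradiction.
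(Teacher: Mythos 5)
Your high-level plan---reduce reconstruction to finding multi-collisions in $H$ and invoke Theorem~\ref{thm:HM20_4.6}---is the same as the paper's, but two of your key mechanisms would fail, and the paper does both steps differently precisely to avoid these failures. First, your collision finder succeeds only when \emph{all} $K-1$ independent runs of $R$ pass the verification projector simultaneously, giving success probability $\eta^{K-1}$. For any super-constant $K$ and $\eta = 1/\poly(n)$ this is negligible, so there is nothing to contradict; and for constant $K$ the bound of Theorem~\ref{thm:HM20_4.6} contains the additive term $2^{-K}$, which is a constant, so again no contradiction is available (and constant $K$ cannot absorb the advice, see below). The paper instead runs $R$ a total of $O(K/\eta)$ times, observes that \emph{each} computational-basis measurement lands on a preimage outside any previously collected polynomial-size set with probability at least $\eta/2$ (this is where Lemma~\ref{lemma:composition} is used, with the projector appearing only in the \emph{analysis}), and applies a Chernoff bound to collect $K$ distinct preimages with \emph{constant} probability. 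Second, your treatment of the advice is where the reduction breaks down entirely: you hardwire the table $\{a_z\}_z$ and, worse, the verification projectors $\{\ket{\psi_z}\bra{\psi_z}\}_z$ into the collision finder. These projectors encode the exact preimage sets $H^{-1}(z)$, so your ``collision finder'' already contains all the collisions as non-uniform advice, and no lower bound---preprocessing-robust or otherwise---can apply to it. Your claim that $\ell\cdot 2^n$ bits of oracle-dependent advice degrades the compressed-oracle bound ``only by polynomial factors'' is false: already $2m\cdot 2^n$ bits suffice to store a collision for every image, making the problem trivial. The paper sidesteps this by having the reduction \emph{guess} the single $\ell$-bit string $a_z$ uniformly at random (a multiplicative loss of $2^{-\ell}$, applied to a constant baseline success probability, not to $\eta^{K-1}$), which yields an advice-free algorithm to which Theorem~\ref{thm:HM20_4.6} applies verbatim; choosing $K = 2n(\ell+1)$ then makes the resulting upper bound $2^{-\Omega(n(\ell+1))}$ strictly smaller than the achieved $\Omega(2^{-\ell})$, completing the contradiction. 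Note also that you cannot rescue your projector-first sampling step by implementing it with queries: projecting onto $\ket{\psi_z}$ is not query-implementable (only the much coarser projection onto $\mathrm{span}\{\ket{x} : H(x)=z\}$ is), which is why the paper works directly with the possibly non-uniform post-measurement distribution of $\rho_z$ and argues ``freshness outside any fixed polynomial set'' rather than uniformity.
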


\begin{proof}
The main idea is that if $R$ were able to produce the target state $|\psi_z\rangle$ with non-negligible probability, then it can also do so without the advice by guessing the advice string, albeit with significantly lower probability. Measuring $|\psi_z\rangle$ then gives a random preimage of the random oracle, and we can do this multiple times to produce several preimages of the same image $z$, producing a multi-collision for the random oracle, which is harder to do than this method would give.

We now give the proof. Suppose, for the sake of contradiction, that $R$, when given advice string $a_z$, makes $q$ queries to the random oracle and then produces the mixed state $\rho_z$ which passes verification for $z$ with non-negligible probability $\eta$ (that is, $\bra{\psi_z} \rho_z \ket{\psi_z} \ge \eta$). We use $R$ to produce a large number of disjoint collisions of the oracle.



Let $H^{-1}(z)$ be the set of preimages of $z$ in $H$. We have that with high probability, $|H^{-1}(z)| \ge \Omega(2^{m-n})$.
Let $\Gamma \subset H^{-1}(z)$ be an arbitrary polynomial sized subset of $H^{-1}(z)$, and let $\Pi$ be the binary projective measurement that projects onto the preimages of $z$ that are not in $\Gamma$, that is, onto the computational basis states $H^{-1}(z) \setminus \Gamma$.
We have that $\bra{\psi_z} \Pi \ket{\psi_z} \ge 1 - \epsilon$ for $\epsilon = \frac{|\Gamma|}{|H^{-1}(z)|} \in \negl(n)$. 
Given that $\bra{\psi_z} \rho_z \ket{\psi_z} \ge \eta$, we apply Lemma~\ref{lemma:composition} to get that $\tr(\Pi \rho_i) \ge \eta (1-\epsilon) - 2 \sqrt{\epsilon(1-\eta)} \ge \frac12 \eta$ for sufficiently large $n$. In other words, for any polynomial sized subset of preimages, and for sufficiently large $n$, we have that measuring $\rho_z$ will with non-negligible probability give a preimage of $z$ outside that subset.

Let $k$ be a sufficiently large polynomial in $n$, for instance let $k=2n(\ell+1)$ (note that $\ell$ is itself a positive integer bounded by a polynomial in $n$). We run $R$ repeatedly (on the same target label $z$ and advice $a_z$) a total of $8k/\eta$ times and measure the outcome in the computational basis, with the goal of producing at least $2k$ unique preimages of $z$. 
By a Chernoff bound, this then succeeds with constant probability $\Omega(1)$: that is, if $X$ is the number of valid unique preimages, $\Pr\left[X \le \frac12(4k) \right] \le e^{-4k/8} = e^{-n(\ell+1)} \le 1/2$. 
Finally, because every pair of unique preimages is a collision, this gives $k$ disjoint collisions of the random oracle.
That is, this process therefore produces $k$ disjoint collisions with constant probability $\Omega(1)$.

Now, if this process succeeds given the advice $a_z \in \{0,1\}^\ell$, then it can also succeed without being given advice, though with a much lower probability, by guessing the advice string with probability $2^{-\ell}$, for an overall success probability of at least $\Omega(2^{-\ell})$. 

To recap, this gives an quantum oracle algorithm for producing $k$ disjoint collisions of a random oracle which makes $t = 8k q/\eta$ oracle queries and succeeds with probability at least $\Omega(2^{-\ell})$.

On the other hand, we recall the following theorem from Hamoudi and Magniez~\cite{Hamoudi2020-mz}:
\begin{theorem}[Theorem 4.6 from~\cite{Hamoudi2020-mz}] \label{thm:HM20_4.6}
The success probability of finding $K$ disjoint collisions in a random function
$f : [M] \rightarrow [N]$ is at most $O(T^3/(K^2N))^{K/2} + 2^{-K}$ for any algorithm making $T$ quantum queries to $f$ and any $1 \le K \le N/8$.
\end{theorem}

Applying the bound from the above Theorem~\ref{thm:HM20_4.6} with $T=8k q/\eta$, $K=k$, $M=2^m$ and $N=2^n$, the success probability for this task must be at most
\begin{align*}
    O\left(\frac{T^3}{K^2N}\right)^{K/2} + 2^{-K} 
    &= O\left(\frac{(8k q/\eta)^3}{k^2 2^n}\right)^{k/2} + 2^{-k} \\
    &= O\left(\frac{k q^3}{\eta^3 2^n}\right)^{k/2} + 2^{-k} \\
    &\le 2^{-\Omega\left(k\right)} \\
    &\le 2^{-\Omega\left(n(\ell+1)\right)}
\end{align*}

There therefore exists a sufficiently large $n$, for which this is a contradiction. This completes the proof of Proposition~\ref{prop:Stage1}.
\end{proof}

\subsubsection{With a Limited Cloning Oracle}
In the second stage, we allow R access to a limited cloning oracle which can clone only the target state.


\begin{definition}\label{def:z-cloning-oracle}
Let $z$ be a label and $|\psi_z\rangle$ the corresponding quantum state from the scheme. A~\textbf{$z$-cloning oracle}, $\mathcal{C}_{z}$, is a cloning oracle for the singleton set $\{|\psi_z\rangle\}$.
\end{definition}

\begin{proposition}\label{prop:adding-the-z-cloning-oracle}
Let $R$ be a quantum oracle algorithm that is given a classical advice string $a_z \in \{0,1\}^\ell$ for some polynomial $\ell$ in $n$, and makes $q$ queries (where $q$ is a polynomial in $n$) to the random oracle \textbf{as well as a $z$-cloning oracle}.
Let $R'$ be a run of $R$ where queries to the $z$-cloning oracle are instead returned unmodified (or equivalently, passed to a dummy oracle which acts as the identity).
Then the total variation distance between the outcomes of the two runs is negligible in $n$.
\end{proposition}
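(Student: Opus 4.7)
The plan is to argue by contradiction via a BBBV-style hybrid argument, reducing to Proposition~\ref{prop:Stage1}. Suppose the total variation distance between the outputs of $R$ and $R'$ were at least some non-negligible $\delta(n)$. The key observation is that $\mathcal{C}_z$ and the identity differ only on the two-dimensional subspace spanned by $|\psi_z\rangle|\bot\rangle$ and $|\psi_z\rangle|\psi_z\rangle$ on the cloner's input registers; call the projector onto this subspace (extended trivially to every other register) $\Pi_{\text{bad}}$. Let $|\phi'_k\rangle$ be the state of $R'$ just before the $k$-th point at which $R$ would have queried the $z$-cloning oracle, and define the ``bad query magnitude'' $b'_k := \|\Pi_{\text{bad}}|\phi'_k\rangle\|^2$.

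First I would set up the hybrid. Let $H_j$ denote the execution of $R$'s circuit that uses identity in place of $\mathcal{C}_z$ for the first $j$ cloner calls and $\mathcal{C}_z$ thereafter, so that $H_0 = R$ and $H_q = R'$. The hybrids $H_{k-1}$ and $H_k$ differ only at cloner call $k$, where the state they both act on is exactly $|\phi'_k\rangle$. Since $\mathcal{C}_z - I$ vanishes on the orthogonal complement of $\Pi_{\text{bad}}$, we have $\|(\mathcal{C}_z - I)|\phi'_k\rangle\| \le 2\sqrt{b'_k}$. The triangle inequality over the $q$ hybrid steps combined with Cauchy--Schwarz then yields that the two final states differ in Euclidean distance by at most $2\sqrt{q\sum_k b'_k}$. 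Combined with Theorem~\ref{thm:Ben+97_3.1}, our standing assumption forces $\sum_k b'_k \ge \Omega(\delta^2 / q)$, which is non-negligible in $n$.

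Next, I would construct an algorithm $R''$ contradicting Proposition~\ref{prop:Stage1}. $R''$ is given the same advice $a_z$ and makes queries only to the random oracle $H$. It samples $k^* \leftarrow \{1,\ldots,q\}$ uniformly, simulates $R'$ up to but not including the $k^*$-th would-be cloner call (this requires no cloner access, since $R'$ replaces each such call by the identity), and then outputs the first register of the would-be cloner input. The verification success probability of this output equals $\langle \phi'_{k^*} | \bigl(|\psi_z\rangle\langle \psi_z| \otimes I\bigr) | \phi'_{k^*} \rangle$, which is at least $b'_{k^*}$ since $\Pi_{\text{bad}}$ makes further restrictions on the cloner's second register. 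Averaging over $k^*$, the expected success probability of $R''$ is at least $(1/q)\sum_k b'_k$, which by the previous paragraph is non-negligible. But $R''$ is given only polynomial-length advice $a_z$ and makes at most $q$ queries to $H$, directly contradicting Proposition~\ref{prop:Stage1}.

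The main obstacle is that Theorem~\ref{thm:Ben+97_3.3} as stated is phrased in terms of classical oracles indexed by query strings, whereas $\mathcal{C}_z$ is a general unitary oracle. I would sidestep this by running the hybrid argument directly, as outlined above, rather than invoking Theorem~\ref{thm:Ben+97_3.3} as a black box; the only ingredient needed is the elementary operator bound $\|(\mathcal{C}_z - I)(I - \Pi_{\text{bad}})\| = 0$, which follows immediately from Definition~\ref{def:cloning_oracle} and makes the necessary generalization essentially cosmetic.
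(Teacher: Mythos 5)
Your proof is correct, and it rests on the same two pillars as the paper's: (i) the $z$-cloning oracle differs from the identity only on states whose first cloner register is $|\psi_z\rangle$, and (ii) the mass that $R'$ places there at any would-be cloner call is precisely a reconstruction success probability, which Proposition~\ref{prop:Stage1} forces to be negligible. The execution differs in two ways. First, where the paper performs an explicit change of basis $D$ so that $\mathcal{C}_z$ becomes a classical indicator oracle and Theorem~\ref{thm:Ben+97_3.3} applies verbatim, you re-derive the hybrid bound directly from the operator identity $(\mathcal{C}_z - I)(I - \Pi_{\text{bad}}) = 0$ together with $\|\mathcal{C}_z - I\| \le 2$; this is more self-contained and, as you note, makes the unitary-versus-classical oracle mismatch a non-issue rather than something to be engineered away. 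Second, the paper argues directly: it defines a truncated algorithm $R'_t$ for each $t$, concludes each query magnitude $\eta'_t$ is negligible, and sums, whereas you argue by contradiction with a uniformly random stopping point $k^*$, so that non-negligible total bad mass yields a single advice-taking, $q$-query algorithm with non-negligible average success. Your contrapositive form is marginally cleaner on the technicality of summing $q(n)$ negligible quantities, at the cost of an extra factor of $q$ in the final bound, which is immaterial. Your observation that $\Pi_{\text{bad}} \preceq |\psi_z\rangle\langle\psi_z| \otimes I$ plays exactly the role of the paper's computation that the query magnitude on $|z\rangle$ after the basis change equals $\langle\psi_z|\rho'_t|\psi_z\rangle$, and both routes land on the same quantitative conclusion, a total variation distance of $O\bigl(\sqrt{q\sum_k b'_k}\bigr)$.
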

\begin{proof}
The idea is that since the $z$-cloning oracle and the dummy oracle differ only on the basis states where the first register is $|\psi_z\rangle$,
if $R$ puts low query weight on those basis states, then swapping between the oracles can only make minimal difference.

We now give the proof. Consider an adversary $R$ which, when given advice string $a_z$, makes 
$k$
queries to the random oracle and 
$q$ 
queries to the $z$-cloning oracle. 
Let $R'$ be a quantum oracle algorithm which simulates a run of $R$ in which the $z$-cloning oracle is replaced by a dummy oracle (an oracle which acts as the identity on all states) by ignoring all of the $z$-cloning oracle queries (equivalent to performing the identity on each one).

For each $t \in [q]$, let $R'_t$ be a version of $R'$
in which the simulation is stopped prematurely at cloning query number $t$ and which then outputs the first register of the query input. 
Let $\rho'_t$ be the reduced density matrix of the outputted state.
Because the runs of $R'$ and $R'_t$ are identical up until $R'_t$ stops and outputs cloning query number $t$, $\rho'_t$ is also the reduced state of the first query register when $R'$ requests query number $t$.
Let $\eta'_t := \langle\psi_z|\rho'_t|\psi_z\rangle$ be the probability that $\rho'_t$ would pass verification for $z$.
As above, since each $R'_t$ is a quantum oracle algorithm with advice that satisfies the conditions of Proposition~\ref{prop:Stage1}, all the 
$\eta'_t$ must be negligible in $n$.



Choose a basis $\{|\chi_i\rangle\}_{i\in [0,\dim{(\mathcal{H})}]}$ for $\mathcal{H}' := \mathcal{H}\oplus|\bot\rangle$ as in Definition~\ref{def:cloning_oracle}, that includes $|\chi_0\rangle := |\bot\rangle$ and $|\chi_z\rangle := |\psi_z\rangle$ as two basis elements. 
Let $D$ be a unitary from this basis into the computational basis that sends $|\psi_z\rangle|\bot\rangle$ to $|z\rangle|0\rangle$ and $|\psi_z\rangle|\psi_z\rangle$ to $|z\rangle|1\rangle$ and which arbitrarily assigns all other orthogonal states to computational basis states.
(For example, let $B = \sum_{i} |i\rangle\langle \chi_i|$, let $C = \sum_{ij \notin \{(z,1), (z,z)\}} |i\rangle|j\rangle\langle i|\langle j| + |z\rangle|1\rangle\langle z|\langle z| + |z\rangle|z\rangle\langle z|\langle 1|$, and let $D = C \cdot (B\otimes B)$.)

In this basis, both the $z$-cloning oracle and the dummy oracle can be expressed as applications of binary classical functions on all but the last bit. The $z$-cloning oracle becomes an application of the classical indicator function for the string $(z,0^{m-1})$: $f_{=z}(x) = \begin{cases}1 & x = (z,0^{m-1}) \\ 0 & \text{otherwise}\end{cases}$, and the dummy cloning oracle becomes an application of the all-zero function $f_{\emptyset}(x) = 0$.
Let $\mathcal{O}_{=z}$ be the unitary application of the indicator function, $f_{=z}(x)$ above, which XORs the result into the last bit. Then the $z$-cloning oracle can be expressed as $\mathcal{C}_{z} =  D^\dagger \mathcal O_{=z} D$, and the dummy oracle can be expressed as $\mathcal{C}_{\emptyset} =  D^\dagger \mathcal{O}_{\text{identity}} D = D^\dagger I D = I$.

The algorithms $R$, $R'$, and $\{R'_t\}_{t\in[q]}$ can therefore be reformulated as quantum oracle algorithms that direct cloning queries to the classical oracles $\mathcal{O}_{=z}$ in the case of $R$, or $\mathcal{O}_{\text{identity}}$ in the case of $R'$ and $\{R'_t\}_{t\in[q]}$. That is, before they make a cloning query, they apply the change of basis $D$ into the computational basis. They then query $\mathcal{O}_{=z}$ or $\mathcal{O}_{\text{identity}}$, and then apply the change of basis $D^\dagger$ back to the original basis. Call the versions of $R$, $R'$, and $\{R'_t\}_{t\in[q]}$ in this new basis $\mathcal R$, $\mathcal R'$, and $\{\mathcal R'_t\}_{t\in[q]}$. 

Note that the only difference between $\mathcal R$ and $\mathcal R'$ is that cloning queries to $\mathcal{O}_{=z}$ in $\mathcal R$ are redirected to $\mathcal{O}_{\text{identity}}$ in $\mathcal R'$. Furthermore, $\mathcal{O}_{=z}$ and $\mathcal{O}_{\text{identity}}$ only differ on inputs where the first register is $|z\rangle$.

We now therefore compute the query magnitude of cloning queries of $\mathcal R'$ on $|z\rangle$. The state of the first register of cloning query number $t$ of $\mathcal R'$ to $\mathcal{O}_{\text{identity}}$ is given by 
$B \rho'_t B^\dagger =  \sum_{i,j} |i\rangle\langle \chi_i| \; \rho'_t \; |\chi_j\rangle\langle j|$.
The query magnitude on $|z\rangle$ is then $\langle z| \left( \sum_{i,j} |i\rangle\langle \chi_i| \; \rho'_t \; |\chi_j\rangle\langle j| \right) |z\rangle = \langle \chi_z| \rho'_t |\chi_z\rangle  = \langle \psi_z| \rho'_t |\psi_z\rangle = \eta'_t$.

We now apply Theorem~\ref{thm:Ben+97_3.3} on the set $F := \{(i,y) \; | \; i\in [q], y = z\}$ and $\varepsilon := \sqrt{T \sum_{t=1}^{T} \eta'_t}$, with $T := q$.
The sum of the query magnitudes of $\mathcal R'$ on $F$ is then $\sum_{t=1}^{T} \eta'_t \le \frac{\varepsilon^2}{T}$. 
Let $|\phi\rangle$ and $|\phi'\rangle$ be the states outputted by $\mathcal R$ and $\mathcal R'$ respectively (and therefore also by $R$ and $R'$ respectively).
Since $\mathcal R$ is identical to $\mathcal R'$, with the only difference being that the cloning oracle queries are modified on the set $F$, then by Theorem~\ref{thm:Ben+97_3.3}, $\big| |\phi\rangle - |\phi'\rangle \big| \le \varepsilon$. 
By Theorem~\ref{thm:Ben+97_3.1}, then, the total variation distance between runs of $R$ and $R'$ is therefore at most $4\varepsilon = 4\sqrt{q \sum_{t=1}^{q} \eta'_t}$, which is negligible, since all the $\eta'_t$ are negligible and $q$ is a polynomial.
\end{proof}

\begin{corollary} \label{cor:Stage2}
Let $R$ be a quantum oracle algorithm that is given a classical advice string $a_z \in \{0,1\}^\ell$ for some polynomial $\ell$ in $n$, and makes $q$ queries (where $q$ is a polynomial in $n$) to the random oracle \textbf{as well as a $z$-cloning oracle}. For $z \in \{0,1\}^n$ drawn uniformly at random, $R$ cannot output a quantum state that passes verification for $z$ with probability that is non-negligible in $n$.
\end{corollary}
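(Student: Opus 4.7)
The plan is to derive Corollary~\ref{cor:Stage2} as a direct consequence of Proposition~\ref{prop:adding-the-z-cloning-oracle} combined with Proposition~\ref{prop:Stage1}. The idea is that Proposition~\ref{prop:adding-the-z-cloning-oracle} has already done the heavy lifting: it tells us that replacing the $z$-cloning oracle with a dummy (identity) oracle can only change the distribution over measurement outcomes by a negligible amount. So a $z$-cloning-aided reconstructor cannot do noticeably better than a reconstructor that ignores its $z$-cloning oracle, and the latter falls squarely under Proposition~\ref{prop:Stage1}.

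Concretely, let $R$ be as in the statement and let $R'$ be the algorithm obtained from $R$ by replacing every query to $\mathcal{C}_z$ by the identity (i.e., leaving the query registers untouched). Then $R'$ is a quantum oracle algorithm with classical advice $a_z$ of polynomial length that makes only $q$ queries to the random oracle $H$; in particular, it satisfies the hypotheses of Proposition~\ref{prop:Stage1}. Hence for a uniformly random $z\in\{0,1\}^n$, the probability that the output of $R'$ passes verification for $z$ is negligible in $n$.

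Next I invoke Proposition~\ref{prop:adding-the-z-cloning-oracle}, which guarantees that the total variation distance between the outputs of $R$ and $R'$ is negligible in $n$. Since the verification measurement for label $z$ is a fixed binary projective measurement $\{|\psi_z\rangle\langle\psi_z|,\, I - |\psi_z\rangle\langle\psi_z|\}$, applying it to the outputs of $R$ and $R'$ produces two binary distributions whose variation distance is also bounded by the variation distance between the outputs themselves. Therefore the verification success probability of $R$ differs from that of $R'$ by at most a negligible quantity. Averaging over uniformly random $z$ preserves this bound, so $R$ succeeds with at most negligible probability.

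There is no real obstacle here beyond being careful about the quantification: the negligible functions appearing in Propositions~\ref{prop:Stage1} and~\ref{prop:adding-the-z-cloning-oracle} depend only on $n$ (and on the polynomials $\ell$ and $q$), so their sum is still negligible, and the argument goes through uniformly in $z$ for $z$ sampled uniformly at random. Thus Corollary~\ref{cor:Stage2} follows.
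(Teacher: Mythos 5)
Your proposal is correct and follows essentially the same route as the paper: define $R'$ by replacing $z$-cloning queries with the identity, apply Proposition~\ref{prop:Stage1} to bound its success probability, and then use the negligible total variation bound of Proposition~\ref{prop:adding-the-z-cloning-oracle} to transfer that bound back to $R$. The extra care you take about the verification measurement and the uniformity of the negligible functions is consistent with (and slightly more explicit than) the paper's argument.
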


\begin{proof}
Suppose that $R$, when given advice string $a_z$, makes $k$ queries to the random oracle and $q$ 
queries to the $z$-cloning oracle, and then produces a state $\rho_z$ which passes verification for $z$ with probability $\eta$.
As in Proposition~\ref{prop:adding-the-z-cloning-oracle}, let $R'$ be a run of $R$ in which queries to the $z$-cloning oracle are returned unmodified.
$R'$ is then a quantum oracle algorithm with advice that satisfies the conditions of Proposition~\ref{prop:Stage1}, so it must have negligible success probability $\eta'$.
By Proposition~\ref{prop:adding-the-z-cloning-oracle}, the total variation distance between runs of $R$ and $R'$ is negligible in $n$ so $\eta$ is at most negligibly larger than $\eta'$, and thus negligible as well.
\end{proof}

\subsubsection{With the Full Cloning Oracle}
In the third stage, we finally allow $R$ access to the full cloning oracle, which clones all valid states of the scheme while doing nothing for invalid states.

\begin{proposition} \label{prop:Stage3}
Let $R$ be a quantum oracle algorithm that is given a classical advice string $a_z \in \{0,1\}^\ell$ for some polynomial $\ell$ in $n$, and makes $q$ queries (where $q$ is a polynomial in $n$) to the random oracle \textbf{and a full cloning oracle} for the set of valid states. For $z \in \{0,1\}^n$ drawn uniformly at random, $R$ cannot output a quantum state that passes verification for $z$ with probability that is non-negligible in $n$.
\end{proposition}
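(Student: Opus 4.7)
I argue by contradiction. Suppose that there exists an adversary $R$ satisfying the hypotheses of the proposition with non-negligible reconstruction probability $\eta$ in $n$. My plan is to construct a Stage~2 adversary $\tilde R$, having oracle access only to the random oracle $H$ and the $z$-cloning oracle $\mathcal{C}_z$, that reconstructs $|\psi_z\rangle$ with non-negligible probability, contradicting Corollary~\ref{cor:Stage2}. The adversary $\tilde R$ will internally run $R$ against a pair of impostor oracles $(\tilde H, \widehat{\mathcal{C}})$ designed so that (i)~$|\psi^{\tilde H}_z\rangle = |\psi^H_z\rangle$, so that the $\mathcal{C}_z$ oracle is applicable inside the simulation and the state produced by $R$ can be output directly by $\tilde R$; and (ii)~the pair $(\tilde H, \widehat{\mathcal{C}})$ is close enough to $(H, \mathcal{C}_H)$ in the view of a query-bounded algorithm that $R$'s reconstruction probability is preserved up to a negligible loss.

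The impostor $\tilde H$ is built from an auxiliary random permutation $\pi$ on $\{0,1\}^m$ that $\tilde R$ samples upfront. To answer each $\tilde H$-query of $R$, $\tilde R$ coherently queries its own $H$ oracle to test whether $H(x)=z$, sets $\tilde H(x) := z$ in that case and $\tilde H(x) := \pi(x)|_{\mathrm{first}\,n\,\mathrm{bits}}$ (with a small correction to avoid the value $z$) otherwise, and then uncomputes the intermediate $H$-register in the standard way. This guarantees $\tilde H^{-1}(z) = H^{-1}(z)$, so the target state is unchanged, while for each $w \neq z$ the set $\tilde H^{-1}(w)$ acquires a combinatorial structure derived from $\pi$ that lets $\tilde R$ coherently prepare, and therefore clone, $|\psi^{\tilde H}_w\rangle$ internally from $\pi$ and $\pi^{-1}$. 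The simulated cloning oracle $\widehat{\mathcal{C}}$ is then a hybrid: decomposing the first register of a cloning query into a $|\psi_z\rangle$-component and its orthogonal complement, $\widehat{\mathcal{C}}$ invokes $\mathcal{C}_z$ on the former and applies the self-contained $\pi$-based cloning unitary on the latter.

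The main obstacle and hardest part of the proof is showing that this simulation is faithful enough to preserve $R$'s non-negligible success probability. There are two sources of error to control. The first is the discrepancy between $\tilde H$ and $H$ on the complement of $H^{-1}(z)$; since the projection of a random permutation is statistically close to a truly random oracle against polynomial-query algorithms, this contributes at most a negligible shift in $R$'s behaviour. The second and more delicate source of error is that $\widehat{\mathcal{C}}$ only exactly implements $\mathcal{C}_{\tilde H}$ on inputs cleanly expressed in the impostor basis $\{|\psi^{\tilde H}_w\rangle\}_w$; on off-basis components the hybrid's action may differ from the identity that $\mathcal{C}_{\tilde H}$ applies. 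I would bound this second error by adapting the query-magnitude argument of Theorem~\ref{thm:Ben+97_3.3} in the style of Proposition~\ref{prop:adding-the-z-cloning-oracle}, combined with the Stage~1 and Stage~2 bounds (Proposition~\ref{prop:Stage1} and Corollary~\ref{cor:Stage2}) applied to the intermediate state of $R$ just before each of its cloning queries, to show that the off-basis amplitude at each query must be negligible. Summing these contributions over $R$'s polynomially many cloning queries then yields that the output distribution of the simulation is negligibly close to that of $R$ on the true oracles, so $\tilde R$ inherits non-negligible success probability and the contradiction is complete.
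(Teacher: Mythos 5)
Your high-level architecture matches the paper's: keep $H^{-1}(z)$ intact, replace the rest of the oracle with a privately sampled structure the simulator can clone by itself, delegate the $z$-part to $\mathcal{C}_z$, and reduce to the Stage~2 bound (Corollary~\ref{cor:Stage2}). However, there is a genuine gap in your analysis of the second error source, and it sits exactly where the real technical work of Stage~3 lives. You propose to control the discrepancy between the simulated cloning oracle and the true one by showing, via Theorem~\ref{thm:Ben+97_3.3} together with the Stage~1/Stage~2 hardness results, that $R$'s ``off-basis amplitude'' at each cloning query is negligible. This cannot work: the region on which the hybrid cloner and the true cloner disagree is not a hard-to-reach subspace. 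For each $w\neq z$, the true oracle clones the state $|\psi_w\rangle$ built from $\tilde H^{-1}(w)$, while your $\pi$-based cloner clones the slightly different state built from $\pi^{-1}(w)$, which still contains the (unknown to the simulator) elements of $H^{-1}(z)$ that $\pi$ happens to send to $w$. The two oracles therefore act differently on $|\psi_w\rangle$ itself, and $R$ can trivially place essentially all of its query amplitude on such states (that is literally how the scheme samples its states), so no query-magnitude or hardness argument can make that amplitude small; moreover Theorem~\ref{thm:Ben+97_3.3} requires the two oracles to agree \emph{exactly} outside the low-magnitude set, which fails here. What actually saves the simulation is that the two oracles differ by conjugation with a near-identity rotation whose angle in the plane spanned by the two versions of $|\psi_w\rangle$ is $O(\sqrt{k_{z\to w}/k_w})=O(\sqrt{n}\,2^{-n/2})$; establishing this requires the Chernoff-bound analysis of the preimage counts (the content of Claim~\ref{claim:Stage3-preimage-ratio} and the operator $\mathcal{U}_3$), followed by an operator-norm hybrid over the polynomially many oracle locations via Theorem~\ref{thm:Ben+97_3.1}. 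Your proposal contains no counting of $k_{z\to w}$ versus $k_w$ and no per-plane norm bound, so the central step is missing.

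Two secondary points. First, your impostor $\tilde H$ built from a truncated random permutation is only \emph{query-indistinguishable} from a random function (it is regular away from $H^{-1}(z)$), so preserving $R$'s success probability requires an additional, unproved indistinguishability lemma; the paper avoids this entirely by resampling the non-$z$ images from $\{0,1\}^n\setminus\{z\}$ so that the joint distribution of $(z,H_{\text{impostor}})$ is \emph{exactly} that of $(z,H)$ --- a cleaner move you should adopt, which also removes the fiddly and unspecified ``correction to avoid $z$.'' Second, your hybrid cloner cannot literally ``decompose the first register into a $|\psi_z\rangle$-component,'' since the simulator cannot project onto a state it does not know; the implementable decomposition is onto the computational-basis subspace $\{|x\rangle: H(x)=z\}$ (one coherent $H$-query into an ancilla), on which $\mathcal{C}_z$ happens to agree exactly with the full cloning oracle.
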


\begin{proof}
Note that in showing this, we are demonstrating that the ability to clone other valid states does not help it produce the target state. 
The idea is use $R$ to produce a new adversary $R'$ which queries just the $z$-cloning oracle with comparable success.
Ideally we would take a $z$-cloning oracle and simply simulate the rest of the cloning oracle (for states other than the target state) by using the random oracle. 
However, such a simulation would require a large number of queries to the random oracle and thus be highly inefficient. 
We get around this issue by creating an imposter random oracle and simulating cloning queries relative to it rather than relative to the original random oracle. We must show first that the impostor random oracle is indistinguishable from the original random oracle, and second that it is possible to approximately simulate cloning queries to the impostor oracle.

We now give the proof. Consider an adversary $R$ which, when given advice string $a_z$, makes 
$q$
queries to the random oracle and the full cloning oracle, and then with probability $\eta$ produces a state $|\psi_z\rangle$ which passes verification for $z$.
We use $R$ to produce a similar algorithm, $R'$, which only makes cloning queries to the $z$-cloning oracle, and which must succeed with comparable probability. We produce $R'$ as follows:

We first sample a private random function, $H_{\text{private}}: \{0,1\}^m \rightarrow \{0,1\}^n \setminus \{z\}$, which has a limited codomain such that it does not output $z$.
That is, for each input, independently choose a uniformly random element of $\{0,1\}^n \setminus \{z\}$.

We then create an impostor random oracle, $H_{\text{impostor}}: \{0,1\}^m \rightarrow \{0,1\}^n$, by combining the original and private random oracles in the following way:
\begin{align*}
    H_{\text{impostor}}(x) = \begin{cases}
    z & H(x) = z \\
    H_{\text{private}}(x) & \text{otherwise}
    \end{cases}
\end{align*}

That is, on query input $x$, if $x$ is a preimage of $z$ in $H$, it passes the query to the original random oracle, producing $z$, but otherwise passes it to the newly sampled private random oracle.

We also create a cloning oracle relative to this impostor random oracle, $\mathcal C_{\text{impostor}}$.
This \textit{impostor cloning oracle} clones the states that are valid for the impostor random oracle, which will in general be different than the set of valid states of the original random oracle. We claim that the impostor oracles perfectly mimic the originals.

\begin{claim}
    The joint distribution of target image $z$ and the impostor random oracle
    $H_{\text{impostor}}$ is identical to that of $z$ and $H$.
    That is, $H_{\text{impostor}}$ is distributed as a uniform random oracle conditioned on $z$ being one of its images.
\end{claim}
\begin{proof}
To show this, we begin giving an equivalent lazy method of sampling the random oracle $H$, along with sampling the target image $z$.

First, we choose a random element $x^* \in \{0,1\}^m$ in the domain of $H$.
We then randomly choose $z \in \{0,1\}^n$ as both its image in $H$ and as the target image. Then, for each of the remaining elements of the domain of $H$, sample a random image from its range.

We now describe a similar method for lazily sampling the impostor random oracle $H_{\text{impostor}}$, along with the target image $z$.

As before, we choose a random element $x^* \in \{0,1\}^m$ in the domain, and a random image $z \in \{0,1\}^n$ as both its image in $H_{\text{impostor}}$ and as the target image. For each remaining element, $x$, of the domain, we first sample a random image $y$. If $y \ne z$, resample an independent sample $y'$ from $\{0,1\}^n \setminus \{z\}$ to be the image of $x$.
Since, conditioned on $y \ne z$, $y$ is uniform on $\{0,1\}^n \setminus \{z\}$, and so is $y'$, the resampled image $y'$ is identically distributed to the original $y$. The extra resampling performed to sample $H_{\text{impostor}}$ thus has no effect on the distribution, so this process produces a distribution identical to the one above for sampling $H$ and $z$.
\end{proof}

As a consequence, no quantum oracle algorithm can tell the difference between query access to the original oracles $H$ and $\mathcal C_H$, and query access to the impostor oracles $H_{\text{impostor}}$ and $\mathcal C_{\text{impostor}}$. That is, an algorithm $R''$ which simulates $R$ and redirects its oracle queries to the impostor oracles will succeed with the same probability $\eta$.

This completes the first part, showing that the impostor oracles are perfect replacements for the original oracles. It now remains to show that the impostor oracles can be simulated efficiently in terms of the number of queries to the original random oracle $H$ and a $z$-cloning oracle $\mathcal{C}_z$.

Note that implementing $\mathcal{C}_{\text{impostor}}$ using $H$ and $\mathcal C_z$ may be query inefficient. We therefore create a new efficient impostor cloning oracle $\widehat{\mathcal{C}}_{\text{impostor}}$, which for each query only makes a constant number of queries to $H$ and $\mathcal C_z$, but which nevertheless performs nearly as well as the inefficient $\mathcal{C}_{\text{impostor}}$.

We would like to define $\widehat{\mathcal{C}}_{\text{impostor}}$ by saying that it acts on computational basis states approximately as
$$\widehat{\mathcal{C}}_{\text{impostor}} |x\rangle |y\rangle = \begin{cases}
\mathcal{C}_{z} |x\rangle |y\rangle & H(x) = z \\
\mathcal{C}_{\text{private}} |x\rangle |y\rangle & \text{otherwise}
\end{cases}$$
However, in reality, this is not unitary, since the resulting states will not be exactly orthogonal. 
We instead define it with an additional ancilla qubit as follows:

Define the following two unitaries acting on an ancilla qubit $|b\rangle$ as well as the two input registers (of the cloning oracle).

\begin{align*}
\mathcal{U}_1 |b\rangle |x \rangle |y\rangle &= \begin{cases}
|b \oplus 1\rangle |x\rangle |y\rangle & H(x) = z \\
|b\rangle |x\rangle |y\rangle & \text{otherwise}
\end{cases}\\
\mathcal{U}_2 |b\rangle |x \rangle |y\rangle &= \begin{cases}
|b\rangle \otimes \mathcal{C}_{z} |x\rangle |y\rangle & b = 1 \\
|b\rangle \otimes \mathcal{C}_{\text{impostor}} |x\rangle |y\rangle & \text{otherwise}
\end{cases}
\end{align*}

The action of $\mathcal{C}_{\text{impostor}}$ with an extra ancilla qubit can be expressed as 
$I \otimes \mathcal{C}_{\text{impostor}} \ket{0}\ket{x}\ket{y} = \mathcal{U}_1\mathcal{U}_2\mathcal{U}_1 \ket{0}\ket{x}\ket{y}$. 
That is, after applying $\mathcal{U}_1$, then $\mathcal{U}_2$ acts as $I \otimes \mathcal{C}_{\text{impostor}}$ because whenever $H(x) = z$, then $\mathcal{C}_{z} |x\rangle |y\rangle = \mathcal{C}_{H} |x\rangle |y\rangle = \mathcal{C}_{\text{impostor}} |x\rangle |y\rangle$. Furthermore, for any $x \in \{0,1\}^m$, the support of the state $\mathcal{C}_{\text{impostor}} |x\rangle |y\rangle$ is only on computational basis states $|x'\rangle |y'\rangle$ such that $H(x') = z \Leftrightarrow H(x) = z$, which implies that the second application of 
$\mathcal{U}_1$ properly uncomputes its own action on the ancilla qubit.

We now define a modified version of $\mathcal{U}_2$, but which makes no use of $\mathcal{C}_{\text{impostor}}$, and instead uses $\mathcal{C}_{\text{private}}$, the cloning oracle relative to $H_{\text{private}}$:

\begin{align*}
\widehat{\mathcal{U}}_2 |b\rangle |x \rangle |y\rangle &= \begin{cases}
|b\rangle \otimes \mathcal{C}_{z} |x\rangle |y\rangle & b = 1 \\
|b\rangle \otimes \mathcal{C}_{\text{private}} |x\rangle |y\rangle & \text{otherwise}
\end{cases}
\end{align*}

We thus define $\widehat{\mathcal{C}}_{\text{impostor}} = \mathcal{U}_1\widehat{\mathcal{U}}_2\mathcal{U}_1$, 
which we note makes two queries to $H$ and one query\footnotemark{} to $\mathcal C_z$ on each application
(note that $\mathcal{C}_{\text{private}}$ uses no oracle queries as it can be simulated directly using the private random function $H_{\text{private}}$). 
It remains to show that, when the ancilla qubit is initialized to $\ket{0}$, $\widehat{\mathcal{C}}_{\text{impostor}}$ cannot be distinguished from $I \otimes \mathcal{C}_{\text{impostor}}$.
That is, we show that it is a good efficient approximation for $\mathcal{C}_{\text{impostor}}$.
\footnotetext{Note that it is straightforward to implement a controlled version of $\mathcal{C}_z$ using a single query to $\mathcal{C}_z$, as it is for any oracle for which a fixed state, on which it acts as the identity, is known. In this case, the fixed state is $|\bot\rangle|\bot\rangle$. To do so, we prepare the state $|\bot\rangle|\bot\rangle$ in an ancilla register. We then apply a 0-controlled SWAP gate between this register and the input register on which $\mathcal{C}_z$ acts, once before and then then once again after applying $\mathcal{C}_z$. If the control is a 0, then the fixed state $|\bot\rangle|\bot\rangle$ is swapped in, neutralizing the application of the oracle. If the control is a 1, then nothing is swapped and the oracle acts as expected.}

We observe that the actions of $I \otimes \mathcal{C}_{\text{impostor}}$ and $\widehat{\mathcal{C}}_{\text{impostor}}$ 
differ only in whether they apply $\mathcal{C}_{\text{impostor}}$ or $\mathcal{C}_{\text{private}}$ 
(in $\mathcal U_2$ and $\widehat{\mathcal U}_2$ respectively) 
on the two non-ancilla registers, 
and only on basis states for which the first of those registers is not a preimage of $z$.
In fact they differ only by a change of basis between a basis that includes the preimage superposition set of $H_{\text{private}}$ and one that includes the preimage superposition set of $H_{\text{impostor}}$.

Taking a closer look at $H_{\text{impostor}}$ and $H_{\text{private}}$, the only difference between the functions is that the domain elements that are preimages of the target image $z$ in $H_{\text{impostor}}$ are reassigned to another random image in $H_{\text{private}}$. Moreover, since the difference we observe in this setting is only for domain elements that do not map to $z$ in $H$ (and thus in $H_{\text{impostor}}$), we can set aside $z$ in the analysis and focus on the other images.

Let $|\psi_i\rangle$ and $|\widehat\psi_i\rangle$ be the respective preimage superposition states of $H_{\text{impostor}}$ and $H_{\text{private}}$ for image $i \in \{0,1\}^n\setminus\{z\}$. 
Let $\theta_i := \cos^{-1}\left(\langle \psi_i | \widehat\psi_i \rangle\right)$ be the small angle between them.
Further, let $|\psi_{z\rightarrow i}\rangle$ be the equal positive superposition over any preimages of the target image, $z$, in $H_{\text{impostor}}$ that were reassigned to image $i$ in $H_{\text{private}}$. 
Then we can write $|\widehat\psi_i\rangle =  \cos (\theta_i) |\psi_i\rangle + \sin (\theta_i) |\psi_{z\rightarrow i}\rangle$.

Note that for all $i \ne j$, $\langle\psi_i|\widehat\psi_j\rangle = \cos (\theta_j) \langle\psi_i|\psi_j\rangle + \sin (\theta_j) \langle\psi_i|\psi_{z\rightarrow j}\rangle = 0$ because the supports of the states (that is, their sets of preimages) are disjoint (where note again that we exclude the target image $z$ here). And of course, each of the preimage superposition sets is orthogonal within the set: $\langle\psi_i|\psi_j\rangle = \langle\widehat\psi_i|\widehat\psi_j\rangle = 0 \quad \forall i\ne j$.

We can therefore partition the Hilbert space into $2^n - 1$ orthogonal planes, each of which is spanned by a $|\psi_i\rangle$ and its corresponding $|\widehat\psi_i\rangle$ (or $|\psi_{z\rightarrow i}\rangle$), as well as a remaining space orthogonal to all those planes. 
With this perspective, the change of basis that differentiates between $\mathcal{C}_{\text{impostor}}$ and $\widehat{\mathcal{C}}_{\text{impostor}}$ can be described as a small rotation of angle $\theta_i$ in each of these planes and the identity in the remaining space.
\begin{align*}
    \mathcal U_3 := I 
    &- \sum_i \big( | \psi_i \rangle\langle \psi_i | + | \psi_{z\rightarrow i} \rangle\langle \psi_{z\rightarrow i} | \big) \\
    &+ \sum_i \big(\cos (\theta_i) |\psi_i\rangle + \sin (\theta_i) |\psi_{z\rightarrow i}\rangle \big)  \langle\psi_i| + \big(-\sin (\theta_i) |\psi_i\rangle + \cos (\theta_i) |\psi_{z\rightarrow i}\rangle \big) \langle\psi_{z\rightarrow i}|
\end{align*}

Then, 
$$I \otimes \mathcal{C}_{\text{impostor}} \ket{0}\ket{x}\ket{y} = \mathcal{U}_1\mathcal{U}_2\mathcal{U}_1 \ket{0}\ket{x}\ket{y} \; \text{ and } \;
\widehat{\mathcal{C}}_{\text{impostor}} = \mathcal{U}_1 (\mathcal{U}^{\dagger}_3 \otimes \mathcal{U}^{\dagger}_3) \mathcal{U}_2 (\mathcal{U}_3 \otimes \mathcal{U}_3) \mathcal{U}_1$$ 

It therefore suffices to show that $\mathcal{U}_3$ cannot be distinguished from the identity except with negligible advantage.
Specifically, we want to show that the eigenvalues of $I - \mathcal{U}_3$ are all negligible.
That's because if the magnitudes of all the eigenvalues of $I - \mathcal{U}_3$ are bounded from above by a negligible function $\varepsilon$,
then given any quantum state $|\phi\rangle$ before the application of $\mathcal{U}_3$ or $I$, 
and any subsequent transformation,
we have that the resulting Euclidean distance is $\big\||\phi\rangle - \mathcal{U}_3 |\phi\rangle\big\| = \big\| (I - \mathcal{U}_3) |\phi\rangle\big\| \le \varepsilon$, 
and thus by Theorem~\ref{thm:Ben+97_3.1}, when replacing $I$ with $\mathcal{U}_3$, the probability of success can change by at most $4\varepsilon$.

Since $I - \mathcal{U}_3$ acts independently on and maintains the $2^n - 1$ orthogonal planes, it suffices to look at each plane individually. Specifically, its non-zero eigenvalues come in pairs of magnitude
\begin{align*}
    |\lambda_i| &= |1-e^{\pm \textbf{i} \theta_i}| \\
    &= |1-\cos(\theta_i) \mp \textbf{i}\sin(\theta_i)| \\
    &= \sqrt{(1-\cos(\theta_i))^2 + \sin^2(\theta_i)} \\
    &= \sqrt{2(1 -  \cos(\theta_i))}  \\
    &= \sqrt{2\left(1 -  \langle \psi_i | \widehat\psi_i \rangle\right)}
\end{align*}



In order to further break this down, let $k_i$ be the number of preimages of $i$ in 
$H_{\text{impostor}}$ and let $k_{z \rightarrow i}$ be the number of preimages of the target image, $z$, in $H_{\text{impostor}}$ that were reassigned to image~$i$ in $H_{\text{private}}$. We evaluate the inner product as
\begin{align*}
    \langle \psi_i | \widehat\psi_i \rangle 
    &= \left( \frac{1}{\sqrt{k_i}} \sum_{x | H_{\text{impostor}}(x) = i} \langle x | \right)
    \left( \frac{1}{\sqrt{k_i + k_{z \rightarrow i}}} \sum_{x | H_{\text{private}}(x) = i} | x \rangle \right)
    \\
    &= \sqrt{\frac{k_i}{k_i + k_{z \rightarrow i}}}
    = \sqrt{1 - \frac{k_{z \rightarrow i}}{k_i + k_{z \rightarrow i}}}
    \ge 1 - \frac{k_{z \rightarrow i}}{k_i + k_{z \rightarrow i}}
\end{align*}
which gives
\begin{align*}
    |\lambda_i| 
    &= \sqrt{2\left(1 - \langle \psi_i | \widehat\psi_i \rangle\right)} 
    \le \sqrt{\frac{2 k_{z \rightarrow i}}{k_i + k_{z \rightarrow i}}}
\end{align*}

The following claim frames this bound in terms of $n$.

\begin{claim} \label{claim:Stage3-preimage-ratio}
    With overwhelming probability in the choice of $H$ and $H_{\text{private}}$, for all $i \in \{0,1\}^n\setminus\{z\}$,
    $$\frac{k_{z \rightarrow i}}{k_i + k_{z \rightarrow i}} \le 72n \cdot 2^{-n}$$
\end{claim}


\begin{proof}
We show that the following all happen with overwhelming probability:
\begin{enumerate}[a)]
    \item for all $i \in \{0,1\}^n\setminus\{z\}$, \; $k_i > \frac12 \cdot 2^{m-n}$
    \item $\frac{1}{2} \cdot 2^{m-n} < k_z < 3 \cdot 2^{m-n}$
    \item for all $i \in \{0,1\}^n\setminus\{z\}$, \; $k_{z \rightarrow i} < 36n \cdot 2^{m-2n}$
\end{enumerate}

First we show that with overwhelming probability, for all $i \in \{0,1\}^n\setminus\{z\}$, \; $k_i > \frac12 \cdot 2^{m-n}$.
The expected number of preimages of any image $i$ is 
$\mathbb E[k_i] = 2^{m-n}$.
By a Chernoff bound,
$P[k_i \le \frac12 (2^{m-n})] \le e^{-\frac18 \cdot 2^{m-n}}$ for any particular image $i$.
By a union bound over the $2^n - 1$ images, 
the probability that for any $i$, 
$k_i \le \frac12 (2^{m-n})$, is at most 
$2^n \cdot e^{-\frac18 \cdot 2^{m-n}} \le e^{-(\frac18 \cdot 2^{m-n}-n)}$, 
which is negligible in $n$ as we have that $m \ge 2n$.

We next bound the number of preimages of the target image $z$. Specifically, we show that $\frac{1}{2} \cdot 2^{m-n} < k_z < 3 \cdot 2^{m-n}$.
The lower bound is identical to the one above for the other $k_i$'s.
The upper bound is given by another Chernoff bound as 
$P[k_z \ge 3 (2^{m-n})] \le e^{-2^{m-n}}$, which is likewise negligible in $n$.

Finally, we bound the number of preimages of $z$ in $H_{\text{impostor}}$ that can be mapped to any one $i$ in $H_{\text{impostor}}$. 
Specifically, we show that for all $i \in \{0,1\}^n\setminus\{z\}$, \; $k_{z \rightarrow i} < 36n \cdot 2^{m-2n}$.
Since we just showed that with overwhelming probability, $z$ has at least $\frac{1}{2} \cdot 2^{m-n}$ and at most $3 \cdot 2^{m-n}$ preimages, 
the expected number of these preimages distributed to each of the $2^n - 1$ other images is bounded by
$\frac12 \cdot 2^{m-2n} < \mathbb E[k_{z \rightarrow i}] < 6 \cdot 2^{m-2n}$.
By a Chernoff bound,
$P[k_{z \rightarrow i} \ge 6n (6 \cdot 2^{m-2n})] \le e^{- \frac{25n^2}{2+5n} \cdot \frac12 \cdot 2^{m-2n}} \le e^{- \frac32 n \cdot 2^{m-2n}}$ for any particular image $i$.
As before, by a union bound over the $2^n - 1$ images, 
the probability that for any $i$, 
$k_{z \rightarrow i} \ge 36n \cdot 2^{m-2n}$ is at most 
$2^n \cdot e^{- \frac32 n \cdot 2^{m-2n}} \le e^{- (\frac32 n \cdot 2^{m-2n} - n)}$
which is negligible in $n$ as $m \ge 2n$.

Putting these three things together, by a union bound over the three above events, with all but a negligible probability in $n$, for all $i$,
\begin{align*}
    \frac{k_{z \rightarrow i}}{k_i + k_{z \rightarrow i}} \le \frac{36n \cdot 2^{m-2n}}{\frac12 \cdot 2^{m-n}} = 72n \cdot 2^{-n}
\end{align*}
\end{proof}

We therefore get an upper bound of $\varepsilon: = 12\sqrt{n} \cdot 2^{-n/2}$ on the eigenvalues of $I - \mathcal{U}_3$, which is negligible in $n$, and therefore, as shown above, an upper bound of $4\varepsilon$ on the change in success probability incurred by replacing $I$ with $\mathcal{U}_3$.

We now use a standard hybrid argument over the at most $4q$ locations where $\mathcal{U}_3$ might appear. 
We start with $R''$, for which all such locations have the identity, and for which the success probability is the original success probability of $R$, 
namely $\eta$. 
One at a time, we insert a $\mathcal{U}_3$ at each location, each time incurring a loss of at most $4\varepsilon$ in the success probability. 
With all $4q$ applications of $\mathcal{U}_3$, 
we therefore get a success probability $\eta'$ of at least $\eta - 16q\varepsilon - \gamma$ (where $\gamma$ is an additional negligible loss from the negligible chance that the sampled $H$ and $H_{\text{private}}$ are not covered by Claim~\ref{claim:Stage3-preimage-ratio}).

We therefore construct $R'$ in this way as a quantum oracle algorithm with advice with query access to the original random oracle $H$ and a $z$-cloning oracle $\mathcal C_z$. 
It simulates $R$ and redirects its oracles queries: 
Whenever $R$ makes a random oracle query, it redirects the query to its own simulated $H_{\text{impostor}}$, which makes at most a single query to $H$. 
Whenever $R$ makes a cloning oracle query, it redirects the query to its $\widehat{\mathcal{C}}_{\text{impostor}}$, 
which makes at most one query to $\mathcal C_z$ and two to $H$.
$R'$ thus satisfies the conditions of Corollary~\ref{cor:Stage2}, so its success probability $\eta' \ge \eta - 16q\varepsilon - \gamma$ must be negligible. 
Therefore, $\eta$, the success probability of $R$, must be negligible, thus completing the proof of Proposition~\ref{prop:Stage3}, and as a consequence, completing the proof of our main theorems, Theorem~\ref{thm:clonable_unreconstructible} and Theorem~\ref{thm:clonable_untelegraphable}.
\end{proof}


\section{Implications for Complexity Theory}
\label{sec:complexity}

We now present an application of clonable-untelegraphable states to the study of complexity theory.
While there may be a number of possible connections to quantum complexity theory%
, we focus on one that is of particular interest, which is to the computational no-go properties of efficiently verifiable quantum proofs. 
The longstanding open problem of whether the complexity classes $\QCMA$ and $\QMA$ are equal~\cite{aharonov2002quantum} asks whether classical proofs are just as powerful as quantum proofs in the setting of efficient quantum verification. 
Here, we investigate the power of quantum proofs which are not quite classical, but also not fully quantum, and are rather quantum states that violate some specific computational no-go property.

We first demonstrate that violating the efficient versions of either of the no-telegraphing or no-reconstruction properties makes the resulting complexity class equivalent to $\QCMA$, in which the proofs are classical strings. On the other hand, we show that this is not likely to be the case for the class $\clonableQMA$, in which the proofs are quantum states that are efficiently clonable. 
We justify this by giving a quantum oracle relative to which $\clonableQMA$ is not contained in $\QCMA$. We hope to inspire further investigation into the power of such quantum proofs.
Moreover, an in-depth understanding of the relative power of these complexity classes is important for constructing the cryptographic applications presented in Section~\ref{sec:cryptography}.

\subsection{Classical vs. Quantum Witnesses}

Recall the definitions of $\QCMA$ and $\QMA$ (Definitions~\ref{def:QCMA} and~\ref{def:QMA} respectively).
Note that the only difference between these two classes is the format of their witnesses: $\QMA$ allows any polynomial-sized quantum state as a witness, while $\QCMA$ restricts witnesses to be classical strings, or equivalently, restricts them to be in the computational basis.
It is evident that $\QCMA \subseteq \QMA$~\cite{aharonov2002quantum}.%
\footnote{\label{footnote:soundess-against-quantum-witnesses}%
As observed in~\cite{aharonov2002quantum}, the soundness condition of $\QCMA$ in Definition~\ref{def:QCMA} can be replaced with the one of $\QMA$ in Definition~\ref{def:QMA} against general quantum purported witnesses without any effect on the class. That's because the verifier can always force a quantum purported witness to be classical by measuring it in the computational basis. In other words, $\QCMA$ is also sound against \emph{quantum} witnesses.}
That is, the power of the class $\QMA$ is made no greater, and possibly weaker, by restricting its witnesses to be classical. Whether or not these two complexity classes are in fact equal has been a major open problem in quantum complexity theory since it was first posed in~\cite{aharonov2002quantum} over two decades ago. A sequence of works has shown increasingly strong oracle separations between the two classes, beginning with quantum oracle separations~\cite{aaronson2007quantum, fefferman2018quantum}, and most recently, separations by classical distributional oracles~\cite{natarajan2022classical, li2023classical}, but no separation relative to a standard classical oracle is yet known. Since both classes contain $\MA$ and $\NP$ and are contained in $\PP$ and therefore $\PSPACE$~\cite{marriott2004quantum}, a separation in the standard model (without reference to oracles) would imply separations which are not thought to be possible with existing techniques. The overriding question is nevertheless easy to phrase: \emph{Are classical witnesses as powerful as quantum witnesses in the context of efficient verification?}

In this section, we make progress on this question in a new direction: inspired by the new concept of clonable-untelegraphable states, we introduce a new complexity class, $\clonableQMA$, which sits in-between $\QCMA$ and $\QMA$, and we motivate the conjecture that it is not equal to either. This comes from considering weaker restrictions on the witnesses of $\QMA$. 
Instead of allowing the witnesses to be fully quantum as in $\QMA$ or restricting them to be fully classical as in $\QCMA$, we require them to violate specific computational no-go properties. We show that restricting the witnesses of $\QMA$ to be either efficiently reconstructable or efficiently telegraphable collapses the resulting class down to $\QCMA$. In other words, $\QCMA$ can be given an equivalent definition as the class $\QMA$ with efficiently reconstructable or efficiently telegraphable quantum witnesses. On the other hand, restricting the witnesses to be efficiently clonable does not have the same effect. In fact, as a consequence of the proof of our black-box separation between efficiently clonable and efficiently telegraphable quantum states, we give a quantum oracle black-box separation between $\QCMA$ and the new class, $\clonableQMA$, of $\QMA$ problems with efficiently clonable witnesses. 
Moreover, we argue without a formal proof that $\clonableQMA$ is not likely to equal $\QMA$ either, as this would imply the unlikely consequence of all $\QMA$-complete problems having efficiently clonable witnesses, which could prove to be a significant barrier to public-key quantum money. The class $\clonableQMA$ may therefore be a new complexity class standing strictly in-between $\QCMA$ and $\QMA$. We end Section~\ref{sec:complexity} by giving a candidate oracle-free problem in $\clonableQMA$ which may separate it from $\QCMA$, and we show that any such problem immediately yields back a set of states that is clonable but not efficiently telegraphable.

\subsection{Computational No-go Properties of Quantum Witnesses}

To motivate the discussion that follows, we start by giving a definition of $\QMA$ with efficiently reconstructable quantum witnesses, and then show that it is in fact an alternate definition of $\QCMA$. 

\begin{definition}[alternative definition of $\QCMA$ in terms of efficiently reconstructable witnesses]\label{def:QCMA'}
A decision problem $\mathcal{L} = (\mathcal{L}_{\YES}, \mathcal{L}_{\NO})$ is in $\QCMA'(c, f, s)$ if there exists a polynomial time quantum verifier $V$, a polynomial time quantum reconstructor $R$, and a polynomial $p$, such that
\begin{itemize}
    \item \textbf{Completeness:} if $x \in \mathcal{L}_{\YES}$, then there exists a \textbf{quantum} witness $\ket \psi$ on $p(|x|)$ qubits such that $V$ accepts on input $\ket x \ket \psi$ with probability at least $c$, and such that\\
    \textbf{Reconstruction Fidelity:} for this same $\ket\psi$, there exists classical advice string\footnote{As in the definition of Reconstruction (Definition \ref{def:reconstruction}) this advice string is trusted and instance-dependent. That is, we require that for this reconstructor $R$, and witness state $\ket{\psi}$, such a string \emph{exists} which allows $R$ to produce $\ket{\psi}$.} $a \in \bits^{p(|x|)}$ such that $R(a)$ succeeds at reconstructing $\ket \psi \bra \psi$ with fidelity at least $f$. \\
    That is, $\bra \psi R(a) \ket \psi \ge f$.
    \item \textbf{Soundness:} if $x \in \mathcal{L}_{\NO}$, then for all \textbf{quantum} states $\ket{\psi^*}$ on $p(|x|)$ qubits, $V$ accepts on input $\ket{x}\ket{\psi^*}$ with probability at most $s$.
\end{itemize}
\end{definition}

\begin{remark}\label{remark:qcma-sound-against-quantum}
Note that we only require the collection of valid witnesses to be efficiently reconstructable. That is, while the collection of witnesses for $\YES$-instances must be efficiently reconstructable, on the other hand, as with the standard definition of $\QCMA$
(see Footnote~\ref{footnote:soundess-against-quantum-witnesses} above), 
the class is sound against \textbf{any} quantum witness.
\end{remark}

\begin{theorem}\label{thm:qcma-definition-equivalence}
$\QCMA'(\frac9{10}, \frac9{10}, \frac1{10}) = \QCMA$.
That is, this definition of $\QCMA$ in terms of efficiently reconstructable witnesses is equivalent to the definition of $\QCMA$ in Definition~\ref{def:QCMA} in terms of classical witnesses, and describes the same class of decision problems.
\end{theorem}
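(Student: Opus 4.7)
The plan is to establish both containments: $\QCMA \subseteq \QCMA'(\tfrac9{10}, \tfrac9{10}, \tfrac1{10})$ and $\QCMA'(\tfrac9{10}, \tfrac9{10}, \tfrac1{10}) \subseteq \QCMA$.

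The first direction is immediate. Given a $\QCMA$ verifier accepting classical witness $w \in \bits^{p(|x|)}$, I would take the quantum witness to be the computational basis state $\ket{w}$ and let the reconstructor $R$ simply output $\ket{w}$ on classical advice $a := w$, achieving fidelity $1$. Completeness and reconstruction fidelity are then inherited directly. For soundness, I invoke Footnote~\ref{footnote:soundess-against-quantum-witnesses}: the $\QCMA$ verifier can be made sound against arbitrary quantum purported witnesses by pre-measuring in the computational basis, so the $\QCMA'$ soundness condition is met.

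For the other direction, the key idea is that the classical advice $a$ for the reconstructor can itself serve as the $\QCMA$ witness. Given a problem in $\QCMA'(\tfrac9{10}, \tfrac9{10}, \tfrac1{10})$ with verifier $V$ and reconstructor $R$, I would build a $\QCMA$ verifier $V'$ that, on classical input $a$, first runs $R(a)$ to produce a mixed state $\rho := R(a)$, then runs $V$ on $\ket{x} \otimes \rho$ and accepts iff $V$ accepts. Both $R$ and $V$ are polynomial time, so $V'$ is polynomial time, and $a$ is a polynomial length classical witness. Soundness is immediate and unchanged: since Definition~\ref{def:QCMA'} requires soundness against \emph{every} quantum state $\ket{\psi^*}$, the state $\rho$ obtained from any purported witness $a^*$ is accepted by $V$ with probability at most $\tfrac1{10}$.

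The main quantitative step is completeness. For $x \in \mathcal{L}_\YES$, there exist $\ket{\psi}$ and $a$ with $\bra{\psi} \Pi_V \ket{\psi} \geq \tfrac9{10}$ (where $\Pi_V$ is the accept projector of $V$ on input $\ket{x}\otimes\cdot$) and $\bra{\psi} \rho \ket{\psi} \geq \tfrac9{10}$. I would then bound the acceptance probability of $V'$ on $a$, namely $\tr(\Pi_V \rho)$, via the trace distance: since $F(\ket{\psi}\bra{\psi}, \rho)^2 = \bra{\psi}\rho\ket{\psi} \geq \tfrac9{10}$, the Fuchs--van de Graaf inequality gives $D(\ket{\psi}\bra{\psi}, \rho) \leq \sqrt{1 - \tfrac9{10}} = \sqrt{1/10}$, so
\[
\tr(\Pi_V \rho) \;\geq\; \bra{\psi}\Pi_V\ket{\psi} - D(\ket{\psi}\bra{\psi}, \rho) \;\geq\; \tfrac{9}{10} - \tfrac{1}{\sqrt{10}}.
\]
This is a constant strictly greater than $\tfrac1{10}$, so $V'$ has a constant completeness--soundness gap.

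The only remaining step is to amplify this gap to reach the canonical $\QCMA = \QCMA(\tfrac9{10}, \tfrac1{10})$ parameters by standard parallel repetition on classical witnesses, which works because the witness is classical and can be reused independently across repetitions (as the paper already remarks in the paragraph following Definition~\ref{def:QMA}). There is no real obstacle here; the mild subtlety is just to confirm that soundness in $\QCMA'$ is stated against \emph{all} quantum states (not only against images of $R$), which is exactly what Remark~\ref{remark:qcma-sound-against-quantum} emphasizes, making the soundness transfer free of charge.
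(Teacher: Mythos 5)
Your proposal is correct and follows essentially the same route as the paper: the identical constructions in both directions (embedding the classical witness as a basis state with a trivial reconstructor, and composing the reconstructor with the verifier so that the advice string becomes the classical witness), followed by gap amplification. The only difference is the tool for the completeness bound --- you use Fuchs--van de Graaf plus trace distance to get $\tfrac{9}{10} - \tfrac{1}{\sqrt{10}}$, whereas the paper invokes its Lemma~\ref{lemma:composition} to get $0.61$; both are constants well above the soundness $\tfrac{1}{10}$, so either suffices.
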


\begin{proof}
The main idea is that classical witnesses are themselves efficiently reconstructable, and any efficiently reconstructable witness can be given as a classical witness instead. For completeness, we include a formal proof as follows.
For this proof, let the class described by Definition~\ref{def:QCMA'} be called $\QCMA'$ to distinguish it from that described by Definition~\ref{def:QCMA}.

\paragraph{$\QCMA \subseteq \QCMA'(\frac9{10}, \frac9{10}, \frac1{10})$:}
Let $\mathcal{L}$ be a decision problem in $\QCMA$. 
We have that there exists polynomial time quantum verifier $V$ which has completeness $\frac{9}{10}$ and soundness $\frac{1}{10}$.
Let $V'$ be the verifier that projects the witness onto the computational basis and then passes the result to $V$, and let $R'$ be the trivial reconstructor $R'(a) = \ket{a}\bra{a}$.
We show that $V'$ and $R'$ are a valid verifier and reconstructor pair for $\mathcal{L}$ in $\QCMA'$.

If $x \in \mathcal{L}_{\YES}$, then the guarantee of $\QCMA$ is that there exists a classical witness $w$ that causes $V$ to succeed with probability at least $\frac{9}{10}$. Therefore, let $\ket{w}$, the computational basis state corresponding to $w$, be the quantum witness for the $\QCMA'$ verifier $V'$. Furthermore, let $a=w$ be the advice for the reconstructor $R'$. 
Since $\ket w$ is already in the computational basis, we see that $V'$ accepts on input $\ket{x}\ket{w}$ with probability $\frac{9}{10}$, and furthermore, $R'(a) = R'(w) = \ket{w}\bra{w}$ with fidelity $1$.

If $x \in \mathcal{L}_{\NO}$, then for every classical string $w^*$, $V$ accepts the witness $\ket{w^*}$ with probability at most $\frac{1}{10}$, so $V'$, which first projects onto the computational basis, accepts any quantum witness $\ket {\psi^*}$ with the same probability.

Both $V'$ and $R'$ run in polynomial time, and the witness and advice string are the same length as the original witness. So $\mathcal{L} \in \QCMA'(\frac9{10}, \frac9{10}, \frac1{10})$.

\paragraph{$\QCMA'(\frac9{10}, \frac9{10}, \frac1{10}) \subseteq \QCMA$:}
Let $\mathcal{L}$ be a decision problem in $\QCMA'$. Then there exists a polynomial time quantum verifier $V'$ and a polynomial time reconstructor $R'$. Let $V$ be the $\QCMA$ verifier given by the composition of $V'$ and $R'$. Specifically, $V$ first passes its classical witness to the reconstructor $R'$, and then passes the result of that as a quantum witness to $V'$ (that is, $V_x(a) = V'_x(R'(a))$).
We show that $V$ is a valid verifier for $\mathcal{L}$ in $\QCMA$.

If $x \in \mathcal{L}_{\YES}$, then there exists a quantum witness $\ket{\psi}$ such that $V'$ accepts it with probability at least $\frac{9}{10}$, and there exists a classical advice string $a$ such that $\bra \psi R'(a) \ket \psi \ge \frac{9}{10}$. Without loss of generality, $V'$ can be seen as a projective measurement which succeeds with probability at least $\frac{9}{10}$.
Then by Lemma~\ref{lemma:composition}, $V$, which passes $a$ to $R'$ and the result of that to $V'$, accepts on input $x$ and witness $a$ with probability at least $\left(\frac{9}{10}\right)^2 - 2 \sqrt{\frac{1}{10^2}} = 0.61$.

If $x \in \mathcal{L}_{\NO}$, then for all quantum witnesses $\ket{\psi^*}$, $V'$ accepts with on input $\ket{x}\ket{\psi^*}$ probability at most $\frac{1}{10}$. Since $V$ produces the output of $V'$ on some quantum state, or a probabilistic mixture of quantum states, then $V$ will likewise only ever accept with probability at most $\frac{1}{10}$.

Since we have a constant soundness-completeness gap (between soundness $0.10$ and completeness $0.61$), we can amplify the gap by parallel repetition for $\QCMA$ to get the completeness and soundness required of Definition~\ref{def:QCMA}. $V$ runs in polynomial time, and the witness $a$ is the same length as the reconstruction advice. So $\mathcal{L} \in \QCMA$.

Combining these, we get that $\QCMA'(\frac9{10}, \frac9{10}, \frac1{10}) = \QCMA$.
\end{proof}

We define $\QMA$ with efficiently \emph{telegraphable} witnesses in the same way as in Definition~\ref{def:QCMA'}, but with both a polynomial time quantum deconstructor $D$ as well as the reconstructor $R$, which must together succeed at telegraphing the witness $\ket\psi$ with fidelity at least~$f$.

\begin{corollary}
$\QMA$ with telegraphable witnesses is also equal to $\QCMA$.
\end{corollary}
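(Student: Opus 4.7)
The plan is to mirror the proof of Theorem~\ref{thm:qcma-definition-equivalence} almost verbatim, substituting telegraphing for reconstruction and invoking Theorem~\ref{thm:telegraphing-implies-reconstruction} at the key step. Concretely, define the class $\QMA''(c,f,s)$ identically to $\QCMA'(c,f,s)$ in Definition~\ref{def:QCMA'}, except that the single reconstructor $R$ is replaced by a pair $(D,R)$ of polynomial-time quantum algorithms, where $D(\ket\psi)\to c$ produces a polynomial-length classical string and the completeness condition demands $\bra\psi \mathbb E_c[R(c)\otimes\mathbf 1_{c\sim D(\ket\psi)}]\ket\psi \ge f$. As before, soundness is required against arbitrary quantum purported witnesses (cf.\ Remark~\ref{remark:qcma-sound-against-quantum}). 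The goal is to show $\QMA''(\tfrac{9}{10},\tfrac{9}{10},\tfrac{1}{10}) = \QCMA$.

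The first containment, $\QCMA \subseteq \QMA''(\tfrac{9}{10},\tfrac{9}{10},\tfrac{1}{10})$, is immediate from the proof of Theorem~\ref{thm:qcma-definition-equivalence}: a classical witness $w$ is trivially telegraphable via the identity deconstructor and the reconstructor $R(a) = \ket{a}\bra{a}$, both succeeding perfectly. For the other direction, $\QMA''(\tfrac{9}{10},\tfrac{9}{10},\tfrac{1}{10}) \subseteq \QCMA$, I would invoke Theorem~\ref{thm:telegraphing-implies-reconstruction} directly on the witness-telegraphing task: for each $\YES$-instance $x$ with telegraphable quantum witness $\ket\psi$, there exists a particular classical string $c^*_x$ in the support of $D(\ket\psi)$ such that $R(c^*_x)$ has fidelity at least $\tfrac{9}{10}$ with $\ket\psi$ (the same averaging argument used in the proof of Theorem~\ref{thm:telegraphing-implies-reconstruction}). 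Therefore the pair $(D,R)$ yields, per $\YES$-instance, a valid reconstruction advice string, i.e., the problem lies in $\QCMA'(\tfrac{9}{10}, \tfrac{9}{10}, \tfrac{1}{10}) = \QCMA$ by Theorem~\ref{thm:qcma-definition-equivalence}.

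The soundness side is entirely inherited: a $\QMA''$ verifier is sound against arbitrary quantum purported witnesses by definition, which is exactly what a $\QCMA$ verifier needs when one composes it with any (possibly adversarially chosen) classical advice fed into the reconstructor $R$, since $R(a)$ is just some quantum state. The only mildly nontrivial point is the constant-factor loss in completeness incurred when one composes the reconstructor with the $\QMA$ verifier, but this is handled exactly as in the $\QCMA'\subseteq\QCMA$ direction of Theorem~\ref{thm:qcma-definition-equivalence}, by applying the cited composition lemma and then amplifying via $\QCMA$ parallel repetition to restore the $(\tfrac{9}{10},\tfrac{1}{10})$ gap.

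I do not anticipate a genuine obstacle: the hard technical work has already been absorbed into Theorem~\ref{thm:telegraphing-implies-reconstruction} (the sender-to-advice conversion) and Theorem~\ref{thm:qcma-definition-equivalence} (the $\QCMA$ collapse for reconstructable witnesses). The mildest care point is simply to state the definition of $\QMA''$ precisely so that ``telegraphable witness'' means telegraphing of the specific valid witness used in the completeness clause, not of an arbitrary witness family, so that Theorem~\ref{thm:telegraphing-implies-reconstruction} applies instance-by-instance. With that in place, the corollary follows in a short chain: $\QCMA \subseteq \QMA''_{\text{telegraphable}} \subseteq \QCMA'_{\text{reconstructable}} = \QCMA$.
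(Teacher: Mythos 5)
Your proposal is correct and follows essentially the same route as the paper: the paper likewise observes that telegraphable witnesses are reconstructable with at least the same fidelity (via Theorem~\ref{thm:telegraphing-implies-reconstruction}), concludes containment in $\QCMA$ from Theorem~\ref{thm:qcma-definition-equivalence}, and gets the reverse containment from the trivial telegraphability of classical witnesses. Your write-up merely makes the intermediate class definition and the averaging step more explicit than the paper's two-line argument.
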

\begin{proof}
From Theorem~\ref{thm:telegraphing-implies-reconstruction}, we know that telegraphable witnesses are specifically also reconstructable, and with at least the same fidelity. From Theorem~\ref{thm:qcma-definition-equivalence}, we know that this makes this class a subset of $\QCMA$. In the other direction, classical witnesses are trivially telegraphable, which makes this class a superset of $\QCMA$. So the two classes are in fact equivalent.
\end{proof}

\begin{remark}
We see from this that we can define $\QCMA$ in three alternative but equivalent ways:
\begin{enumerate}
    \item as $\QMA$ with the collection of valid witnesses restricted to \textbf{classical strings}, or equivalently, quantum states in the computational basis
    \item as $\QMA$ with the collection of valid witnesses restricted to be \textbf{efficiently reconstructable} quantum states
    \item as $\QMA$ with the collection of valid witnesses restricted to be \textbf{efficiently telegraphable} quantum states
\end{enumerate}
\end{remark}

\begin{remark}
The task of separating between $\QMA$ and $\QCMA$ can thus be reframed as the task of finding decision problems in $\QMA$ for which any collection of witnesses is not efficiently reconstructable and/or not efficiently telegraphable.
\end{remark}

The violation of each computational no-go property -- that is, efficient reconstructability, efficient telegraphability, and efficient clonability -- brings collections of quantum states closer to being classical. 
It might then seem reasonable at this point to see a pattern and guess that every computational no-go violation by the witnesses of $\QMA$ would make it equal to $\QCMA$. 
That is, to guess that any such \emph{classicizing} restriction on the witnesses makes the quantum witnesses effectively only as good as classical witnesses.
We now give evidence \emph{against} that notion, by giving a quantum oracle black-box separation as evidence that in this context of efficient verification, \emph{efficiently clonable} witnesses are more powerful than classical, efficiently reconstructable, or efficiently telegraphable witnesses.

\subsection{Clonable Witnesses and $\clonableQMA$}



We give the following definition for $\clonableQMA$, the class of $\QMA$ problems that have efficiently clonable witnesses.

\begin{definition}[$\clonableQMA$]\label{def:clonableQMA}
A decision problem $\mathcal{L} = (\mathcal{L}_{\YES}, \mathcal{L}_{\NO})$ is in $\clonableQMA(c, f, s)$ if there exists a polynomial time quantum verifier $V$, a polynomial time quantum cloner $C$, and a polynomial $p$, such that
\begin{itemize}
    \item \textbf{Completeness:} if $x \in \mathcal{L}_{\YES}$, then there exists a quantum witness $\ket \psi$ on $p(|x|)$ qubits such that $V$ accepts on input $\ket x \ket \psi$ with probability at least $c$, and such that\\
    \textbf{Cloning Fidelity:} when given this same witness, $\ket\psi$, as input, $C$ succeeds at producing two independent copies of $\ket\psi$
    with fidelity at least $f$. 
    That is, $\bra \psi \otimes \bra \psi C\big(\ket\psi\bra\psi\big) \ket \psi \otimes \ket \psi \ge f$.
    \item \textbf{Soundness:} if $x \in \mathcal{L}_{\NO}$, then for all quantum states $\ket{\psi^*}$ on $p(|x|)$ qubits, $V$ accepts on input $\ket{x}\ket{\psi^*}$ with probability at most $s$.
\end{itemize}
\end{definition}


As with Definition~\ref{def:QCMA'}, the definition of $\clonableQMA$ only requires the collection of valid witnesses to be efficiently clonable, and the class is therefore sound against \emph{any} purported quantum witness.
We take $\clonableQMA = \bigcup_{(1-f) \in \negl(n)}\clonableQMA(\frac9{10}, f, \frac1{10})$.%
\footnote{
Note that as with the other classes mentioned, it seems likely that the parameters here can be set arbitrarily within a wide range. 
If the completeness and soundness errors are to be reduced, this can be done 
using the strong error reduction technique of~\cite{marriott2004quantum}, at the cost of an appropriate loss to the cloning fidelity error.
Interestingly, the fact that the witnesses are clonable also allows simulating parallel repetition without the need for additional copies of the witness. That is, the single witness can be cloned as many times as needed while incurring the fidelity error. Doing this allows bringing the completeness and soundness near 1 and 0, respectively, without affecting the cloning fidelity error.
However, reducing the cloning fidelity error is not always possible.
}

There are of course other ways that one could conceivably define a complexity class whose witnesses are efficiently clonable quantum states. For instance, the class could be defined with a single polynomial time quantum process that both verifies and clones in a single shot. 
We could also allow the cloner to accept a description of the problem instance as an additional input. We show in Appendix~\ref{appendix:clonableQMA-robustness} that up to a polynomial loss in 
the cloning fidelity, these variations are really all the same 
definition.

Note as well that by this definition, in order for a problem $\mathcal{L}$ to be in $\clonableQMA$, it is not required for any specific verifier for $\mathcal{L}$ to accept a collection of efficiently clonable witnesses. Rather, it is only required that there exist \emph{some} verifier, and \emph{some} mapping from instances to witnesses, such that the collection of all these valid witnesses is an efficiently clonable collection.

\subsubsection{Relationship to $\QMA$ and $\QCMA$}

\begin{theorem}
$\QCMA \subseteq \clonableQMA \subseteq \QMA$
\end{theorem}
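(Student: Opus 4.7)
The plan is to prove the two inclusions separately, both of which are essentially immediate from the definitions; the main work is just ensuring that we handle the soundness condition correctly against arbitrary quantum witnesses (as noted in Remark~\ref{remark:qcma-sound-against-quantum}).

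For the inclusion $\clonableQMA \subseteq \QMA$, I would observe that a $\clonableQMA$ verifier is in particular a $\QMA$ verifier with exactly the same completeness and soundness guarantees. The cloner $C$ and the requirement of a clonable witness collection are extra structure imposed on top of $\QMA$-membership, not extra capabilities granted to the verifier, so any problem witnessed by a $\clonableQMA$ triple $(V,C,p)$ is witnessed as a $\QMA$ problem simply by $(V,p)$, with the same parameters $c = 9/10$, $s = 1/10$.

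For the inclusion $\QCMA \subseteq \clonableQMA$, given a $\QCMA$ problem $\mathcal{L}$ with verifier $V_0$ and witness length $p$, I would construct a $\clonableQMA$ verifier $V$ that first measures the witness register in the computational basis and then runs $V_0$ on the resulting classical string. For a $\YES$-instance with classical $\QCMA$ witness $w$, the quantum witness is the computational basis state $|w\rangle$; measurement returns $w$ deterministically and $V_0$ accepts with probability at least $9/10$. For a $\NO$-instance, any purported quantum witness $|\psi^*\rangle$ collapses to some classical string $w^*$ upon measurement, each of which is rejected by $V_0$ with probability at least $9/10$, giving soundness $1/10$ against arbitrary quantum witnesses as required. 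The cloner $C$ is the trivial classical-copy circuit: on input $|w\rangle|0^{p}\rangle$ it applies bitwise CNOTs to produce $|w\rangle|w\rangle$, which achieves cloning fidelity $1$ on every basis-state witness and in particular on the collection of valid witnesses.

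The only mild subtlety to check is that the cloning fidelity requirement in Definition~\ref{def:clonableQMA} is satisfied with $1 - f$ negligible: since the cloner is exact on the computational basis and every $\YES$-instance's valid witness lives in the computational basis, we have $\bra{w}\otimes\bra{w} \, C(|w\rangle\langle w|) \, |w\rangle\otimes|w\rangle = 1$, so $f = 1$ trivially. No obstacle remains; there is no real technical content to this step beyond correctly matching the parameters to the definitions.
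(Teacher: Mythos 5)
Your proof is correct and follows essentially the same route as the paper: the forward inclusion uses the classical-copy (CNOT) cloner on computational-basis witnesses, and the backward inclusion observes that a $\clonableQMA$ verifier is already a $\QMA$ verifier. The only difference is that you spell out the measure-then-verify step to get soundness against arbitrary quantum witnesses, which the paper instead delegates to Remark~\ref{remark:qcma-sound-against-quantum}.
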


\begin{proof}
The idea is that classical witnesses are efficiently clonable, and efficiently clonable witnesses are a subset of all quantum witnesses.
Recall from Remark~\ref{remark:qcma-sound-against-quantum} that $\QCMA$ can be sound even against general \emph{quantum} purported witnesses. So let $\mathcal{L}$ be a decision problem in $\QCMA$, and let $V$ be a verifier for $\mathcal{L}$ that accepts classical witnesses and is sound against quantum purported witnesses. Then $V$ is a valid verifier for $\mathcal{L}$ in $\clonableQMA$, with the cloning operation being the classical copy operation on the classical witness. Moreover, any decision problem $\mathcal{L} \in \clonableQMA$ is trivially also in $\QMA$, since the same verifier for the decision problem in $\clonableQMA$ serves as a verifier for it in $\QMA$.
\end{proof}

\begin{remark}
This hierarchy of three complexity classes gives a neat picture of the power of quantum verification in terms of the computational no-go properties of their quantum witnesses, from $\QCMA$ (efficiently reconstructable, efficiently telegraphable) to $\clonableQMA$ (efficiently clonable) to $\QMA$ (fully quantum).
\end{remark}

We conjecture that both containments are strict. Of course, an unconditional separation in either direction would imply an unconditional separation between $\QCMA$ and $\QMA$, as well a number of other resulting separations up to $\P \ne \PSPACE$ (a separation that is believed to be true, but not thought to be possible to prove with existing techniques).

Nevertheless, to justify that this is an entirely new complexity class, we give evidence for both separations.
In Subsection~\ref{sec:qcma-vs-clonableqma}, we show that the same quantum oracles that we used to separate no-cloning from no-telegraphing -- and the same collection of quantum states -- also serves to give us a black-box separation between $\QCMA$ and $\clonableQMA$. 
We conclude it by giving a quantum oracle $\mathcal{O}$ relative to which $\QCMA^{\mathcal{O}} \ne \clonableQMA^{\mathcal{O}}$, demonstrating that any attempt to prove their equality must at least be quantumly non-relativizing.

Moreover, we argue informally that $\clonableQMA$ is not likely to be equal to $\QMA$ either. That is because if $\QMA$ were contained in $\clonableQMA$, this would mean that every $\QMA$-complete problem would have efficiently clonable quantum witnesses, which would at the very least be surprising. 
At worst, if the equivalence between the classes were established through a witness-isomorphic reduction
(a reduction in which witnesses for one problem are mapped by an efficiently computable transformation to witnesses for the other~\cite{fischer1995witness,lynch1978structure}), this would rule out many schemes for public-key quantum money, as schemes based on the hardness of problems in $\QMA$ could be broken by mapping their witnesses to those of a $\clonableQMA$ problem for which there is an efficient cloner.

\subsection{Separating $\QCMA$ from $\clonableQMA$ in the Black-Box Model}\label{sec:qcma-vs-clonableqma}

Recall the collection of oracles and set of states defined in Scheme~\ref{scheme:clonable_unreconstructible}. We now use these same oracles to give a black-box separation between $\clonableQMA$ and $\QCMA$.
The problem that we use to show the separation is the problem of distinguishing a dummy cloning oracle from one which clones a state from Scheme~\ref{scheme:clonable_unreconstructible}.

\newlang{\ROHC}{ROHC}
\begin{definition}[Randomized oracle hidden cloning problem]
\label{def:rohc}
The randomized oracle hidden cloning problem, $\ROHC$, is an oracle promise problem where the input is a random oracle $H : \bits^m \to \bits^n$, and a unitary quantum oracle $\mathcal{C}$ on $m$ qubits, and the problem is to decide whether 
\begin{itemize}
    \item $\mathsf{YES}$: $\mathcal{C}$ is the $z$-cloning oracle, $\mathcal{C}_z$, relative to $H$, for some $z \in \bits^n$ (see Definition~\ref{def:z-cloning-oracle})
    \item $\mathsf{NO}$: $\mathcal{C}$ acts as the identity on all quantum states
\end{itemize}
\end{definition}

\begin{proposition}\label{prop:rohc-in-clonableqma}
$\ROHC$ is in $\clonableQMA$ in the black-box model.
\end{proposition}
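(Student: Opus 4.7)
The plan is to take the preimage superposition state $|\psi_z\rangle$ as the natural witness for a YES instance with $\mathcal{C}=\mathcal{C}_z$, and to use the oracle $\mathcal{C}$ itself both for cloning the witness and (together with the distinguished symbol $|\bot\rangle$) for verification. The key observation is that one query to $\mathcal{C}$ with an ancilla register prepared in $|\bot\rangle$ moves amplitude out of the $|\bot\rangle$ subspace exactly when $\mathcal{C}$ is a nontrivial cloning oracle, while leaving the ancilla in $|\bot\rangle$ when $\mathcal{C}$ is the identity. This single asymmetry simultaneously yields a perfect cloner and gives the verifier a way to detect the non-triviality of $\mathcal{C}$ that does not depend on knowing $z$ (and hence does not depend on having any classical description of the instance).

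First, I would define the cloner $C$ to append an ancilla register in state $|\bot\rangle$ to its input and apply $\mathcal{C}$ once: on input $|\psi_z\rangle$ in a YES instance this yields $\mathcal{C}_z|\psi_z\rangle|\bot\rangle = |\psi_z\rangle|\psi_z\rangle$ with fidelity $1$. Next, I would define the verifier $V$ on a purported witness $|\phi\rangle$ to (i) append an ancilla register in state $|\bot\rangle$, (ii) apply $\mathcal{C}$, and (iii) measure the ancilla register with the binary PVM $\{\,|\bot\rangle\langle\bot|,\ I-|\bot\rangle\langle\bot|\,\}$, accepting iff the second outcome occurs. For completeness, the correct witness $|\psi_z\rangle$ in a YES instance leads to ancilla state $|\psi_z\rangle$, which is orthogonal to $|\bot\rangle$, so $V$ accepts with probability $1$. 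For soundness, in a NO instance $\mathcal{C}$ is the identity, hence for \emph{every} purported witness $|\phi\rangle$ the ancilla register stays in $|\bot\rangle$, and $V$ rejects with probability $1$. The cloning fidelity and the completeness/soundness parameters are thus all trivially well within the thresholds required by Definition~\ref{def:clonableQMA}.

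Both $V$ and $C$ make a single query to $\mathcal{C}$, no queries to $H$, and use only a constant number of additional gates, so they are polynomial-time, and the witness $|\psi_z\rangle$ has $m=\mathrm{poly}(n)$ qubits. There is essentially no technical obstacle here: the scheme's cloning oracle was defined precisely so that $\mathcal{C}_z$ clones $|\psi_z\rangle$, so the only thing worth double-checking is that the $|\bot\rangle$-ancilla test really yields perfect soundness against \emph{arbitrary} purported quantum witnesses in the NO case, which is immediate from the fact that the identity preserves $|\bot\rangle$ regardless of what is placed in the witness register. This establishes $\ROHC \in \clonableQMA$ in the black-box model.
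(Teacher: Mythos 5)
Your proposal is correct and matches the paper's own proof essentially verbatim: the cloner is a one-query wrapper for $\mathcal{C}$, and the verifier feeds $\ket{\psi}\ket{\bot}$ into $\mathcal{C}$ and accepts iff the ancilla leaves the $\ket{\bot}$ subspace, giving perfect completeness, perfect cloning fidelity, and perfect soundness against arbitrary quantum witnesses in the NO case. No gaps.
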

\begin{proof}
Let the cloner and verifier required by $\clonableQMA$ be defined as follows:

Let $C$ be the cloner which very simply passes its input to the cloning oracle, $\mathcal{C}$, and outputs the result. That is, $C$ is just a wrapper for the cloning oracle.

Let $V$ be the verifier does the following: 
Pass the witness $\ket\psi$ and $\ket\bot$ to $\mathcal{C}$ and perform a projective measurement on the second register on the subspace spanned by $\ket\bot$ against the subspace spanned by everything else. Accept if the result is not $\ket\bot$.

We show that $V$ and $C$ satisfy the following three properties: 
\begin{enumerate}
    \item Given a $\mathsf{YES}$ instance, there is a witness $\ket\psi$ that $V$ accepts with probability $1$.
    \item The witness $\ket\psi$ for every such $\mathsf{YES}$ instance is cloned by $C$ with perfect fidelity.
    \item Given a $\mathsf{NO}$ instance, $V$ rejects every witness $\ket\psi$ with probability $1$.
\end{enumerate}

All three of these are straightforward:

For any $\mathsf{YES}$ instance, $\mathcal{C} = \mathcal{C}_z$ for some $z \in \bits^n$, so let the witness for this instance be $\ket{\psi_z}$. $C = \mathcal{C}_z$ successfully clones $\ket{\psi_z}$. Therefore, given $\ket{\psi_z}$ as a witness, the verifier will not measure $\ket\bot$, and will accept with probability 1. On the other hand, for any $\mathsf{NO}$ instance, $\mathcal{C}$ will be the identity, so the register measured by $V$ will always be $\ket\bot$ at the end, and will therefore always reject.
\end{proof}

\begin{proposition}[Corollary of Proposition~\ref{prop:adding-the-z-cloning-oracle}]\label{prop:rohc-notin-qcma}
$\ROHC$ is not in $\QCMA$ in the black-box model.
\end{proposition}
\begin{proof}
This is a direct consequence of Proposition~\ref{prop:adding-the-z-cloning-oracle}.
That is, suppose for the sake of contradiction that $\ROHC\in\QCMA$. $\ROHC$ then has a $\QCMA$ verifier, $V$. 
$V$ satisfies the conditions of Proposition~\ref{prop:adding-the-z-cloning-oracle}, which means that
for every $\YES$ instance, 
$(H, \mathcal{C})$, where $\mathcal{C}$ is a $z$-cloning oracle relative to $H$ for some $z$,
and for every polynomial length witness string $w$, 
the probability that $V$ accepts this $\YES$ instance must be negligibly close to the probability that it accepts the corresponding $\NO$ instance in which $\mathcal{C}$ is replaced with the identity oracle, which is a contradiction.
\end{proof}

We now convert the black-box separation into a formal oracle separation between $\clonableQMA$ and $\QCMA$, by using standard techniques adapted from~\cite{aaronson2007quantum}.

\begin{theorem}\label{thm:clonableqma-qcma-oracle-separation}
There exists a quantum oracle $\mathcal{O}$ relative to which $\clonableQMA^{\mathcal{O}} \ne \QCMA^{\mathcal{O}}$.
\end{theorem}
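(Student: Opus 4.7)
The plan is to lift the black-box separation established by Propositions~\ref{prop:rohc-in-clonableqma} and~\ref{prop:rohc-notin-qcma} into a fixed-oracle separation via a standard probabilistic diagonalization, in the style of~\cite{aaronson2007quantum}. I would first define a distribution $\mathcal{D}$ over quantum oracles $\mathcal{O}$ by, for each input length $n$, independently sampling an $\ROHC$ instance: draw a uniformly random $H_n : \{0,1\}^{m(n)} \to \{0,1\}^n$, flip a uniform bit $b_n \in \{0,1\}$, and set the cloning block $\mathcal{C}_n$ to be the $z_n$-cloning oracle relative to $H_n$ if $b_n = 1$ (with $z_n := H_n(x^*)$ for a uniformly random $x^* \in \{0,1\}^{m(n)}$), and the identity otherwise. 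These blocks are packaged into a single quantum oracle $\mathcal{O}$ indexed by the length of the query register, and the target language is defined as $L_\mathcal{O} := \{1^n : b_n = 1\}$. The containment $L_\mathcal{O} \in \clonableQMA^\mathcal{O}$ holds for every $\mathcal{O}$ in the support of $\mathcal{D}$: on input $1^n$, the $\clonableQMA$ verifier and cloner from Proposition~\ref{prop:rohc-in-clonableqma} apply directly to $(H_n, \mathcal{C}_n)$, with the preimage superposition state $\ket{\psi_{z_n}}$ serving as witness when $b_n=1$.

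The substance of the argument is the lower bound: with probability $1$ over $\mathcal{O} \sim \mathcal{D}$, no polynomial-time $\QCMA$ verifier decides $L_\mathcal{O}$. I would enumerate pairs $(V, p)$ where $V$ is a quantum Turing machine and $p$ a polynomial bounding both its running time and its witness length; this countable list covers every admissible $\QCMA$ verifier. Fixing such a $(V, p)$ and letting $E_n$ be the event that $V$ correctly decides $L_\mathcal{O}$ on input $1^n$, the key claim is that $\Pr[E_n \mid \mathcal{O}_{\neq n}] \leq 1/2$ for every conditioning on the other oracle blocks. To see this, further condition on $H_n$: if the ``$\NO$-correct'' event (no witness $w$ makes $V^{H_n, I}(1^n, w)$ accept with probability exceeding $1/10$) holds, then Proposition~\ref{prop:adding-the-z-cloning-oracle}, applied with $w$ treated as classical advice, implies that for every $z_n$ and every $w$, $V^{H_n, \mathcal{C}_{z_n}}(1^n, w)$ also accepts with probability at most $1/10 + \negl(n) < 9/10$, so ``$\YES$-correct'' fails for every $z_n$. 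The two correctness events are thus mutually exclusive for each $H_n$, and averaging over the fair coin $b_n$ yields the $1/2$ bound.

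Iterating this conditional bound (peeling off one length at a time) then gives $\Pr[E_{n_1} \cap \cdots \cap E_{n_k}] \leq 2^{-k}$ for any distinct $n_1 < \cdots < n_k$, so $\Pr[\bigcap_{n \geq N_0} E_n] = 0$ for every $N_0$, and a countable union over $N_0$ shows that $V$ fails to decide $L_\mathcal{O}$ almost surely. A final union bound over the countably many $(V,p)$ yields an oracle $\mathcal{O}$ simultaneously witnessing $L_\mathcal{O} \in \clonableQMA^\mathcal{O}$ and $L_\mathcal{O} \notin \QCMA^\mathcal{O}$, giving the stated separation. The main obstacle I expect is verifying that Proposition~\ref{prop:adding-the-z-cloning-oracle} still applies when $V$ concurrently queries the fixed auxiliary blocks $\mathcal{O}_{\neq n}$: since that proof controls only the query magnitude on the $z$-cloning oracle, and the auxiliary oracles can be absorbed into fixed unitaries on the surrounding registers, the total-variation bound should carry over without modification.
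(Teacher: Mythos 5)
Your proposal is correct and follows essentially the same route as the paper's proof: a fair coin per input length defining a random unary language, the verifier/cloner of Proposition~\ref{prop:rohc-in-clonableqma} for containment, Proposition~\ref{prop:adding-the-z-cloning-oracle} (with the witness as advice) for the per-length failure bound, and then independence across lengths plus a union bound over the countably many $\QCMA$ machines. The only cosmetic difference is that you derive the per-length bound of $1/2$ by an explicit mutual-exclusivity argument (noting that, since $z_n$ is sampled at random, the indistinguishability holds on average over $z_n$, which is all the averaging step needs), whereas the paper packages the same reasoning as Proposition~\ref{prop:rohc-notin-qcma} and quotes a bound of $1/10$; either constant suffices.
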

\begin{proof}
We let $\mathcal{L}$ be a random unary language, where for each $n$, we set $1^n$ to be in $\mathcal{L}$ with probability $\frac12$. We let the quantum oracle $\mathcal{O} = \{\mathcal{O}_n\}_n$ be defined such that: for every $n$ for which $1^n \in \mathcal{L}$, we set $\mathcal{O}_n = (H_n, \mathcal{C}_n)$, where $H_n : \bits^{3n} \to \bits^{n}$ is a classical function chosen uniformly at random, $\mathcal{C}_n$ is the $z$-cloning oracle relative to $H_n$ for a randomly chosen $z_n \in \bits^n$, and the oracles are grouped together into a single quantum oracle; and likewise, for every $n$ for which $1^n \notin \mathcal{L}$, we set $\mathcal{O}_n = (H_n, I_n)$, where $H_n$ is chosen at random the same as before, and $I_n$ is an oracle which acts as the identity on $n$ qubits.

$\mathcal{L}$ is always in $\clonableQMA^{\mathcal{O}}$ for all choices of $\mathcal{L}$ and $\mathcal{O}$, since as shown in Proposition~\ref{prop:rohc-in-clonableqma}, there is a polynomial time verifier that can always solve the $\ROHC$ problem with a witness that is cloned by its corresponding polynomial time cloner. If $1^n \in \mathcal{L}$, then the verifier given there will accept the witness $\ket{\psi_{z_n}}$, the uniform positive superposition over preimages of $z_n$ in $H_n$, with probability $1$, and furthermore, the cloner will clone it with perfect fidelity. If $1^n \notin \mathcal{L}$, then there is no witness that will cause the verifier to accept with non-zero probability.

We now show that $\mathcal{L} \notin \QCMA^{\mathcal{O}}$ with probability 1 over the choice of $\mathcal{L}$ and $\mathcal{O}$.
For a fixed $\QCMA$ machine $M$, let $E_n(M, \mathcal{L}, \mathcal{O})$ be the event that $M^{\mathcal{O}}$ succeeds at determining whether $1^n \in \mathcal{L}$. 
That is, either $1^n \in \mathcal{L}$ and there exists a polynomial-length witness $w$ that $M^{\mathcal{O}}$ accepts with probability $\frac{9}{10}$, or $1^n \notin \mathcal{L}$ and $M^{\mathcal{O}}$ rejects all polynomial-length witnesses with probability $\frac{9}{10}$. 
By Proposition~\ref{prop:rohc-notin-qcma}, we have that for sufficiently large $n$, $\Pr_{\mathcal{L}, \mathcal{O}}\left[ E_n(M, \mathcal{L}, \mathcal{O}) \right] \le \frac{1}{10}$.
Since $\mathcal{O}$ is independent for different input lengths, and since on input $1^n$, $M^{\mathcal{O}}$ can only query $\mathcal{O}$ on polynomial length queries, this 
gives us that for infinitely many $n$,
\begin{align*}
    \Pr_{\mathcal{L}, \mathcal{O}}\left[ E_n(M, \mathcal{L}, \mathcal{O}) \; \middle| \; E_1(M, \mathcal{L}, \mathcal{O}) \wedge E_2(M, \mathcal{L}, \mathcal{O}) \wedge \dots \wedge E_{n-1}(M, \mathcal{L}, \mathcal{O}) \right] \le \frac{1}{10}
\end{align*}
and therefore,
\begin{align*}
    \Pr_{\mathcal{L}, \mathcal{O}}\left[ E_1(M, \mathcal{L}, \mathcal{O}) \wedge E_2(M, \mathcal{L}, \mathcal{O}) \wedge \dots \right] = 0
\end{align*}

Since there is only a countably infinite number of $\QCMA$ machines as a consequence of the Solovay-Kitaev Theorem~\cite{kitaev1997quantum}, we have by union bound that 
\begin{align*}
    \Pr_{\mathcal{L}, \mathcal{O}}\left[ \exists M : E_1(M, \mathcal{L}, \mathcal{O}) \wedge E_2(M, \mathcal{L}, \mathcal{O}) \wedge \dots \right] = 0
\end{align*}
or in other words, that $\mathcal{L} \notin \QCMA^{\mathcal{O}}$ with probability $1$ over the choice of $\mathcal{L}$ and $\mathcal{O}$.

We can thus fix a language $\mathcal{L}$ and a quantum oracle $\mathcal{O}$ such that $\mathcal{L} \in \clonableQMA^{\mathcal{O}}$, but $\mathcal{L} \notin \QCMA^{\mathcal{O}}$.
\end{proof}

We have used the same set of states and oracles which we showed are clonable but untelegraphable to prove an oracle separation between $\clonableQMA$ and $\QCMA$. Is this a general pattern? That is, can any set of clonable-untelegraphable states yield a corresponding separation between $\clonableQMA$ and $\QCMA$? We believe so.
Note, however, that a special feature of our separation is that besides being clonable but untelegraphable, the states we used as witnesses are efficiently \emph{samplable}, which is crucial for cryptographic applications (see Section~\ref{sec:cryptography}).
Moreover, this separation demonstrates the difficulty of unconditionally proving that any set of states is efficiently clonable but not efficiently telegraphable, as any such proof will likely yield a corresponding complexity class separation.

\subsection{$\clonableQMA$ in the Standard Model}
In Section~\ref{sec:qcma-vs-clonableqma}, we showed that the randomized oracle hidden cloning problem (deciding whether a quantum oracle is the identity or clones some hidden state from among a set) gives a black-box separation between $\QCMA$ and $\clonableQMA$.
There is a natural way to convert this into a promise problem in the standard model (that is, without reference to oracles).

\newlang{\CHC}{CHC}
\begin{definition}[Circuit hidden cloning problem]
The circuit hidden cloning problem, $\CHC$, is a promise problem where the input is the description of a $\poly(n)$-sized quantum circuit $\mathcal{C}$ on $2n$ qubits, and the problem is to decide whether 
\begin{itemize}
    \item $\mathsf{YES}$: $\left|\bra{\psi}\bra{\psi} \mathcal{C} \ket{\psi}\ket{0}\right|^2 \ge 1-\varepsilon$ for some $\ket\psi$ orthogonal to $\ket0$ 
    \item $\mathsf{NO}$: $\left|\bra{\psi}\bra{\phi} \mathcal{C} \ket{\psi}\ket{\phi}\right|^2 \le \varepsilon$ for all $\ket\psi$ and $\ket\phi$ 
\end{itemize}
\end{definition}

For sufficiently small $\varepsilon$, $\CHC$ is in $\clonableQMA$ for the same reason that $\ROHC$ is in $\clonableQMA$ in the black-box model, with the cloning oracle replaced by the circuit $\mathcal{C}$. While $\CHC$ cannot be shown to be hard for $\QCMA$ machines without implying a series of separations (starting from $\QCMA \ne \clonableQMA$, and up to $\P \ne \PSPACE$), we motivate this by comparison to the corresponding oracle problem, and by analogy to similar complete problems for $\QCMA$ and $\QMA$~\cite{wocjan2003qcmacomplete,janzing2005non,bookatz2013qmacomplete}.

We note that any separation between $\clonableQMA$ and $\QCMA$ immediately yields back a set of states that is efficiently clonable but neither efficiently reconstructable nor efficiently telegraphable.

\begin{theorem}\label{thm:complexity-separation-to-nogo-separation}
Any decision problem that separates $\clonableQMA$ from $\QCMA$ can be converted into a set of states that is efficiently clonable but not efficiently reconstructable or efficiently telegraphable.
\end{theorem}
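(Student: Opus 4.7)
The plan is to lift the separating problem directly into a scheme by letting its valid witnesses play the role of the states. Suppose $\mathcal{L} \in \clonableQMA \setminus \QCMA$, and let $V$, $C$, and $p$ be the polynomial-time verifier, polynomial-time cloner, and witness-length polynomial guaranteed by Definition~\ref{def:clonableQMA}. For each $x \in \mathcal{L}_{\YES}$, fix a valid witness $\ket{\psi_x}$ on $p(|x|)$ qubits satisfying both the completeness condition for $V$ and the cloning fidelity condition for $C$. Form the scheme $(S,\mathcal{D})$ whose labels are the YES instances, whose state collection is $S = \{\ket{\psi_x}\}_{x \in \mathcal{L}_{\YES}}$, and whose distribution $\mathcal{D}$ is any efficiently samplable distribution on YES instances (e.g.\ uniform over those of a given length, or whatever distribution witnesses the $\clonableQMA$-vs-$\QCMA$ separation).

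Efficient clonability of the scheme is immediate: the $\clonableQMA$ cloner $C$, invoked on $\ket{\psi_x}$, produces two copies of $\ket{\psi_x}$ with overwhelming fidelity, so $C$ itself serves as the scheme's cloning algorithm.

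For non-reconstructability, I would argue by contradiction. Assume there is a polynomial-time algorithm $\mathsf{Reconstruct}$ and, for each $x \in \mathcal{L}_{\YES}$, a polynomial-length advice string $a_x$ such that $\mathsf{Reconstruct}(a_x)$ outputs a state with non-negligible fidelity to $\ket{\psi_x}$. Build a $\QCMA$ verifier $V'$ for $\mathcal{L}$ as follows: on input $(x,a)$, run $\mathsf{Reconstruct}(a)$ to obtain a quantum state $\ket{\phi}$, then run $V(x,\ket{\phi})$ and output its decision. For $x \in \mathcal{L}_{\YES}$, the string $a_x$ is a valid classical witness: composing the reconstruction fidelity with $V$'s acceptance probability via a gentle-measurement argument (e.g.\ Lemma~\ref{lemma:composition}, as in the proof of Theorem~\ref{thm:qcma-definition-equivalence}) gives constant acceptance probability bounded away from $1/2$, which standard $\QCMA$ parallel repetition amplifies to completeness $9/10$. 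For $x \in \mathcal{L}_{\NO}$, because $\clonableQMA$ is sound against \emph{arbitrary} purported quantum witnesses by Definition~\ref{def:clonableQMA}, $V$ rejects the state produced by $\mathsf{Reconstruct}(a)$ with probability at least $9/10$ regardless of $a$, giving $V'$ the required soundness. Thus $\mathcal{L} \in \QCMA$, contradicting the hypothesis. Non-telegraphability then follows for free from Theorem~\ref{thm:telegraphing-implies-reconstruction}, since an efficient telegraphing protocol would yield an efficient reconstructor.

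The main obstacle is parameter bookkeeping rather than conceptual: one must track that merely non-negligible reconstruction fidelity suffices to produce a non-trivial soundness-completeness gap for $V'$, and that parallel repetition then closes this gap to the canonical $(9/10,1/10)$. A secondary subtlety is matching worst-case versus average-case notions: the reduction sketched above gives a contradiction from worst-case reconstruction; for an average-case separation, the same reasoning rules out average-case reconstruction provided the reconstructor's success distribution matches the one $\mathcal{L}$ is average-case hard against. Neither subtlety is serious, because both the separation statement and the definition of reconstructibility permit flexible error parameters.
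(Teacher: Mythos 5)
Your construction and overall strategy match the paper's proof exactly: the paper forms the state set $S_n = \{(x,\ket{\psi_x}) : x\in\mathcal{L}_{\YES}, |x|=n\}$ from the $\clonableQMA$ witnesses, gets clonability from the cloner $C$, and derives unreconstructability by observing that a reconstructor would place $\mathcal{L}$ in $\QCMA$ via Definition~\ref{def:QCMA'} and Theorem~\ref{thm:qcma-definition-equivalence}. Your verifier $V'(x,a) := V(x,\mathsf{Reconstruct}(a))$ is precisely the construction inside the proof of that theorem, which the paper invokes as a black box rather than re-deriving, and your appeal to Theorem~\ref{thm:telegraphing-implies-reconstruction} for the telegraphing half is the same as the paper's.

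The one place you overreach is the closing claim that ``merely non-negligible reconstruction fidelity suffices to produce a non-trivial soundness-completeness gap.'' It does not: if $\mathsf{Reconstruct}(a_x)$ outputs $\rho$ with $\bra{\psi_x}\rho\ket{\psi_x} = \eta$ for a small constant such as $\eta = 1/10$, then $\rho$ can be supported entirely in the kernel of $V$'s accepting projector even though $V$ accepts $\ket{\psi_x}$ with probability $9/10$ (acceptance of $\ket{\psi_x}$ with probability $9/10$ only forces states rejected with certainty to have overlap at most $1/10$ with $\ket{\psi_x}$), so the composed verifier $V'$ can have completeness $0$, strictly below the soundness $1/10$, and parallel repetition cannot manufacture a gap that is not there. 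Lemma~\ref{lemma:composition} reflects this: its bound $p_1 p_2 - 2\sqrt{(1-p_1)(1-p_2)}$ is vacuous once the fidelity $p_1$ drops below roughly $0.7$. What your reduction (and the paper's, which routes through Definition~\ref{def:QCMA'} with reconstruction fidelity $9/10$) actually rules out is reconstruction with fidelity close to $1$; ruling out every non-negligible $\eta$, as the paper's convention for ``not efficiently reconstructable'' would demand, requires either an additional amplification step or a weaker reading of the conclusion. Since the paper's own one-line justification has exactly the same granularity, you are not missing anything the paper supplies, but you should not assert that the parameter bookkeeping goes through for all non-negligible fidelities when it does not.
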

\begin{proof}
Suppose that $\clonableQMA \ne \QCMA$, and let $\mathcal{L}$ be a decision problem in\break $\clonableQMA \setminus \QCMA$. Since $\mathcal{L} \in \clonableQMA$, there is polynomial time verifier, $V$, and a polynomial time cloner, $\mathcal{C}$, such that for every $\YES$ instance $x$, there is a witness $\ket{\psi_x}$ on $\poly(|x|)$ qubits that is verified by $V$, and which $\mathcal{C}$ clones with high fidelity. 
Now, consider the set of quantum states $S_n = \{(x, \ket{\psi_x}) : x \in \mathcal{L}_{\YES}, |x| = n\}$.
By construction, this set of states is efficiently clonable with high fidelity for all $n$. On the other hand, it is neither efficiently reconstructable nor efficiently telegraphable, as otherwise, $\mathcal{L}$ would be in $\QCMA$ by Definition~\ref{def:QCMA'} and Theorem~\ref{thm:qcma-definition-equivalence}.
\end{proof}

\section{Cryptographic Applications}\label{sec:cryptography}

\newcommand{\gen}{{{\sf Gen}}}
\newcommand{\enc}{{{\sf Enc}}}
\newcommand{\dec}{{{\sf Dec}}}
\newcommand{\send}{{{\sf Send}}}
\newcommand{\rec}{{{\sf Receive}}}

\newcommand{\pk}{{{\sf pk}}}
\newcommand{\qsk}[1][]{{{\sf |{\sf sk}#1\rangle}}}

\newcommand{\aux}{{{\sf aux}}}

In this section, we describe the cryptographic primitive of a parallelizable but un-exfiltratable key, and we show how to build it from clonable-untelegraphable states and a few other assumptions. For concreteness, we focus on the case of encryption.

We consider the following setup: a server has a secret key for a public key encryption scheme, which it uses to decrypt ciphertexts. Unfortunately, the server is compromised by a remote adversary. The adversary would like to exfiltrate the key, so that it can decrypt ciphertexts for itself. We imagine, however, that the server is only able to transmit \emph{classical} information, and we utilize a quantum secret key to prevent the key from being exfiltrated. This gives rise to the following definition:

\begin{definition}[Non-exfiltratable Encryption]\label{def:exfil} A \emph{non-exfiltratable public key encryption (nePKE) scheme} is a tuple of polynomial-time quantum algorithms $\gen,\enc,\dec$ such that:
\begin{itemize}
    \item $\gen(1^\lambda)$ samples a \emph{classical} public key $\pk$ and \emph{quantum} secret key $\qsk$.
    \item $\enc(\pk,m)$ takes as input the public key and a classical message $m\in\{0,1\}^\lambda$, and outputs a ciphertext that may be classical or quantum, which we denote as $c$ or $|c\rangle$, respectively.
    \item $\dec(\qsk,c)$ or $\dec(\qsk,|c\rangle)$ takes as input a secret key and a ciphertext, and outputs a classical message $m'$ and a new secret key $\qsk[']$ (the original secret key being consumed since it is a quantum state).
    \item {\bf Correctness.} For any polynomial $p(\lambda)$, there is a negligible function $\epsilon(\lambda)$ such that, for any sequence of messages $m_1,m_2,\cdots,m_{p(\lambda)}\in\{0,1\}^\lambda$, the following holds: \\
    Let $(\pk,\qsk[_0])\gets\gen(1^\lambda)$ and for each $i\in[p(\lambda)]$, let $(m_i',\qsk[_i])=\dec\big(\qsk[_{i-1}],\enc(\pk,m_i)\big)$.\\
    Then $\Pr\big[{m_i'=m_i} \; \forall i\in[p(\lambda)]\big]\geq 1 - \epsilon(\lambda)$.
    \item {\bf Non-exfiltration Security.} For any pair of quantum polynomial-time interactive algorithms $(\send,\rec)$, there exists a negligible function $\epsilon(\lambda)$ such that for each $\lambda$ and any pair of messages $m_0,m_1$, $|W_0-W_1|\leq\epsilon(\lambda)$ where $W_b$ is the probability $\rec$ outputs $b$ in the following experiment:
    \begin{itemize}
        \item Run $(\pk,\qsk)\gets\gen(1^\lambda)$ and give $\qsk,\pk$ to $\send$.
        \item $\send$ produces a classical string $u$.
        \item Compute $c\gets\enc(\pk,m_b)$ (resp. $|c\rangle$ if allowing for quantum ciphertexts)
        \item Run $\rec$ on $(\pk,u,c)$ (resp. $(\pk,u,|c\rangle)$) to get a bit $b'$.
    \end{itemize}
    Above, $\send$ plays the role of the compromised server, and $\rec$ the role of the remote attacker.
\end{itemize}
\end{definition}

\subsection{Parallelizeable Construction Using Clonable-Untelegraphable States}

We now show that clonable but untelegraphable states, along with appropriate cryptographic building blocks, yields un-exfiltratable encryption where keys can be copied, yielding a construction that facilitates parallelism.

\paragraph{Efficiently Samplable Clonable Witnesses.} First, we need a strengthening of $\clonableQMA \ne \QCMA$: we need that there is a decision problem $\mathcal{L} \in \clonableQMA \setminus \QCMA$ with efficiently samplable hard (instance, witness) pairs. That is, there is an efficient sampling procedure $\mathcal{S}(1^\lambda)$ which samples $\YES$ instances $x$ of size $\lambda$ along with their clonable witnesses $|\psi_x\rangle$, as well as $\NO$ instances $x$ (of course with no matching witness), such that in time polynomial in $\lambda$, it is infeasible to 
decide whether $x$ is a $\YES$ or $\NO$ instance when given $x$ and auxiliary classical information.

\paragraph{Witness Encryption for $\QMA$.} We will also need the notion of \emph{extractable} witness encryption for $\QMA$, which we now define. For a candidate construction, we could use~\cite{ITCS:BarMal22}.

\begin{definition}[Extractable Witness Encryption for $\QMA$] An extractable witness encryption scheme for a decision problem $\mathcal{L} \in \QMA$ is a pair of efficient (potentially quantum) algorithms $(\enc,\dec)$ such that:
\begin{itemize}
    \item $\enc(1^\lambda,x,m)$ takes as input the security parameter, a $\QMA$ statement $x$, and a message $m$. It outputs a ciphertext that may be classical or quantum, which we denote as $c$ or $|c\rangle$, respectively.
    \item $\dec(x,c,|\psi\rangle)$ or $\dec(x,|c\rangle,|\psi\rangle)$ takes as input the $\QMA$ statement $x$, a classical or quantum ciphertext $c$ or $|c\rangle$, and a purported witness $|\psi\rangle$ for $x$.
    \item {\bf Correctness.} Let $R_\mathcal{L}(x)$ be the set of valid witnesses for $x$. Then for any $|\psi\rangle\in R_\mathcal{L}(x)$, we have that 
    $\Pr\left[\dec\big(x,\enc(1^\lambda,x,m),|\psi\rangle\big)=m\right]\geq 1-\negl(\lambda)$.
    \item {\bf Extractability security.} Consider an efficiently sampleable distribution $\mathcal{D}(1^\lambda)$ over instances $x$ and (potentially quantum) auxiliary information $\aux$. We say that $\mathcal{D}$ is \emph{hard} (as in, it is hard to extract a witness from $\aux$) if, for all quantum polynomial-time adversaries~$E$,
    $$\Pr_{(x,\aux)\gets\mathcal{D}}[E(x,\aux)\in R_\mathcal{L}(x)]\leq\negl(\lambda)$$

    We then say that $(\enc,\dec)$ is extractable if, for every quantum polynomial-time adversary $A$, every hard efficiently sampleable distribution $\mathcal{D}$, and every pair of messages $m_0,m_1$, 
    $$\left|\Pr_{(x,\aux)\gets\mathcal{D}}\left[A(x,\aux,\enc(1^\lambda,x,m_0))=1\right]-\Pr_{(x,\aux)\gets\mathcal{D}}\left[A(x,\aux,\enc(1^\lambda,x,m_1))=1\right]\right|\leq\negl(\lambda)$$
\end{itemize}
\end{definition}




\begin{theorem}\label{thm:non-exfiltration-sufficient-2}
Assuming both of the following, there exists un-exfiltratable encryption with clonable secret keys:
\begin{enumerate}
    \item A pair of efficiently samplable distributions, $\mathcal{D}_{\YES}$ and $\mathcal{D}_{\NO}$,  over $\YES$ instance-witness pairs and $\NO$ instances, respectively, of a problem 
    $\mathcal{L} \in \clonableQMA$ such that the average-case 
    problem $(\mathcal{L}, \frac12 \mathcal{D}_{\YES} + \frac12 \mathcal{D}_{\NO})$ is hard for $\QCMA$.
    \item Extractable witness encryption for $\QMA$
\end{enumerate}
Here, $\frac12 \mathcal{D}_{\YES} + \frac12 \mathcal{D}_{\NO}$ is the distribution on instances that takes instances from $\mathcal{D}_{\YES}$ and $\mathcal{D}_{\NO}$ with equal probability.
\end{theorem}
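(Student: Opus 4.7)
The plan is to build the $\textsf{nePKE}$ scheme so that the secret key is literally the clonable $\clonableQMA$-witness for an instance $x$, and the encryption is a witness encryption under the $\QMA$-statement ``$x$ is a $\YES$-instance of $\mathcal{L}$''. Concretely: $\gen(1^\lambda)$ samples $(x, |\psi_x\rangle) \leftarrow \mathcal{D}_{\YES}(1^\lambda)$ and outputs $\pk = x$, $\qsk = |\psi_x\rangle$; $\enc(\pk, m)$ outputs $\mathsf{WE}.\enc(1^\lambda, x, m)$; and $\dec(\qsk, c)$ first invokes the $\clonableQMA$ cloner on $|\psi_x\rangle$ (which exists because $\mathcal{L} \in \clonableQMA$) to produce two copies, runs $\mathsf{WE}.\dec$ on one copy, and returns the other copy as $\qsk[']$. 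Correctness across $\poly(\lambda)$ sequential decryptions follows from a standard hybrid combining $\mathsf{WE}$ correctness with the cloning fidelity (amplified to $1 - \negl(\lambda)$ along the lines already discussed for $\clonableQMA$). Clonability of $\qsk$ is immediate from the $\clonableQMA$ cloner.

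For non-exfiltration security, suppose toward a contradiction that some quantum polynomial-time pair $(\send, \rec)$ wins the non-exfiltration game with non-negligible advantage on some $m_0, m_1$. I would first set up a hardness distribution for the witness-encryption extractor. Let $\mathcal{D}^\star$ be the distribution that samples $(x, |\psi_x\rangle) \leftarrow \mathcal{D}_{\YES}$, runs $u \leftarrow \send(x, |\psi_x\rangle)$, and outputs $(x, \aux := u)$; this is efficiently samplable since $\send$ is efficient and $\mathcal{D}_{\YES}$ is efficiently samplable. The map $A(x, \aux, c) := \rec(x, \aux, c)$ is then exactly a quantum polynomial-time distinguisher between $\mathsf{WE}.\enc(1^\lambda, x, m_0)$ and $\mathsf{WE}.\enc(1^\lambda, x, m_1)$ with non-negligible advantage over $\mathcal{D}^\star$. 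By the extractability of $\mathsf{WE}$, $\mathcal{D}^\star$ cannot be hard, so there is a quantum polynomial-time extractor $E$ that on input $(x, \aux) \leftarrow \mathcal{D}^\star$ outputs a valid $\QMA$-witness for $x$ with non-negligible probability $p(\lambda)$.

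Next, the plan is to recycle $E$ into a $\QCMA$ verifier $V'$ for the average-case problem $(\mathcal{L}, \tfrac12 \mathcal{D}_{\YES} + \tfrac12 \mathcal{D}_{\NO})$: on input $(x, w)$, $V'$ runs $E(x, w)$ to obtain a state $|\phi\rangle$ and feeds it to the $\QMA$ verifier of $\mathcal{L}$, accepting iff it accepts. For $x \leftarrow \mathcal{D}_{\YES}$, the ``correct'' classical witness is $w^\star := \send(x, |\psi_x\rangle)$, and by the extractor's average-case success together with a Markov-style split over the randomness, a non-negligible fraction of $\YES$-instances individually admit such a $w^\star$ on which $E$ succeeds with non-negligible probability; amplifying $E$ and the $\QMA$ verifier inside $V'$ via parallel repetition (using the $\QMA$ verifier itself to recognize a successful extraction) pushes completeness past $9/10$ on this non-negligible fraction. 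For $x \leftarrow \mathcal{D}_{\NO}$, soundness of the $\QMA$ verifier against \emph{arbitrary} quantum states -- including anything $E$ might produce on any $w$ -- guarantees rejection with probability at least $9/10$. Thus $V'$ satisfies the $\QCMA$ conditions on a $\tfrac12 + \Omega(p)$ fraction of the average-case instances, contradicting the assumed $\QCMA$-hardness of $(\mathcal{L}, \tfrac12 \mathcal{D}_{\YES} + \tfrac12 \mathcal{D}_{\NO})$.

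I expect the main subtlety to lie in the amplification of $E$: extractability only yields a non-negligible \emph{overall} success probability over $\mathcal{D}^\star$, so one must carefully separate the instance randomness from the extractor randomness and argue that a non-negligible fraction of $\YES$-instances individually admit a good $w^\star$. Packaging $E$ together with the $\QMA$ verifier inside $V'$ and appealing to standard parallel repetition sidesteps this, but some care is needed so that the amplified classical witness (essentially independent copies of $\send$'s classical output) remains of polynomial length, and so that soundness on $\NO$-instances is preserved even when $V'$ internally repeats. The remaining bookkeeping -- combining cloning fidelity with $\mathsf{WE}$ correctness across multiple sequential decryptions and wiring $\mathcal{D}^\star$ into the extractability definition -- is routine.
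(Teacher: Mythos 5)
Your proposal is correct and follows essentially the same route as the paper: the same construction (public key is the instance, secret key is the clonable witness, encryption is witness encryption under the statement), the same reduction turning $(\send,\rec)$ into a distinguisher over the distribution $(x,\aux=\send(x,|\psi_x\rangle))$, extractability yielding an extractor $E$, and the same $\QCMA$ verifier $V'(x,u):=V(x,E(x,u))$ contradicting average-case hardness. The extra care you flag around amplifying $E$'s per-instance success is handled in the paper simply by defining extraction success as producing a witness that $V$ accepts with probability at least the completeness parameter and invoking the average-case hardness definition directly, so no parallel repetition is needed.
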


Note that the condition of being efficiently samplable does not contradict the hardness of the decision problem or the unreconstructability of the witnesses, as this only allows us to sample a witness for a random instance, and not for a specific one. Indeed, relative to an oracle,
Scheme~\ref{scheme:clonable_unreconstructible} 
satisfies the necessary requirements.%
\footnote{
We in fact need a slight modification to Scheme~\ref{scheme:clonable_unreconstructible} to make it a decision problem, by having the cloning oracle only clone half the valid states, whose images in the random oracle then become the $\YES$ instances. We can sample random instances (along with their potential witnesses) by measuring the output of the random oracle on a uniform superposition. From there it is straightforward to check if the instance sampled is a $\YES$ instance or a $\NO$ instance. However given just the instance, it is hard to produce a witness or even to tell if it is a $\YES$ instance or a $\NO$ instance (Propositions~\ref{prop:adding-the-z-cloning-oracle} and~\ref{prop:Stage3}).
}


\medskip
\noindent The construction is as follows:

\begin{proof}
We are given a decision problem $\mathcal{L} \in \clonableQMA$, a $\clonableQMA(c, f, s)$ verifier $V$ and cloner $C$ for $\mathcal{L}$, and polynomial time instance samplers $\mathcal{S}_{\YES}$ and $\mathcal{S}_{\NO}$ for $\YES$ instance-witness pairs and $\NO$ instances, respectively for $\mathcal{L}$.

Our un-exfiltratable scheme $(\gen,\enc,\dec)$ is defined as follows:
\begin{itemize}
    \item $\gen(1^\lambda)$ runs $(x,|\psi\rangle)\gets\mathcal{S}_{\YES}(1^\lambda)$ and outputs $\pk=x$ as the public key and $\qsk=|\psi\rangle$ as the secret key.
    \item $\enc(\pk,m)$ runs and outputs the result of $\enc'(1^\lambda,\pk,m)$
    \item $\dec(\qsk,c)$ runs and outputs the result of $\dec'(x,c,\qsk)$.
\end{itemize}
Correctness follows immediately from the correctness of $(\gen',\enc')$,
and the secret keys $\qsk$ are clonable by the cloner $C$ for $\mathcal{L}$.
It remains to prove security. Consider an adversary $(\send,\rec)$ breaking the un-exfiltratability of $(\gen,\enc,\dec)$. This means there is a non-negligible $\epsilon(\lambda)$ and a pair of messages $m_0,m_1$ such that $|W_0-W_1|\geq\epsilon(\lambda)$ where $W_0,W_1$ are the quantities in Definition~\ref{def:exfil}. We construct a distribution $\mathcal{D}'$ and algorithm $A'$ which attacks $(\enc',\dec')$:
\begin{itemize}
    \item $\mathcal{D}'(1^\lambda)$ runs $(x,|\psi\rangle)\gets\mathcal{S}_{\YES}(1^\lambda)$, and then runs $u\gets \send(x,|\psi\rangle)$, where $u$ is a classical string. It outputs $(x,\aux=u)$.
    \item $A'(x,\aux,c)$ runs $\rec(x,\aux,c)$ and outputs whatever $\rec$ outputs.
\end{itemize}
By construction, we have that 
$$\left|\Pr_{(x,\aux)\gets\mathcal{D}'}\left[A'(x,\aux,\enc'(1^\lambda,x,m_0))=1\right]- \Pr_{(x,\aux)\gets\mathcal{D}'}\left[A'(x,\aux,\enc'(1^\lambda,x,m_1))=1\right]\right|\geq\epsilon(\lambda)$$
Thus, by the extractability security of $(\enc',\dec')$, we have that $\mathcal{D}'$ must not be hard. 
This means there exists a quantum polynomial-time extractor $E$ and non-negligible $\delta(\lambda)$ such that \linebreak
$\Pr_{(x,\aux)\gets\mathcal{D}'}\big[\Pr[V(x,E(x,\aux)) = 1] \ge c\big]\geq\delta(\lambda)$, 
where $V$ is the $\clonableQMA$ verifier for $\mathcal{L}$ and $c$ is the completeness parameter. That is, with probability at least $\delta(\lambda)$, $E$ extracts a valid quantum witness for $V$.

We then use $E$ and $V$ to construct an average-case $\QCMA$ verifier $V'$ for the distribution of instances coming from the equal mixture of $\mathcal{S}_{\YES}$ and $\mathcal{S}_{\NO}$. 
With instance $x$ and witness $u$, $V'(x,u)$ is defined as $V'(x,u):=V(x,E(x,u))$. 
We then have that over the instance distribution of $\mathcal{S}_{\YES}$, the $u$ outputted by $\send(x,|\psi\rangle)$ is, with non-negligible probability at least $\delta(\lambda)$, a witness for $x$ relative to $V'$; in particular the witness exists. 
We also have that for instances sampled from $\mathcal{S}_{\NO}$, there is no quantum witness that $V$ accepts with probability greater than $s$, and there is therefore no classical witness $u$ that $V'$ accepts with probability greater than $s$. 
We have therefore that $V'$ satisfies the conditions of $\QCMA$ for $(\mathcal{L}, \mathcal{S})$ with probability $\frac12 + \frac12 \delta(\lambda)$ over the distribution $\mathcal{S}$ on instances induced by sampling equally from $\mathcal{S}_{\YES}$ and $\mathcal{S}_{\NO}$, contradicting the condition that $(\mathcal{L}, \mathcal{S})$ is hard for $\QCMA$.
We then have that, as claimed, $(\gen,\enc,\dec)$ is a secure un-exfiltratable encryption with clonable quantum secret keys.
\end{proof}

\bibliographystyle{alpha}
\bibliography{bib}

\appendix


\section{Proof of the No-Cloning and No-Telegraphing Theorems}
\label{appendix:no-cloning-equals-no-telegraphing}

The No-Cloning Theorem was discovered independently three separate times~\cite{Park70,WoottersZurek82,Dieks82}.
The No-Telegraphing Theorem can be seen as a corollary of the No-Cloning Theorem, and is due to~\cite{Werner_1998} (where it is referred to as the No-Teleportation Theorem). In both cases, the theorems state that cloning and telegraphing, respectively, are not possible for general unknown quantum states. The most precise statement of the theorems, as presented below, states that these two tasks are possible only on orthogonal collections of states.
Though the two theorems were historically discovered separately, we present them together to emphasize the direct connection between~them.

\begin{theorem}[No-Cloning Theorem and No-Telegraphing Theorem]
Let $\mathcal{H}$ be a Hilbert space, and let $S = \{\ket{\psi_i}\}_{i \in [k]}$ be a collection of pure quantum states on this Hilbert space.
The following are equivalent:
\begin{enumerate}
    \item $S$ can be perfectly cloned
    \item $S$ can be perfectly telegraphed
    \item $S$ is a collection of orthogonal states, with duplication \textnormal{(}$\forall i,j \; \left| \langle{\psi_i}|{\psi_j}\rangle \right|^2$ is either $0$ or $1$\textnormal{)}
\end{enumerate}
\end{theorem}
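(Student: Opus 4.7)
The plan is to establish the three-way equivalence by the cycle $(3) \Rightarrow (2) \Rightarrow (1) \Rightarrow (3)$. The first two implications are essentially ``classical'' once one has a perfectly distinguishing measurement in hand, while the closing implication $(1) \Rightarrow (3)$ is the heart of the argument and reproduces the original Wootters--Zurek--Dieks--Yuen style calculation via inner products.

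For $(3) \Rightarrow (2)$, the condition $|\langle\psi_i|\psi_j\rangle|^2 \in \{0,1\}$ lets us collapse $S$ to a set of distinct representatives that are pairwise orthogonal. Extending this to an orthonormal basis of $\mathcal{H}$ and performing the corresponding projective measurement yields a classical outcome that uniquely identifies which equivalence class $|\psi_i\rangle$ came from; $\mathsf{Send}$ outputs this label and $\mathsf{Receive}$ simply prepares the corresponding state from the known finite list. For $(2) \Rightarrow (1)$, this is the trivial implication already recorded as Theorem~\ref{thm:telegraphing-implies-cloning}, specialized to $\eta=1$: invoke $\mathsf{Receive}$ twice on the classical output of $\mathsf{Send}$ to obtain two unentangled copies.

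The substantive implication is $(1) \Rightarrow (3)$. Suppose there is a perfect cloner for $S$. Since an arbitrary quantum operation need not be unitary, my first step is to apply Stinespring dilation to replace the cloner by a unitary $U$ on the state register together with a second register initialized in a fixed state $|0\rangle$ and an ancilla register initialized in $|0\rangle_A$, so that perfect cloning means
\[
U\bigl(|\psi_i\rangle\otimes|0\rangle\otimes|0\rangle_A\bigr) = |\psi_i\rangle\otimes|\psi_i\rangle\otimes|\phi_i\rangle_A
\]
for every $i$, where $|\phi_i\rangle_A$ is some unit residue on the ancilla. Taking inner products of this identity for indices $i$ and $j$ and using that $U$ preserves inner products gives
\[
\langle\psi_i|\psi_j\rangle \;=\; \langle\psi_i|\psi_j\rangle^2 \cdot \langle\phi_i|\phi_j\rangle_A .
\]
If $\langle\psi_i|\psi_j\rangle = 0$ we are done; otherwise we may divide to obtain $1 = \langle\psi_i|\psi_j\rangle\langle\phi_i|\phi_j\rangle_A$, and since both factors have modulus at most one, each must have modulus exactly one, forcing $|\langle\psi_i|\psi_j\rangle|^2 = 1$. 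This is precisely condition $(3)$.

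The main technical subtlety I expect is justifying that ``perfect'' cloning by a general quantum channel really does yield the clean unitary identity above: I need the output state on the two cloning registers to be exactly $|\psi_i\rangle\otimes|\psi_i\rangle$ (up to an ancilla), and not just close to it. This is handled by observing that fidelity $1$ between a mixed state and a pure target state forces equality, so after Stinespring dilation the joint output must be a product of the target pure state with some pure ancilla residue; with this purification step in place, the inner-product calculation above becomes rigorous and the theorem falls out.
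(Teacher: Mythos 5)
Your proposal is correct and follows essentially the same route as the paper: you prove the identical three implications $(3)\Rightarrow(2)$, $(2)\Rightarrow(1)$, and $(1)\Rightarrow(3)$ (merely listing the cycle in a different order), with the same measurement-and-prepare argument, the same run-$\mathsf{Receive}$-twice argument, and the same Yuen-style inner-product calculation. Your added care in invoking Stinespring dilation and noting that unit fidelity with a pure target forces the dilated output to factorize is a welcome bit of extra rigor over the paper's ``the most general cloning process can be described by a unitary $U$,'' but it does not change the substance of the argument.
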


\begin{proof}
We show that $(2) \implies (1) \implies (3) \implies (2)$.

\paragraph{$(2) \implies (1)$:}
If $S$ can be telegraphed, then there is a sending process $\mathsf{Send}$ which when given $\ket{\psi_i}$, outputs a classical message $c_i$, and a receiving process, 
$\mathsf{Receive}$, which when given $c_i$, outputs the state $\ket{\psi_i}$. Run $\mathsf{Send}$ once on $\ket{\psi_i}$ to get $c_i$, and then run $\mathsf{Receive}$ twice independently on $c_i$ to output two independent and unentangled copies of $\ket{\psi_i}$.


\paragraph{$(1) \implies (3)$:} This is due to Yuen's proof of the No-Cloning Theorem~\cite{YUEN1986405}, which we present here.
The most general cloning process can be described by a unitary $U$ which acts as
$$\ket{\psi_i} \ket{\bot} \ket{\chi} \xrightarrow{U} \ket{\psi_i} \ket{\psi_i} \ket{\chi_i}$$
where $\ket\bot$ is the state of an empty register, and $\ket{\chi}$ and $\ket{\chi_i}$ are states of the ancilla registers and/or working memory of the cloner.

Because unitary operations preserve inner products, we get that for every $i$ and $j$,
\[
\braket{\psi_i|\psi_j}
= \braket{\psi_i|\psi_j}\braket{\bot|\bot}\braket{\chi|\chi} \\
= \braket{\psi_i|\psi_j}\braket{\psi_i|\psi_j}\braket{\chi_i|\chi_j} \\
= \braket{\psi_i|\psi_j}^2\braket{\chi_i|\chi_j}
\]
from which we have that either $\braket{\psi_i|\psi_j} = 0$ or $\braket{\psi_i|\psi_j}\braket{\chi_i|\chi_j} = 1$. In the second case, because both $\braket{\psi_i|\psi_j}$ and $\braket{\chi_i|\chi_j}$ can have magnitude at most 1, both magnitudes must equal 1. 
So we get that either $\left|\braket{\psi_i|\psi_j}\right| = 0$ or $\left|\braket{\psi_i|\psi_j}\right| = 1$.

\paragraph{$(3) \implies (2)$:} 
Since the states are all pairwise either orthogonal or identical to one another (up to an irrelevant overall phase), they define an orthonormal set, which can be completed to a full orthonormal basis $B = \{\ket{\phi_j}\}_j$ for the Hilbert space (removing any such redundant elements that are identical up to their overall phase). 
The projective measurement with operators $\Pi_j = \ket{\phi_j}\bra{\phi_j}$ therefore perfectly distinguishes the set $S$ (of course, without distinguishing between those identical elements, for which it is not necessary to distinguish). 

On input $\ket{\psi_i}$, the sending process therefore performs the projective measurement $\{\Pi_j\}_j$ to get outcome $j$ such that $\left|\braket{\psi_i | \phi_j}\right| = 1$, and sends this $j$. Then, the receiving process, on input $j$, prepares $\ket{\phi_j}$, which is equal to $\ket{\psi_i}$, up to an irrelevant overall phase, thus successfully telegraphing~$\ket{\psi_i}$.
\end{proof}

\section{Measuring an Approximation of a Quantum State}

The following lemma is useful if we expect to have a pure quantum state $\ket{\psi}$, but instead, we only have an approximate version of it, $\rho$, which may in general be a mixed state. We want that any measurement on $\ket\psi$ can be approximately performed on $\rho$ instead.

\begin{lemma}[Measuring an approximation of a state]
\label{lemma:composition}
Let $\ket \psi \in \mathcal{H}$ be a pure state,  
let $\Pi$ be a measurement operator for the binary projective measurement $\{\Pi, \mathbb I - \Pi\}$ on $\mathcal{H}$, 
and let $\rho \in \mathcal{D}(\mathcal{H})$ be a mixed state such that

\begin{enumerate}
    \item $\bra\psi\rho\ket\psi \ge p_1$
    \item $\bra\psi \Pi \ket\psi \ge p_2$
\end{enumerate}
Then $\mathsf{tr}(\Pi \rho) \ge p_1 p_2 - 2 \sqrt{(1-p_1)(1-p_2)}$.
\end{lemma}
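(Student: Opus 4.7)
The plan is to exploit the block-matrix structure induced by decomposing the Hilbert space as $\mathcal{H} = \mathrm{span}(\ket\psi) \oplus \ket\psi^\perp$. Writing both $\rho$ and $\Pi$ in block form with respect to this decomposition, $\tr(\Pi\rho)$ splits into a ``diagonal'' contribution, which is essentially $\bra\psi\Pi\ket\psi \cdot \bra\psi\rho\ket\psi \geq p_1 p_2$, plus some off-diagonal cross-terms and a non-negative trace of a product of PSD operators. The main task is then just to control the cross-terms.

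Concretely, I would write
\[
\rho = \begin{pmatrix} a & b^\dagger \\ b & D \end{pmatrix}, \qquad
\Pi = \begin{pmatrix} p_2' & c^\dagger \\ c & D_\Pi \end{pmatrix},
\]
where $a = \bra\psi\rho\ket\psi \geq p_1$ and $p_2' = \bra\psi\Pi\ket\psi \geq p_2$. From $\rho \geq 0$, $\tr\rho = 1$, and Schur-complement considerations I get $\|b\|^2 \leq a(1-a) \leq 1-p_1$; from $\Pi^2 = \Pi$ I get $\|c\|^2 = p_2'(1-p_2') \leq 1-p_2$. Multiplying out the blocks,
\[
\tr(\Pi\rho) = p_2' a + 2\,\mathrm{Re}(c^\dagger b) + \tr(D_\Pi D).
\]
The last term is nonnegative (product of PSD operators), and Cauchy--Schwarz on the cross term gives $|2\,\mathrm{Re}(c^\dagger b)| \leq 2\|b\|\|c\| \leq 2\sqrt{(1-p_1)(1-p_2)}$. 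Combining these with $p_2' a \geq p_1 p_2$ yields the claimed inequality.

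The only mildly subtle step is extracting the bounds on $\|b\|$ and $\|c\|$ in the right form from the defining properties of $\rho$ and $\Pi$, and making sure the Cauchy--Schwarz bound uses $1-p_1$ and $1-p_2$ rather than $1-a$ and $1-p_2'$ (monotonicity handles this, since $a \geq p_1$ and $p_2' \geq p_2$ and both are at most $1$). Everything else is just bookkeeping, and no additional hypotheses beyond PSD-ness of $\rho$ and projectivity of $\Pi$ are needed.
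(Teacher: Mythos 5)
Your proof is correct, and it takes a genuinely different (and arguably cleaner) route than the paper's. The paper works at the level of vectors: it expands $\Pi=\sum_i\ket{\pi_i}\bra{\pi_i}$, discards everything except the single direction $\ket{\pi_1}=\Pi\ket\psi/\sqrt{\bra\psi\Pi\ket\psi}$, then eigendecomposes $\rho=\sum_i q_i\ket{\phi_i}\bra{\phi_i}$, writes each $\ket{\phi_i}=\alpha_i\ket\psi+\beta_i\ket{\psi_i^\perp}$, bounds $\bra{\phi_i}\Pi\ket{\phi_i}\ge|\alpha_i|^2p_2-2|\beta_i|\sqrt{1-p_2}$ term by term, and averages using $\sum_i q_i|\alpha_i|^2\ge p_1$ and $\left(\sum_i q_i|\beta_i|\right)^2\le 1-p_1$. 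Your block decomposition with respect to $\mathrm{span}(\ket\psi)\oplus\ket\psi^\perp$ does the whole computation at the operator level in one shot: your $p_2'$ and $\|c\|^2=p_2'(1-p_2')\le 1-p_2$ play the role of the paper's $|\braket{\psi|\pi_1}|^2$ and $|\braket{\psi^\perp|\pi_1}|^2$, your $a$ and $\|b\|^2\le a(1-a)\le 1-p_1$ aggregate the paper's $q_i,\alpha_i,\beta_i$ bookkeeping (note your Cauchy--Schwarz bound $\|b\|^2\le 1-p_1$ is what the paper gets from $(\sum_i q_i|\beta_i|)^2\le\sum_i q_i|\beta_i|^2=1-\sum_i q_i|\alpha_i|^2$), and you make explicit the nonnegative remainder $\tr(D_\Pi D)$ that the paper drops implicitly as $\sum_{i\ge2}|\braket{\phi|\pi_i}|^2$. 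What your version buys is the avoidance of the eigendecomposition of $\rho$ and a transparent accounting of exactly where projectivity of $\Pi$ ($\|c\|^2=p_2'-p_2'^2$) and positivity plus unit trace of $\rho$ (the Schur-complement step $bb^\dagger\preceq aD$, hence $\|b\|^2\le a\,\tr D=a(1-a)$) enter; what the paper's version buys is that it needs nothing beyond elementary vector manipulations. Both proofs implicitly use $0\le p_1,p_2\le 1$ in the final step $ap_2'\ge p_1p_2$, which is harmless since the hypotheses are vacuous or the conclusion trivial outside that range.
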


This means that if $\rho$ is very close to $\ket\psi\bra\psi$, and the measurement given by $\Pi$ succeeds with good probability on $\ket\psi$, then the same measurement also succeeds on $\rho$, though with appropriately smaller probability.

\begin{proof}
The projector $\Pi$ can be expanded in its eigenbasis as $\Pi = \sum_{i=1}^{k} \ket{\pi_i}\bra{\pi_i}$, where $k = \rank(\Pi)$, and the vectors $\{\ket{\pi_i}\}_{i\in[k]}$ span the subspace onto which $\Pi$ projects.
Without loss of generality, $\ket{\pi_1}$ is the normalized projection of $\ket\psi$ onto that subspace. 
That is, $\ket{\pi_1} = \frac{\Pi \ket\psi}{\sqrt{\bra\psi\Pi\ket\psi}}$. 
Note that $\lvert\braket{\psi\vert\pi_1}\rvert^2 = \bra\psi \Pi \ket\psi \ge p_2$, and thus $\lvert\braket{\psi^\perp\vert\pi_1}\rvert^2 \le 1 - p_2$ for any $\ket{\psi^\perp}$ orthogonal to $\ket\psi$.

Now let $\ket\phi$ be any pure state, and set $\alpha$ and $\beta$
such that $\ket\phi = \alpha \ket\psi + \beta\ket{\psi^\perp}$ where $\ket{\psi^\perp}$ is some state orthogonal to $\ket\psi$. Then
\begin{align*}
    \mathsf{tr}(\Pi \ket\phi\bra\phi) 
    &= \bra\phi \Pi \ket\phi  \\
    &= \braket{\phi | \pi_1}\braket{\pi_1 |\phi} + \bra\phi \left( \sum_{i=2}^{k} \ket{\pi_i}\bra{\pi_i} \right) \ket\phi \\
    &\ge \braket{\phi | \pi_1}\braket{\pi_1 |\phi} \\
    &= \left(\alpha^* \braket{\psi | \pi_1} + \beta^* \braket{\psi^\perp | \pi_1}\right)\left(\alpha \braket{\pi_1|\psi} + \beta \braket{\pi_1|{\psi^\perp}}\right)  \\
    &= |\alpha|^2 \left|\braket{\psi | \pi_1}\right|^2 + |\beta|^2 \lvert\braket{\psi^\perp | \pi_1}\rvert^2 + \alpha^*\beta \braket{\psi | \pi_1}\braket{\pi_1|{\psi^\perp}} + \alpha\beta^* \braket{\psi^\perp | \pi_1}\braket{\pi_1|\psi} \\
    &\ge |\alpha|^2 \left|\braket{\psi | \pi_1}\right|^2 + |\beta|^2 \lvert\braket{\psi^\perp | \pi_1}\rvert^2 - 2|\alpha||\beta| \cdot \lvert\braket{\psi | \pi_1}\rvert \cdot \lvert\braket{{\psi^\perp}|\pi_1}\rvert \\
    &\ge |\alpha|^2 \left|\braket{\psi | \pi_1}\right|^2 - 2|\beta| \cdot \lvert\braket{{\psi^\perp}|\pi_1}\rvert \\
    &\ge |\alpha|^2 \; p_2 - 2|\beta|  \sqrt{1 - p_2}
\end{align*}

Now, let $\{\ket{\phi_i}\}_i$ be an eigenbasis for $\rho$ with corresponding eigenvalues $q_i$. Then $\rho = \sum_i q_i \ket{\phi_i}\bra{\phi_i}$. For each $i$, set $\alpha_i$ and $\beta_i$ such that $\ket{\phi_i} = \alpha_i \ket\psi + \beta_i \ket{\psi^\perp_i}$ for some $\ket{\psi^\perp_i}$ orthogonal to $\ket\psi$. We have

\begin{align*}
    \sum_i q_i |\alpha_i|^2 
    = \sum_i q_i |\langle\psi|\phi_i\rangle|^2 
    = \bra\psi\rho\ket\psi 
    \ge p_1 
\end{align*}
\begin{align*}
    \left(\sum_i q_i |\beta_i|\right)^2
    \le \sum_i q_i |\beta_i|^2
    = \sum_i q_i (1 - |\alpha_i|^2)
    =  1 - \sum_i q_i |\alpha_i|^2
    \le 1- p_1
\end{align*}

Putting this together, we get

\begin{align*}
    \mathsf{tr}(\Pi \rho) 
    &= \sum_i q_i \mathsf{tr}(\Pi \ket{\phi_i}\bra{\phi_i}) \\ 
    &\ge \sum_i q_i \left( |\alpha_i|^2 \; p_2 - 2|\beta_i|  \sqrt{1 - p_2} \right) \\
    &= \left(\sum_i q_i |\alpha_i|^2 \right) p_2 - 2 \left(\sum_i q_i |\beta_i| \right) \sqrt{1 - p_2} \\
    &\ge p_1 p_2 - 2 \sqrt{(1-p_1)(1-p_2)}
\end{align*}
\end{proof}

\section{Robustness of $\clonableQMA$ to Definitional Variations}
\label{appendix:clonableQMA-robustness}

We give some variations on the definition of $\clonableQMA$ (Definition~\ref{def:clonableQMA}) and show that these definitional variations do not change the class (up to a polynomial loss in the cloning fidelity).

\begin{proposition}
The class $\clonableQMA$ does not change if the cloner in Definition~\ref{def:clonableQMA} is allowed to take the description of the problem instance, $x$, as an extra input.
\end{proposition}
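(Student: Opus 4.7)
The proposition amounts to two inclusions. The easy direction is that the standard $\clonableQMA$ is contained in the variant: a cloner $C(\ket{\psi})$ that ignores $x$ trivially realizes an $x$-dependent cloner. For the other direction, I would take a decision problem $\mathcal{L}$ in the variant class, with verifier $V$, cloner $C(x,\cdot)$, witness length $p(|x|)$, completeness $c$, cloning fidelity $f$, and soundness $s$, and show it lies in the standard class. The plan is to bundle $x$ into the witness so that the cloner can read $x$ off without being told what it is.

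Concretely, I would define the new witness for instance $x$ by
\[
\ket{\widetilde\psi_x} \;:=\; \ket{x}\ket{\psi_x},
\]
of length $p'(n) := n + p(n)$, padding $\ket{\psi_x}$ with trailing zeros if necessary to make $p'$ strictly monotone, so that $|x|$ is uniquely recoverable from the total input length. The new verifier $V'(x,\ket{\widetilde\phi})$ performs a projective measurement on the first $|x|$ qubits, compares against the classical input $x$, rejects on disagreement, and otherwise forwards the remaining $p(|x|)$ qubits to $V(x,\cdot)$. The new cloner $C'(\ket{\widetilde\phi})$ determines $|x|$ from the input length, measures the first $|x|$ qubits in the computational basis to obtain a string $x'$, runs $C(x',\cdot)$ on the remaining register, and prepends a fresh $\ket{x'}$ register to each of the two output copies.

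The verification guarantees follow routinely. For completeness, $\ket{\widetilde\psi_x}$ passes the prefix check with probability $1$ and then $V$ accepts with probability at least $c$. For soundness against an arbitrary quantum state $\ket{\widetilde\phi^*}$ at a $\NO$ instance, the prefix measurement either rejects outright or, conditioned on returning $x$, collapses the witness to $\ket x \otimes \ket{\psi^*}$ for some residual $\ket{\psi^*}$, after which the soundness of $V$ bounds the acceptance probability by $s$.

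The cloning fidelity is the main thing to check carefully, and it is the only point where the argument is not entirely mechanical. The subtlety is that $C'$ is a single circuit family indexed by witness length rather than by the instance $x$, so it must extract $x$ from the state itself. Because on a valid witness $\ket{\widetilde\psi_x}$ the prefix register is already a computational-basis state, the classical measurement performed by $C'$ is the identity on that register; the remaining $p(|x|)$ qubits are undisturbed and fed into $C(x,\cdot)$, which produces two copies of $\ket{\psi_x}$ with fidelity at least $f$. Reattaching $\ket x$ to each copy recovers $\ket{\widetilde\psi_x}^{\otimes 2}$ with the same fidelity $f$, so no fidelity is lost and no polynomial amplification is needed. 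All remaining items (polynomial witness length, polynomial verifier/cloner runtime, and the fact that $C'$ needs only $|x|$, not $x$, to select the right subcircuit) are immediate.
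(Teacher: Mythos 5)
Your proof is correct and takes essentially the same approach as the paper: both directions match, and the key idea of augmenting the witness to $\ket{x}\otimes\ket{\psi}$ so that the cloner can read off the instance is exactly the paper's construction. You simply spell out the prefix-check, the length-monotonicity bookkeeping, and the observation that measuring a computational-basis prefix is non-disturbing, all of which the paper leaves implicit.
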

\begin{proof}
Here, the cloner receives the classical description of the instance, $x$, as well as the quantum witness, $\ket\psi$, that it must clone. One direction is clear, since the cloner can always disregard the instance description if it does not need it. 
In the other direction, if the witness does not already include the instance description, the witness can always be augmented to include the it as part of the witness. That is, let the new witness now be $\ket{\psi}\otimes\ket{x}$.
This clearly does not affect the ability of the verifier to verify the witness, but at the same time, it allows the cloner to implicitly receive the instance description as input.
\end{proof}

Note that a consequence of this is that a collection of valid witnesses for a verifier need not even be anywhere close to orthogonal to one another for the problem it verifies to be contained in $\clonableQMA$. 
Rather, there must be \emph{some} verifier and \emph{some} collection of witnesses for that verifier such that this collection is clonable.

As a concrete example, a $\clonableQMA$ problem $\mathcal{L}$, can have a verifier which, when $n=2$, accepts one of 4 different witnesses from the set $S = \{\ket0, \ket1, \frac{1}{\sqrt{2}}\left(\ket0 + \ket 1\right), \frac{1}{\sqrt{2}}\left(\ket0 - \ket 1\right)\}$, depending on the instance, $x \in \bits^2$. The set $S$ is clearly not clonable with very high fidelity. 
However, this is not a problem, because the verifier could instead accept a witnesses from the set $S' = \{{\ket0 \otimes \ket{00}}, {\ket1 \otimes \ket{01}}, {\frac{1}{\sqrt{2}}\left(\ket0 + \ket 1\right) \otimes \ket{10}}, {\frac{1}{\sqrt{2}}\left(\ket0 - \ket 1\right) \otimes \ket{11}}\}$, which serves the same purpose, is orthogonal, and can be cloned efficiently with fidelity $1$.

This same kind of transformation could in general be applied to the witnesses of any problem in $\QMA$ to make the set of witnesses an orthogonal set (and therefore an information-theoretically clonable one), which means that $\QMA$ itself could be defined in terms of orthogonal or (inefficiently) clonable witnesses. Thus, the question of whether $\QMA$ is contained in $\clonableQMA$ is therefore \emph{not} a question of whether problems in $\QMA$ can have witnesses that are clonable (they can, as we just showed), but rather whether they can have witnesses that are clonable \emph{efficiently}.

\paragraph{Combined Verifier-Cloner}
We give a second variation on the definition of $\clonableQMA$, in which the verifier and cloner are a combined procedure acting on a single witness. Note that in this case, the reduction comes with a loss in the parameters.%
\footnote{
In one of the directions, it incurs a loss in the cloning fidelity of $f \to c^2 f - 2 \sqrt{(1-c^2)(1-f)}$. 
We can also view this loss from the perspective of the error, taking $c = 1 - \varepsilon_c$ and $f = 1 - \varepsilon_f$, which gives us a resulting 
cloning fidelity error with a loss of at most $\varepsilon_f \to (2\varepsilon_c + \varepsilon_f) + (2 \sqrt{2})\sqrt{\varepsilon_c \phantom{\vert} \varepsilon_f}$. This shows that when the cloning fidelity error starts out as negligible, then the resulting error is negligible as well.
}

\begin{definition}[$\clonableQMA$ with a combined verifier-cloner]\label{def:clonableQMA-var2}
A decision problem $\mathcal{L} = (\mathcal{L}_{\YES}, \mathcal{L}_{\NO})$ is in $\clonableQMA'(c, f, s)$ if there exists a verifier-cloner $V_C$, a polynomial time quantum Turing machine outputting both an accept/reject bit as well as a quantum state on two registers, and a polynomial $p$, such that
\begin{itemize}
    \begin{samepage}
    \item \textbf{Completeness and Cloning Fidelity:} if $x \in \mathcal{L}_{\YES}$, then there exists a quantum witness $\ket \psi$ on $p(|x|)$ qubits such that on input $\ket x \ket \psi$, $V_C$ 
    \begin{itemize}
        \item accepts with probability at least $c$, and
        \item succeeds at producing two independent copies of $\ket\psi$ with fidelity at least $f$. 
    \end{itemize}
    \end{samepage}
    \item \textbf{Soundness:} if $x \in \mathcal{L}_{\NO}$, then for all quantum states $\ket{\psi^*}$ on $p(|x|)$ qubits, $V_C$ accepts on input $\ket{x}\ket{\psi^*}$ with probability at most $s$.
\end{itemize}
\end{definition}

\begin{proposition}
$$\clonableQMA'(c, f, s) \subseteq \clonableQMA(c, f, s) \subseteq \clonableQMA'(c, \; c^2 f - 2 \sqrt{(1-c^2)(1-f)}, \; s)$$
\end{proposition}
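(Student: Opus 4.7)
The proposition contains two inclusions, which I handle in order. The first is essentially definitional, while the second requires constructing a combined verifier-cloner from the separate components with parameter loss tracked via Lemma~\ref{lemma:composition}.

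\textbf{First inclusion} ($\clonableQMA'(c,f,s) \subseteq \clonableQMA(c,f,s)$). Given a combined verifier-cloner $V_C$, define $V$ to be the algorithm that runs $V_C$ and outputs only its classical accept/reject bit (tracing out the quantum output), and $C$ to be the algorithm that runs $V_C$ and outputs only its two-register quantum state (discarding the classical bit). The same witness $\ket\psi$ that succeeds for $V_C$ serves both $V$ and $C$; completeness, cloning fidelity, and soundness each transfer directly, since tracing out is a valid quantum post-processing of $V_C$'s output. This direction incurs no parameter loss.

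\textbf{Second inclusion} ($\clonableQMA(c,f,s) \subseteq \clonableQMA'(c, c^2 f - 2\sqrt{(1-c^2)(1-f)}, s)$). Given separate efficient $V$ and $C$, treat $V$ without loss of generality as a binary projective measurement with accept-projector $\Pi_V$, and define $V_C$ on input $\ket{x}\ket\psi$ as follows: (i) apply $C$ to $\ket\psi$ to obtain a state $\rho_{\text{two}}$ on two registers $B_1, B_2$ satisfying $\bra{\psi\psi}\rho_{\text{two}}\ket{\psi\psi}\geq f$; (ii) coherently apply $V$ to each of $B_1$ and $B_2$ using fresh ancillas and compute the AND of the two verification bits into a single ancilla qubit; (iii) measure that ancilla to produce the accept/reject bit, and output it together with the post-measurement two registers. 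Implementing the coherent AND in this manner is equivalent to performing the binary projective measurement $\{\Pi_V \otimes \Pi_V,\; I - \Pi_V \otimes \Pi_V\}$ directly on $\rho_{\text{two}}$.

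\textbf{Parameter analysis.} Invoke Lemma~\ref{lemma:composition} with $\ket{\psi_{\text{big}}} = \ket\psi\otimes\ket\psi$, $\rho = \rho_{\text{two}}$, and $\Pi = \Pi_V \otimes \Pi_V$: we have $p_1 = \bra{\psi\psi}\rho_{\text{two}}\ket{\psi\psi}\geq f$ from the cloning fidelity of $C$, and $p_2 = \bra{\psi\psi}(\Pi_V \otimes \Pi_V)\ket{\psi\psi} = (\bra\psi\Pi_V\ket\psi)^2 \geq c^2$ from the completeness of $V$. The lemma yields $\tr((\Pi_V \otimes \Pi_V)\rho_{\text{two}})\geq c^2 f - 2\sqrt{(1-c^2)(1-f)}$, which controls both the acceptance probability of $V_C$ and, via a pinching calculation applied to the sub-normalized post-measurement state $(\Pi_V \otimes \Pi_V)\rho_{\text{two}}(\Pi_V \otimes \Pi_V)$, the fidelity of the output two-register state with $\ket\psi\ket\psi$. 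Soundness is immediate: for any $\NO$-instance input producing a state $\sigma$ at step (ii), we have $\tr((\Pi_V \otimes \Pi_V)\sigma)\leq\tr((\Pi_V \otimes I)\sigma)=\tr(\Pi_V\sigma_1)\leq s$, where $\sigma_1$ is the first-register marginal, invoking the soundness of $V$.

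\textbf{Main obstacle.} The delicate part is reconciling the acceptance probability with the cloning fidelity in a single parameter bound, since the coherent AND-measurement simultaneously produces the classical output bit and pinches the two-register state. The key observation is that both quantities can be expressed in terms of the same trace $\tr((\Pi_V \otimes \Pi_V)\rho_{\text{two}})$ that appears in the lemma, so a single application of Lemma~\ref{lemma:composition} handles both uniformly and produces the claimed bound $c^2 f - 2\sqrt{(1-c^2)(1-f)}$; all other steps are direct unfolding of the definitions.
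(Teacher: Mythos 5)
Your first inclusion is fine and matches the paper. The second inclusion, however, has a genuine gap rooted in the order of operations. You clone first and then verify \emph{both} output registers, taking the AND as the accept bit. The quantity you bound via Lemma~\ref{lemma:composition}, namely $\tr\bigl((\Pi_V\otimes\Pi_V)\rho_{\text{two}}\bigr)\ge c^2 f - 2\sqrt{(1-c^2)(1-f)}$, is then your construction's \emph{acceptance probability} on a $\YES$ instance. But the target class is $\clonableQMA'(c,\;c^2f-2\sqrt{(1-c^2)(1-f)},\;s)$, whose \emph{first} parameter --- the completeness --- must be $c$. Since $c^2 f - 2\sqrt{(1-c^2)(1-f)}\le c^2 f\le c$, your verifier-cloner only certifies a degraded completeness, so it does not establish the stated containment. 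Even on the ideal input the AND of two verifications already accepts with probability only about $c^2$.

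The paper avoids this by running the steps in the opposite order: it first performs the verification \emph{coherently} on the single witness, measures the output bit (accepting with probability exactly $\ge c$, so completeness is preserved), uncomputes to obtain a pinched state $\tilde\rho$ with $\bra\psi\tilde\rho\ket\psi\ge(\bra\psi\Pi_v\ket\psi)^2\ge c^2$, and only then applies the cloner; Lemma~\ref{lemma:composition} with $p_1=c^2$ and $p_2=f$ is charged entirely to the \emph{cloning fidelity}, which is where the proposition places the loss. Note also that the paper's key identity $\bra\psi\Pi_v\rho\Pi_v\ket\psi=(\bra\psi\Pi_v\ket\psi)^2$ relies on the pre-measurement state being the pure witness; in your construction the state being pinched is the mixed cloner output $\rho_{\text{two}}$, so the ``pinching calculation'' you invoke for the output fidelity does not reduce to this identity and would in any case yield a worse bound (on the order of $c^4 f$ minus cross terms). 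To repair the argument you should verify before cloning, on one register, with uncomputation.
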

\begin{proof}
We show both containments as follows:

\paragraph{$\clonableQMA'(c, f, s) \subseteq \clonableQMA(c, f, s)$:}
In this direction, we have a single polynomial time quantum Turing machine, $V_C$, that when given a single copy of a valid witness, $\ket{\psi}$, it both verifies the witness with probability at least $c$ and outputs a quantum state that has fidelity $f$ with two copies of $\ket{\psi}$. When the instance is a $\NO$ instance no purported witness causes it to accept with probability greater than $s$. If we have such a one-shot verifier-cloner, we can use it separately as either the verifier or the cloner that is required of the standard definition of $\clonableQMA$ with the same parameters.

\paragraph{$\clonableQMA(c, f, s) \subseteq \clonableQMA'(c, \; c^2 f - 2 \sqrt{(1-c^2)(1-f)}, \; s)$:}
In this direction, we start out with a separate verifier $V$ and cloner $C$. We have that for every $\YES$ instance, $x$, there is a witness $\ket{\psi}$ that $V$ accepts with probability $c$ and $C$ clones with fidelity $f$. We also have that for every $\NO$ instance, there is no purported witness that would cause $V$ to accept with probability greater than $s$.

The idea is that we first run the verifier in superposition (that is, in a unitary fashion, delaying any measurements) on the witness $\ket{\psi}$, measure and output its output register, and then run the verifier in reverse to uncompute and recover a mixed state $\tilde \rho$ that is close to the original witness. We then run the cloner on the resulting state $\tilde\rho$, and output the result.

The combined process of computing, measuring, and uncomputing the verifier can be written as a binary projective measurement on the witness, with measurement operators $\Pi_v$ and $\mathbb I - \Pi_v$, for accept and reject, respectively.

For a $\YES$ instance, $x$, let $\ket{\psi}$ be the corresponding witness, and let $\rho = \ket\psi\bra\psi$. Since the verifier accepts on witness $\ket\psi$ with probability $c$, we have that $\tr(\Pi_v \rho) = \bra\psi \Pi_v \ket\psi \ge c$. 

The state after performing the measurement is $\tilde\rho = \Pi_v \rho \Pi_v + (\mathbb I - \Pi_v)\rho(\mathbb I - \Pi_v)$. Now the fidelity of $\tilde\rho$ with the original witness is
\begin{align*}
    \bra\psi \tilde\rho \ket\psi
    &= \bra\psi \Pi_v \rho \Pi_v \ket\psi + \bra\psi(\mathbb I - \Pi_v)\rho(\mathbb I - \Pi_v)\ket\psi \\
    &\ge \bra\psi \Pi_v \rho \Pi_v \ket\psi \\
    &= \bra\psi \Pi_v \ket\psi\bra\psi \Pi_v \ket\psi \\
    &= \left(\bra\psi \Pi_v \ket\psi\right)^2 \\
    &\ge c^2
\end{align*}

The process of applying the cloner and measuring if the result is indeed $\ket\psi\otimes\ket\psi$ can likewise be written as a binary projective measurement on the witness, with measurement operator $\Pi_c$ and $\mathbb I - \Pi_c$, for success and failure, respectively.
Since the cloner succeeds on witness $\ket\psi$ with probability $f$,
we have that $\tr(\Pi_c \rho) = \bra\psi \Pi_c \ket\psi \ge f$.

Now, since $\bra\psi \tilde\rho \ket\psi \ge c^2$ and $\bra\psi \Pi_c \ket\psi \ge f$, we have by Lemma~\ref{lemma:composition} that $\tr(\Pi_c \tilde\rho) \ge c^2 f - 2 \sqrt{(1-c^2)(1-f)}$. In other words, the verification accepts with probability at least $c$, and the cloning succeeds with cloning fidelity at least $c^2 f - 2 \sqrt{(1-c^2)(1-f)}$.

On the other hand, if $x$ is a $\NO$ instance, we have that for every purported witness $\ket{\psi^*}$, the original verifier $V$ accepts with probability at most $s$. The probability of verification accepting is not affected by anything that happens after it, so it still accepts with probability at most $s$.

We now have a combined verifier-cloner process that runs in polynomial time, and which achieves completeness $c$, soundness $s$, and cloning fidelity $c^2 f - 2 \sqrt{(1-c^2)(1-f)}$.
\end{proof}

\end{document}